\documentclass[12pt]{article}
\usepackage{bm,amsmath,amsthm,amssymb,graphicx,paralist,setspace}
\usepackage{color}
\usepackage[margin=1in]{geometry}
\usepackage{natbib}
\usepackage[colorlinks]{hyperref}
\usepackage{multicol}
\usepackage{enumitem}
\usepackage{setspace}
\usepackage{microtype}
\usepackage{booktabs}
\usepackage{multibib}
\newcites{supp}{Supplementary references}

\let\originalleft\left
\let\originalright\right
\renewcommand{\left}{\mathopen{}\mathclose\bgroup\originalleft}
\renewcommand{\right}{\aftergroup\egroup\originalright}

\def\bas#1\eas{\begin{align*}#1\end{align*}}
\def\basn#1\easn{\begin{align}#1\end{align}}

\def\highlightchangessince{99}
\definecolor{changecolor}{RGB}{85,85,255}
\def\bct#1#2\ect#3{\ifnum#1>\highlightchangessince\textcolor{changecolor}{\ignorespaces#2\unskip}\else\ignorespaces#2\unskip\fi}
\newcounter{changeversion}\newcounter{outerchangeversion}
\def\bc#1{\ifnum\thechangeversion>0\setcounter{outerchangeversion}{\thechangeversion}\setcounter{changeversion}{#1}\else\setcounter{changeversion}{#1}\fi\ifnum#1>\highlightchangessince\color{changecolor}\fi\ignorespaces}
\def\ec#1{\unskip\ifnum\theouterchangeversion>0\setcounter{changeversion}{\theouterchangeversion}\setcounter{outerchangeversion}{0}\else\setcounter{changeversion}{0}\fi\ifnum\thechangeversion>\highlightchangessince\else\normalcolor\fi}

\theoremstyle{plain}
\newtheorem{thm}{Theorem}
\newtheorem{lem}[thm]{Lemma}

\theoremstyle{definition}

\newtheorem*{rem}{Remark}

\widowpenalty10000
\clubpenalty10000



\renewcommand{\epsilon}{\varepsilon}

\newcommand{\reals}{\mathbb{R}}

\newcommand{\given}{\;\middle\vert\;}

\DeclareMathOperator{\inversegamma}{InverseGamma}
\DeclareMathOperator{\inversegaussian}{InverseGaussian}

\DeclareMathOperator{\tv}{TV}

\DeclareMathOperator{\tr}{tr}
\DeclareMathOperator{\diag}{Diag}

\DeclareMathOperator{\ind}{ind\,}

\DeclareMathOperator{\normal}{N}
\def\T{{ \mathrm{\scriptscriptstyle T} }}

\begin{document}

\title{Scalable Bayesian shrinkage and uncertainty quantification for high-dimensional 
regression}
\author{Bala Rajaratnam\thanks{Authors supported in part by the US National Science 
Foundation under grants DMS-CMG-1025465, AGS-1003823, DMS-1106642, and
DMS-CAREER-1352656, and by the US~Air Force Office of Scientific Research grant 
award FA9550-13-1-0043.}\\Department of Statistics, University of California, Davis\\ 
\\Doug Sparks\footnotemark[1]\hspace{.2cm}\\Department of Statistics, Stanford 
University\\ \\Kshitij Khare\thanks{Author supported in part by NSF grant 
DMS-15-11945.}\\Department of Statistics, University of Florida\\ 
\\Liyuan Zhang\\Department of Statistics, University of Florida}

\maketitle

\vspace*{-1.6em}

%
%
\begin{abstract}
Bayesian shrinkage methods have generated a lot of recent interest as tools for 
high-dimensional regression and model selection. These methods naturally facilitate 
tractable uncertainty quantification and incorporation of prior information. This benefit 
has led to extensive use of the Bayesian shrinkage methods across diverse 
applications. A common feature of these models, including the Bayesian lasso, 
global-local shrinkage priors, spike-and-slab priors is that the corresponding priors on 
the regression coefficients can be expressed as scale mixture of normals. 
While the three-step Gibbs sampler used to sample from the often intractable 
associated posterior density has been shown to be geometrically ergodic for several 
of these models \citep{khare2013, Pal:Khare:2014}, it has been demonstrated recently 
that convergence of this sampler can still be quite slow in modern high-dimensional 
settings despite this apparent theoretical safeguard. In order to address this 
challenge, we propose a new method to draw from the same posterior via a tractable 
two-step blocked Gibbs sampler. We demonstrate that our proposed two-step blocked 
sampler exhibits vastly superior convergence behavior compared to the original three-
step sampler in high-dimensional regimes on both real and simulated data. We also 
provide a detailed theoretical underpinning to the new method in the context of the 
Bayesian lasso. First, we prove that the proposed two-step sampler is geometrically 
ergodic, and derive explicit upper bounds for the (geometric) rate of convergence. 
Furthermore, we demonstrate theoretically that while the original Bayesian lasso chain 
is not Hilbert--Schmidt, the proposed chain is trace class (and hence Hilbert--Schmidt). 
The trace class property has useful theoretical and practical implications. It implies 
that the corresponding Markov operator is compact, and its (countably many) 
eigenvalues are summable. It also facilitates a rigorous comparison of the two-step 
blocked chain with ``sandwich" algorithms which aim to improve performance of the 
two-step chain by inserting an inexpensive extra step. 

\medskip
\noindent\emph{Keywords:} Bayesian shrinkage; Gibbs sampler; Hilbert--Schmidt 
operator; Markov chain; Scale mixture of normals.
\end{abstract}

\onehalfspacing

\section{Introduction}

\noindent
In modern statistics, high-dimensional datasets, where the number of 
covariates/features is more than the number of samples, are very common. Penalized 
likelihood methods such as the lasso \citep{tibshirani1996} and its variants 
simultaneously inducing shrinkage and sparsity in the estimation of regression 
coefficients. These goals are especially desirable when the number of coefficients to 
be estimated is greater than the sample size. One drawback of these penalized 
method is that it is not immediately obvious how to provide meaningful uncertainty 
quantification for the coefficient estimates. An alternative solution is to pursue a 
Bayesian approach by using shrinkage priors - priors that shrink the coefficients 
towards zero, by either having a discrete point mass component at zero (e.g., 
spike and slab priors \cite{george1993}) or a continuous density with a peak 
at zero. The uncertainty of the resulting estimates can be quantified in a natural way 
through the usual Bayesian framework (e.g., credible intervals). 

In fact, one can interpret the lasso objective function (or some monotone 
transformation thereof) as the posterior under a certain Bayesian model with an 
independent Laplace prior on the coefficients, as was noted immediately by 
\citeauthor{tibshirani1996}, and developed further by Park and Casella \cite{park2008} 
into the Bayesian lasso approach. Following the Bayesian lasso, a rich and interesting 
class of ``continuous global-local shrinkage priors'' has been developed in recent 
years. 
\setcitestyle{notesep={; }}%
\citep[see, for example,][and the references therein]
{armagan2013s,carvalho2010,griffin2010}.%
\setcitestyle{notesep={, }}%
The priors are typically scale mixtures of normals, and have a peak at zero to promote 
shrinkage. 

However, for most of these methods, the resulting intractable posterior is not 
adequately tractable to permit the closed-form evaluation of integrals. 
To address this problem, the respective authors have typically proposed a three-block 
Gibbs sampler based on a hierarchical formulation of the prior structure. This 
structure, which is essentially a type of data augmentation, leads to a tractable three-
step Gibbs sampler (one step for $\bm{\beta}$, $\sigma^2$ and the augmented 
parameter block each) that can be used to draw from the desired posterior. These 
posterior samples can then be used to construct credible intervals or other quantities 
of interest. 

Bayesian shrinkage methods (including the Bayesian lasso) have been extensively 
used in applications as diverse as genetics, finance, ecology, image processing, 
neuroscience, and clinical trials, receiving over 1,000 citations in total 
\citep[see, for example,][]
{yi2008,deloscampos2009,demiguel2009,jacquemin2014,xing2012,mishchenko2012,
gu2013, PongWong:2014, PongWong:Woolliams:2014} example). Nevertheless, the 
three-step Gibbs sampling schemes needed to implement these methods in practice 
can require considerable computational resources. Such computational concerns have 
primarily limited the application of these approaches to problems of only moderately 
high dimension (e.g., in the tens or hundreds of variables). This is often a serious 
shortcoming since for many modern applications, the number of variables $p$ is 
in the thousands if not more. Thus, methods to analyze or improve the efficiency of the 
Bayesian lasso algorithm, in terms of either computational complexity or convergence 
properties, are an important topic of research for modern high-dimensional settings. 

\citet{khare2013} proved that the three-step Gibbs sampler of \citeauthor{park2008} 
(which works for arbitrary $n$ and~$p$) is geometrically ergodic for arbitrary values of 
$n$ and~$p$ and they provide a quantitative upper bound for the geometric rate 
constant. Geometric ergodicity of the three-step samplers for the 
Normal-Gamma and Dirichlet-Laplace shrinkage models (\cite{griffin2010} and 
\cite{bhattacharya2015}, respectively) was established in \cite{Pal:Khare:2014}. 
However, it has been demonstrated that the geometric rate constant does indeed tend 
to~$1$ if $p/n\to\infty$ \citep{rajaratnam2015}. Thus, despite the apparent theoretical 
safeguard of geometric ergodicity, these three-step Gibbs samplers can take 
arbitrarily long to converge (to within a given total variation distance of the true 
posterior) if $p/n$ is large enough. This fact is problematic since the so-called 
``small~$n$, large~$p$'' setting is precisely where the use of the lasso and other 
regularized regression methods is most beneficial and hence most commonly 
espoused. 

Since the convergence properties of the original three-step Bayesian lasso Gibbs 
sampler deteriorate in high-dimensional regimes, it may be asked whether there exist 
alternative schemes for sampling from the same posterior that maintain a reasonably 
fast (i.e., small) geometric convergence rate when $p$ is large compared to~$n$.
Two commonly employed approaches to constructing such alternative MCMC 
schemes within the Gibbs sampling context are known as \emph{collapsing} and 
\emph{blocking}.  In a collapsed Gibbs sampler, one or more parameters are 
integrated out of the joint posterior, and a Gibbs sampler is constructed on the 
posterior of the remaining parameters.  Although a collapsed Gibbs sampler 
converges at least as fast as its uncollapsed counterpart \citep{liu1994w}, the resulting 
distributions may not be adequately tractable to permit the implementation of such a 
scheme in practice. In a blocked Gibbs sampler (also called a grouped Gibbs 
sampler), multiple parameters are combined and sampled simultaneously in a single 
step of the cycle that forms each iteration. It is generally understood that blocking 
usually improves the convergence rate with a careful choice of which parameters to 
group into the same step \citep{liu1994w}. 

In this paper, we propose a two-step/two-block Gibbs sampler in which the regression 
coefficients $\bm\beta$ and the residual variance~$\sigma^2$ are drawn in the same 
step of the Gibbs sampling cycle. This method turns out to be just as tractable as the 
original three-step/three-block Gibbs samplers. Indeed, the distributional forms of our 
proposed sampler coincide with the original chains, differing only in the shape and 
scale of the inverse gamma distribution from which~$\sigma^2$ is drawn.
Also, unlike the original three-step/three-block Gibbs chain, the two-step/two-block 
chain is reversible. We demonstrate empirically that in regimes where $p$ is much 
larger than~$n$, the convergence rate of the proposed two-block Gibbs samplers 
are vastly superior to those of the original three-block schemes. 

Next, we undertake a rigorous investigation into the theoretical properties of the 
proposed two-block chain in the context of the Bayesian lasso. We first establish 
geometric ergodicity for the blocked chain, and obtain an explicit upper bound for the 
rate of convergence. Geometric ergodicity (along with other mild regularity conditions) 
implies the existence of a Markov chain CLT, which allows users to provide standard 
errors for Markov chain based estimates of posterior quantities. 

We also prove that the (non--self-adjoint) Markov operator associated with the original 
Gibbs sampling Markov chain is not Hilbert--Schmidt (eigenvalues of the absolute 
value of this operator are not square-summable) for all $n$ and~$p$. On the other 
hand, we prove that the (positive, self-adjoint) Markov operator associated with the 
proposed blocked chain is trace-class (eigenvalues are in fact summable, and hence
square-summable). Note that all the aforementioned eigenvalues are less than 
$1$ in absolute value. These results indicate that the proposed Markov chain is more 
efficient than the original three-step Bayesian lasso chain. The blocked Markov 
chain can also be regarded as a Data Augmentation algorithm. Sandwich algorithms, 
which aim to improve the performance of DA algorithms by inserting an inexpensive 
extra step between the two steps of the DA algorithm, have gained a lot of attention in 
recent years (see 
\cite{Liu:Wu:1999, vanDyk:Meng:1999, Hobert:Marchev:2008, Khare:Hobert:2011} 
and the references therein). The trace class property for the blocked chain, along with 
results from \cite{Khare:Hobert:2011}, guarantees that a large variety of 
sandwich Markov chains are themselves trace class, and their eigenvalues are 
dominated by the corresponding eigenvalues of the blocked chain (with at least one 
strict domination). Thus, our trace class result provides theoretical support for the use 
of sandwich Markov chains to further improve speed and efficiency for sampling 
from the desired Bayesian lasso posterior distribution. 

It is well-understood (see for example \cite{jones2001}) that proving geometric 
ergodicity and trace class/Hilbert Schmidt properties for practical Markov chains such 
as these can be rather tricky and {\it a matter of art}, where each Markov chain 
requires a genuinely unique analysis based on its specific structure. Hence, a 
theoretical study of the two-step and three-step Markov chains (similar to the study 
undertaken in this paper for the Bayesian lasso) is a big and challenging undertaking, 
and is the task of ongoing and future research. 

The remainder of the paper is organized as follows. In Section~\ref{sec:original}, we 
revisit the original three-step chain. We propose the two-step chain in 
Section~\ref{sec:blocked}. Section~\ref{sec:numerical} provides a 
numerical comparison of the original and the proposed versions for both Bayesian 
lasso and the spike and slab priors (as representatives from the classes of continuous 
and discrete shrinkage priors) on simulated data, while Section~\ref{sec:applications} 
does the same for real data. We then focus on the theoretical properties 
of the original and proposed methods in the context of the Bayesian lasso model, and 
undertake a rigorous investigation in Section~\ref{sec:ge}.

\section{Bayesian shrinkage for regression} \label{sec:original}

Consider the model
\basn
\bm Y\mid\bm\beta,\sigma^2\sim \normal_n\left(\mu\bm1_n+\bm X\bm\beta,\,\sigma^2\bm I_n\right),
\label{model}
\easn
where $\bm Y\in\reals^n$ is a response vector, $\bm X$ is a known $n\times p$ 
design matrix of standardized covariates, $\bm\beta\in\reals^p$ is an unknown vector 
of regression coefficients, $\sigma^2>0$ is an unknown residual variance, and $\mu\in
\reals$ is an unknown intercept. As mentioned previously, in modern applications, 
often the number of covariates $p$ is much larger than the sample size $n$. To obtain 
meaningful estimates and identify significant covariates in this challenging situation, 
a common and popular approach is to shrink the regression coefficients towards zero.  
In the Bayesian paradigm, this can be achieved by using spike and slab priors, which  
are mixtures of a normal density with a spike at zero (low variance) and another 
normal density which is flat near zero (high variance), see 
\cite{mitchell1988, george1993}. A popular and more computationally efficient 
alternative is to use ``continuous" shrinkage prior densities that have a peak at zero, 
and tails approaching zero at an appropriate rate, see \cite{park2008, carvalho2010, 
polson2010, kyung2010, griffin2010, ADL:2013, polson2013} and the references 
therein. 

A common thread that runs through all the above shrinkage priors for high-dimensional 
regression is that they all can be expressed as scale mixtures of normal densities. In 
particular, the prior density for these models can be specified as 
\basn
\bm\beta\mid\sigma^2,\bm\tau\sim\normal_p\left(\bm0_p,\,\sigma^2\bm D_{\bm\tau}\right),
\qquad
\bm\tau\sim\pi(\bm\tau),
\label{mixture-general}
\easn

\noindent
where $\pi(\bm\tau)$ is a prior on~$\bm\tau=(\tau_1,\ldots,\tau_p)$. Suppose further 
that the prior on~$\sigma^2$ and~$\mu$ is once again the improper 
prior $\pi(\sigma^2,\mu)=1/\sigma^2$ and that this prior is independent of the prior on~
$\bm\tau$. After combining this prior structure with the basic regression model 
in~(\ref{model}) and integrating out~$\mu$, the remaining full conditional distributions 
are
\basn
\left.\bm\tau\given\bm\beta,\sigma^2,\bm Y\right.&\sim\pi\left(\bm\tau\given\bm\beta,\sigma^2,\bm Y\right),\notag\\
\left.\sigma^2\given\bm\beta,\bm\tau,\bm Y\right.&\sim\inversegamma\left[(n+p-1)/2,\,\|\tilde{\bm Y}-\bm X\bm\beta\|_2^2/2+
\bm\beta^\T
\bm D_{\bm\tau}^{-1}
\bm\beta
/2
\right],\label{gibbs-general}\\
\left.\bm\beta\given\sigma^2,\bm\tau,\bm Y\right.&\sim\normal_p\left(\bm A_{\bm\tau}
^{-1}\bm X^\T\tilde{\bm Y},\,\sigma^2\bm A_{\bm\tau}^{-1}\right),\notag
\easn
where $\bm A_{\bm\tau}=\bm X^\T\bm X+\bm D_{\bm\tau}^{-1}$ and $D_{\bm{\tau}} 
= Diag(\tau_1, \tau_2, \cdots, \tau_p)$. If it is feasible to draw from $\pi(\bm\tau\mid
\bm\beta,\sigma^2,\bm Y)$, then the three conditionals above may be used to 
construct a useful three-step Gibbs sampler to draw from the joint posterior 
$\pi(\bm\beta,\sigma^2 \mid \bm Y)$. The one-step transition density 
$\hat{k}$ with respect to Lebesgue measure on $\mathbb{R}^p \times \mathbb{R}_+$
given by
\begin{equation} \label{threebayes}
\hat{k}\left[\left(\bm{\beta}_0,\sigma^{2}_0\right), \left(\bm{\beta}_1,\sigma^2_1\right)\right] = \int_{\mathbb{R}_{+}^p} 
\pi \left(\sigma^2_1 \mid \bm{\beta}_1,\bm{\tau},\bm Y\right)\, \pi \left(\bm{\beta}_1\mid \bm{\tau},\sigma^{2}_0,\bm Y
\right)\, \pi \left( \bm{\tau} \mid \bm{\beta}_0,\sigma^{2}_0,\bm Y\right)\; d\bm{\tau}. 
\end{equation}

\noindent
Many commonly used Bayesian methods for high-dimensional regression can be 
characterized in the form of the priors in~(\ref{mixture-general}) and the Gibbs sampler 
in~(\ref{gibbs-general}). We now present a few such examples. 

\subsection{%
Spike-and-Slab Prior
} \label{sec:spikeandslab}

Now suppose instead that the $\tau_j$ are assigned independent discrete priors that each assign probability~$w_j$ to the point~$\kappa_j\zeta_j$ and probability $1-w_j$ to the point~$\zeta_j$, where $\zeta_j>0$ is small, $\kappa_j>0$ is large, and $0<w_j<1$.
This
formulation
is a slight modification of the prior proposed by \citet{george1993} to approximate the spike-and-slab prior of \citet{mitchell1988}.
Then the conditional posterior distribution of~$\bm\tau\mid\bm\beta,\sigma^2,\bm Y$ is a product of independent discrete distributions that each assign probability $\tilde w_j$ to the point~$\kappa_j\zeta_j$ and probability $1-\tilde w_j$ to the point~$\zeta_j$, where
\bas
\tilde w_j=\left\{1+\frac{(1-w_j)\sqrt{\kappa_j}}{w_j}\exp\left[-\frac{\beta_j^2}{2\sigma^2}\left(\frac{\kappa_j-1}{\kappa_j\zeta_j}\right)\right]\right\}^{-1}
\eas
by straightforward
modification
of the results of \citet{george1993}. 

\subsection{The Bayesian lasso} \label{sec:blasso}

\noindent
Suppose that the $\tau_j$'s are assigned independent $Exponential(\lambda^2/2)$ 
priors. It follows that the marginal prior of $\bm{\beta}$ (given $\sigma^2$) assigns 
independent Laplace densities to each component. In fact, the posterior mode in 
this setting is precisely the lasso estimate of \cite{tibshirani1996}. Hence, 
Park and Casella \cite{park2008} refer to this approach as the Bayesian lasso. 
In this setting, the conditional posterior distribution of~$\bm\tau\mid\bm\beta,
\sigma^2,\bm Y$ assigns independent inverse Gaussian distributions to 
each $1/\tau_j$, and hence is easy to sample from. 

Following the three-step Gibbs sampler of \citeauthor{park2008}, two useful
alternative Gibbs samplers for sampling from the Bayesian lasso posterior were 
proposed by \citet{hans2009}. However, both samplers, namely the standard and the 
orthogonal sampler, require a sample size~$n$ at least as large as the number of 
variables~$p$, since they require the design matrix to have full column rank.
The standard sampler can perform poorly when the predictors are highly correlated, 
while the orthogonal sampler becomes very computationally expensive as $p$ grows 
(recall that $p$ still must be less than~$n$). The $n>p$ assumption can thus be
a serious limitation, as it precisely excludes the high-dimensional settings targeted by 
the Bayesian lasso. 

\subsection{Global-local continuous shrinkage priors} \label{sec:globallocal}

\noindent
Suppose we assume that $\tau_j = \eta \lambda_j$ for $j = 1,2, \cdots, p$. 
Here, $\eta$ controls global shrinkage towards the origin while the local scale 
parameters $\lambda_j$ allow component wise deviations in shrinkage. Hence, these 
priors are called as global-local shrinkage priors. A variety of global-local shrinkage 
priors have been proposed in the Bayesian literature, 
see \cite{polson2010, bhattacharya2015} for an exhaustive list.  For almost all of these 
models, sampling from the posterior distribution is performed by using a three-block 
Gibbs sampler, with ${\boldsymbol \beta}$ and $\sigma^2$ being two of the blocks, 
and all the shrinkage parameters (local and global) being the third block. As an 
example, we consider the Dirichlet-Laplace shrinkage prior from 
\cite{bhattacharya2015}. The regression version of this model was considered in 
\cite{Pal:Khare:2014}, and is provided as follows. 
\begin{eqnarray} \label{DirichletLaplaceModel}
& & {\bf y} \mid {\boldsymbol \beta},\sigma^2 \sim N(X{\boldsymbol \beta},\sigma^2 
I_n) \nonumber\\
& &{\boldsymbol \beta} \mid \sigma^2, {\boldsymbol \psi}, {\boldsymbol \phi}, \theta 
\sim N(0, \sigma^2D_{\boldsymbol \tau}) \mbox{ where } \tau_j = \psi_j \phi_j^2 
\theta^2 \nonumber\\
& & \sigma^2 \sim \text{Inverse-Gamma}(\alpha,\xi) \quad \alpha,\xi>0 \text{   fixed} \nonumber\\
& & \psi_1, \psi_2, \cdots, \psi_p \stackrel{i.i.d.}{\sim} Exp\left( \frac{1}{2} \right) \nonumber\\
& & (\phi_1,\phi_2,...,\phi_p) \sim Dir(a,a,...,a), \theta \sim \text{Gamma} \left( pa,\frac{1}{2} \right). 
\end{eqnarray} 

\noindent
Here ${\boldsymbol \eta} = ({\boldsymbol \psi}, {\boldsymbol \phi}, \theta)$ denotes 
the collection of all shrinkage parameters, and $a$ is a known positive constant. It 
can be seen that the full conditional distributions of ${\boldsymbol \beta}$ and 
$\sigma^2$ are exactly the same as in (\ref{gibbs-general}). Based on results in 
\cite{bhattacharya2015}, it is shown in \cite{Pal:Khare:2014} that samples from the full 
conditional distribution of ${\boldsymbol \eta}$ can be generated by making 
draws from a bunch of appropriate Generalized Inverse Gaussian densities. 

\subsection{%
Student's~$t$
Prior
} \label{sec:t}

First, suppose that the $\tau_j$ are assigned independent inverse-gamma priors with shape parameter $\nu_j/2$ and scale parameter $\eta_j/2$.
This structure corresponds (by integrating out $\bm\tau$) to specificiation of the prior on $\bm\beta\mid\sigma^2$ as a product of independent scaled
Student's~$t$
priors, where the prior on each $\beta_j\mid\sigma^2$ has $\nu_j$ degrees of freedom and scale parameter $(\eta_j\sigma^2)^{1/2}$.   Such
Student's~$t$
priors are a popular choice when a mildly informative prior is desired \citep[see, e.g.,][]{gelman2013}.
In this case, the conditional posterior distribution of $\bm\tau\mid\bm\beta,\sigma^2,\bm Y$ is a product of independent inverse-gamma distributions, where the distribution of each $\tau_j\mid\bm\beta,\sigma^2,\bm Y$ has shape parameter $(\nu_j+1)/2$ and scale parameter $(\eta_j+\beta_j^2/\sigma^2)/2$.

\subsection{%
Bayesian Elastic Net
} \label{sec:elasticnet}

As a final example, suppose instead that the $\tau_j$ are assigned independent 
continuous priors, each with density
\bas
\pi(\tau_j)=\frac{\lambda_1}{2(1-\lambda_2\tau_j)^2}\exp\left[-\frac{\lambda_1\tau_j}{2(1-\lambda_2\tau_j)}\right]
\eas
with respect to Lebesgue measure on the interval $(0,\lambda_2^{-1})$, where $\lambda_1,\lambda_2>0$.
This structure corresponds (by integrating out $\bm\tau$) to specification of the prior on $\bm\beta\mid\sigma^2$ as a product of independent priors, where the prior on each $\beta_j\mid\sigma^2$ has density with respect to Lebesgue measure that is proportional to
\bas
\pi(\beta_j\mid\sigma^2)\propto\exp\left(-\sqrt{\frac{\lambda_1}{\sigma^2}}\,\left|\beta_j^{}\right|-\frac{\lambda_2}{2\sigma^2}\beta_j^2\right).
\eas
Then the conditional posterior distribution of $\bm\tau\mid\bm\beta,\sigma^2,\bm Y$ is
\bas
\left.\left(\frac1{\tau_j}-\lambda_2\right)\given\bm\beta,\sigma^2,\bm Y\right.\sim\ind\inversegaussian\left(\sqrt{\frac{\lambda_1\sigma^2}{\beta_j^2}},\,\lambda_1\right).
\eas
This prior structure and corresponding Gibbs sampler are known as the Bayesian elastic net \citep{li2010,kyung2010}.

\section{The fast Bayesian shrinkage algorithm} \label{sec:blocked}

\noindent
While the three-step Gibbs sampler (with transition density $\hat{k}$) provides a 
useful and straightforward way to sample from the posterior density, its slow 
convergence in high-dimensional settings with $p >> n$ is discussed by 
\citet{rajaratnam2015}. In particular, it is demonstrated that the slow convergence 
problem in these settings arises due to the high a~posteriori dependence between 
$\bm\beta$ and~$\sigma^2$. It was argued by \citet{liu1994w} that the convergence 
rate of Gibbs samplers can often be improved by grouping highly dependent 
parameters together into a single block of a blocked Gibbs sampler. Of course, 
if the conditional densities for the blocked Gibbs sampler are not easy to sample 
from, then any possible gain from blocking is likely to be overshadowed by the extra 
effort in sampling from such densities. 

In an effort to address the slow convergence of the three-step Gibbs sampler in 
high-dimensional settings, we prove the following lemma. Its proof may be found in the 
Supplementary Material.
\begin{lem} \label{lem:sigma-given-tau}
For the Bayesian model in (\ref{mixture-general}), $\sigma^2\mid\bm\tau,\bm Y$ has 
the inverse gamma distribution with shape parameter $(n-1)/2$ and scale parameter $
\tilde{\bm Y}^\T(\bm I_n-\bm X\bm A_{\bm\tau}^{-1}\bm X^\T)\tilde{\bm Y}/2$.
\end{lem}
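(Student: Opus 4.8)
The plan is to read off $\pi\left(\sigma^2\mid\bm\tau,\bm Y\right)$ directly from the joint density of $\left(\bm Y,\mu,\bm\beta,\sigma^2,\bm\tau\right)$ implied by~(\ref{model}) and~(\ref{mixture-general}), by integrating out the two blocks $\mu$ and $\bm\beta$ that separate $\sigma^2$ from $\bm\tau$. Up to a factor depending only on $\left(\bm\tau,\bm Y\right)$, this joint density is proportional to
\[
\left(\sigma^2\right)^{-n/2}\exp\left(-\tfrac{1}{2\sigma^2}\|\bm Y-\mu\bm1_n-\bm X\bm\beta\|_2^2\right)\left(\sigma^2\right)^{-p/2}\exp\left(-\tfrac{1}{2\sigma^2}\bm\beta^\T\bm D_{\bm\tau}^{-1}\bm\beta\right)\left(\sigma^2\right)^{-1},
\]
so the computation reduces to two successive Gaussian integrations, one over $\mu\in\reals$ and one over $\bm\beta\in\reals^p$, together with careful bookkeeping of the powers of $\sigma^2$.

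First I would integrate out $\mu$: completing the square in $\mu$ and performing the one-dimensional Gaussian integral contributes a factor proportional to $\left(\sigma^2\right)^{1/2}$ and replaces $\|\bm Y-\mu\bm1_n-\bm X\bm\beta\|_2^2$ by $\|\left(\bm I_n-n^{-1}\bm1_n\bm1_n^\T\right)\left(\bm Y-\bm X\bm\beta\right)\|_2^2$; because the columns of $\bm X$ are standardized, hence centered with $\bm1_n^\T\bm X=\bm0$, this equals $\|\tilde{\bm Y}-\bm X\bm\beta\|_2^2$ with $\tilde{\bm Y}=\left(\bm I_n-n^{-1}\bm1_n\bm1_n^\T\right)\bm Y$. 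Next I would complete the square in $\bm\beta$, using the identity $\|\tilde{\bm Y}-\bm X\bm\beta\|_2^2+\bm\beta^\T\bm D_{\bm\tau}^{-1}\bm\beta=\left(\bm\beta-\bm A_{\bm\tau}^{-1}\bm X^\T\tilde{\bm Y}\right)^\T\bm A_{\bm\tau}\left(\bm\beta-\bm A_{\bm\tau}^{-1}\bm X^\T\tilde{\bm Y}\right)+\tilde{\bm Y}^\T\left(\bm I_n-\bm X\bm A_{\bm\tau}^{-1}\bm X^\T\right)\tilde{\bm Y}$, with $\bm A_{\bm\tau}=\bm X^\T\bm X+\bm D_{\bm\tau}^{-1}$ as in~(\ref{gibbs-general}). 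The $p$-dimensional Gaussian integral over $\bm\beta\in\reals^p$ then contributes a factor proportional to $\left(\sigma^2\right)^{p/2}$ times a $\bm\tau$-dependent determinant, and the $\left(\sigma^2\right)^{p/2}$ exactly cancels the $\left(\sigma^2\right)^{-p/2}$ coming from the prior on $\bm\beta$. What remains, as a function of $\sigma^2$, is proportional to $\left(\sigma^2\right)^{-(n-1)/2-1}\exp\left(-\tilde{\bm Y}^\T\left(\bm I_n-\bm X\bm A_{\bm\tau}^{-1}\bm X^\T\right)\tilde{\bm Y}/(2\sigma^2)\right)$, which is exactly the kernel of the $\inversegamma$ distribution with shape $(n-1)/2$ and scale $\tilde{\bm Y}^\T\left(\bm I_n-\bm X\bm A_{\bm\tau}^{-1}\bm X^\T\right)\tilde{\bm Y}/2$, as claimed.

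Since every step is an elementary Gaussian manipulation, the only real pitfall is arithmetic hygiene: one must correctly track the exponents of $\sigma^2$ contributed by the likelihood ($-n/2$), the prior on $\bm\beta$ ($-p/2$), the prior on $\sigma^2$ ($-1$), and the two Gaussian integrations ($+1/2$ from $\mu$ and $+p/2$ from $\bm\beta$), and verify that these sum to $-(n-1)/2-1$. As a consistency check against~(\ref{gibbs-general}), multiplying the claimed $\inversegamma$ density for $\sigma^2\mid\bm\tau,\bm Y$ by the $\normal_p\left(\bm A_{\bm\tau}^{-1}\bm X^\T\tilde{\bm Y},\,\sigma^2\bm A_{\bm\tau}^{-1}\right)$ density for $\bm\beta\mid\sigma^2,\bm\tau,\bm Y$ and recombining the two exponents via the same completing-the-square identity reproduces, viewed as a function of $\sigma^2$ for fixed $\bm\beta$, precisely the $\inversegamma\left[(n+p-1)/2,\,\|\tilde{\bm Y}-\bm X\bm\beta\|_2^2/2+\bm\beta^\T\bm D_{\bm\tau}^{-1}\bm\beta/2\right]$ conditional stated in~(\ref{gibbs-general}). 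One may also note that the Woodbury identity gives $\bm I_n-\bm X\bm A_{\bm\tau}^{-1}\bm X^\T=\left(\bm I_n+\bm X\bm D_{\bm\tau}\bm X^\T\right)^{-1}$, which is positive definite, so the scale parameter is strictly positive whenever $\tilde{\bm Y}\neq\bm0$.
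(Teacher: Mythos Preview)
Your proof is correct and follows essentially the same route as the paper's own argument: both obtain $\pi(\sigma^2\mid\bm\tau,\bm Y)$ by completing the square in $\bm\beta$ and performing the $p$-dimensional Gaussian integral, leaving the inverse-gamma kernel in $\sigma^2$. The only difference is that you make the $\mu$-integration explicit (the paper starts from the joint posterior with $\mu$ already marginalized) and you are more careful than the paper in tracking the exponent of $\sigma^2$, adding the consistency check against~(\ref{gibbs-general}) and the Woodbury observation.
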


\noindent
This lemma facilitates the construction of a novel blocked two-step Gibbs sampler 
which can be used to generate samples from the joint posterior density of 
$(\bm{\beta}, \sigma^2)$, and which is as tractable as the original three-step Gibbs 
sampler. This blocked Gibbs sampler alternates between drawing $(\bm\beta,
\sigma^2)\mid\bm\tau$ and $\bm\tau\mid(\bm\beta,\sigma^2)$. In particular,
$(\bm\beta,\sigma^2)\mid\bm\tau$ may be drawn by first drawing $\sigma^2\mid\bm
\tau$ and then drawing $\bm\beta \mid \sigma^2,\bm\tau$. In other words, the 
blocked Gibbs sampler may be constructed by replacing the draw of $\sigma^2\mid
\bm\beta,\bm\tau,\bm Y$ in~(\ref{gibbs-general}) with a draw of $\sigma^2 \mid 
\bm{\tau}$ as given by Lemma~\ref{lem:sigma-given-tau}. In particular, the blocked 
Gibbs sampler cycles through drawing from the distributions
\basn
&\left.\bm\tau\given\bm\beta,\sigma^2,\bm Y\right. \sim\pi\left(\bm\tau\given\bm\beta,
\sigma^2,\bm Y\right) \label{gibbs-blocked}\\
&\left.(\bm\beta, \sigma^2)\given\bm\tau,\bm Y\right.\sim
\left\{
\begin{aligned}
\left.\sigma^2\given\bm\tau,\bm Y\right.&\sim\inversegamma\left[(n-1)/2,\,\tilde{\bm Y}^\T\left(\bm I_n-\bm X\bm A_{\bm\tau}^{-1}\bm X^\T\right)\tilde{\bm Y}/2\right],\\
\left.\bm\beta\given\sigma^2,\bm\tau,\bm Y\right.&\sim N_p\left(\bm A_{\bm\tau}^{-1}\bm X^\T\tilde{\bm Y},\,\sigma^2\bm A_{\bm\tau}^{-1}\right), 
\end{aligned}
\right.\notag
\easn
and has one-step transition density $k$ with respect to
Lebesgue
measure on $\mathbb{R}^p \times 
\mathbb{R}_+$
given
by 
\begin{equation} \label{eqtwobayes}
k\left[\left(\bm{\beta}_0, \sigma^2_0\right), \left(\bm{\beta}_1, \sigma^2_1\right)\right] = \int_{\mathbb{R}_{+}^p} \pi (\bm{\tau} \mid \bm{\beta}_0, \sigma^2_0, \bm 
Y)\, \pi (\sigma^2_1 \mid \bm{\tau}, \bm Y)\, \pi(\bm{\beta}_1 \mid \sigma^2_1, \bm{\tau}, \bm Y)\; d\bm{\tau}. 
\end{equation}

\noindent
Note that the blocked Gibbs sampler is just as tractable as the original sampler since 
the only the parameters of the inverse gamma distribution are modified. In particular, 
although the blocked Gibbs sampler requires inversion of the matrix 
$\bm A_{\bm\tau}$ to draw $\sigma^2$, this inversion must be carried out anyway to 
draw~$\bm\beta$, so no real increase in computation is required. 

Note that our Markov chains move on the $(\bm{\beta}, \sigma^2)$ space as these are 
the parameters of interest. In other words, we integrate out the augmented parameter 
$\bm{\tau}$ when we compute our transition densities $\hat{k}$ and $k$ in 
(\ref{threebayes}) and (\ref{eqtwobayes}) respectively. If we consider the 
``lifted versions" of our chains where we move on the $(\bm{\beta}, \sigma^2, 
\bm{\tau})$ space (do not integrate out $\bm{\tau}$ for the transition densities), then 
the results in \cite{liu1994w} imply that the Markov operator corresponding to the 
two-step chain has a smaller operator norm as compared to the three-step chain. 
However, no comparisons of the spectral radius, which dictates the rate of 
convergence of the non-reversible lifted chains, is available. Furthermore, no 
theoretical comparison of the actual Markov chains (with transition densities $\hat{k}$ 
and $k$) is available. Hence, in subsequent sections, we undertake a numerical 
comparison (on both simulated and real data) as well as a theoretical comparison of 
the proposed two-step sampler and the original three-step sampler. 

\begin{rem}
Note that in the case of many global-local shrinkage priors such as 
in Section \ref{sec:globallocal}, ${\boldsymbol \tau}$ is a function of the global and 
local shrinkage parameters. Hence, to sample from ${\boldsymbol \tau}$, one needs to 
sample from the conditional distribution of the collection of shrinkage parameters 
(denoted by ${\boldsymbol \eta}$ in Section \ref{sec:globallocal}) given 
${\boldsymbol \beta}$ and $\sigma^2$. However, the distribution of $({\boldsymbol 
\beta}, \sigma^2)$ depends on ${\boldsymbol \eta}$ only through ${\boldsymbol \tau}$, 
and samples from the joint density of $({\boldsymbol \beta}, \sigma^2)$ given 
${\boldsymbol \eta}$ can be generated exactly as described in 
(\ref{gibbs-blocked}). 
\end{rem}

\begin{rem}
We use the term \emph{blocked} rather than \emph{grouped} to avoid confusion with 
existing methods such as the \emph{group Bayesian lasso} \citep{kyung2010}, in 
which the notion of grouping is unrelated to the concept considered here. Note also
that the original three-step sampler could already be considered a blocked Gibbs 
sampler since the $\beta_j$ and $\tau_j$ are not drawn individually.  Thus, our use of 
the term \emph{blocked} in describing the proposed two-step sampler should be 
understood as meaning that the Gibbs sampling cycle is divided into fewer blocks than 
in the original Gibbs sampler.
\end{rem}

\section{Numerical Comparison} \label{sec:numerical}

\noindent
Note that as illustrated in Section \ref{sec:original} our approach is applicable to a wide 
variety of Bayesian high-dimensional regression methods. In this section, we will 
undertaken a numerical comparison between the convergence rates and efficiency of 
the original and proposed chains for two such methods. One is the spike-and-slab 
approach from \ref{sec:spikeandslab} (representing methods with point mass or 
discrete shrinkage priors), and the second one is Bayesian lasso approach from 
Section \ref{sec:blasso} (representing continuous shrinkage priors).  As a proxy for the 
actual rate of convergence of the chain, we consider the autocorrelation in the
marginal~$\sigma^2_k$ chain. Note that this autocorrelation is exactly equal to the 
geometric convergence rate in the simplified case of standard Bayesian regression 
\citep{rajaratnam2015} and is a lower bound for the true geometric convergence rate 
in general \citep{liu1994w}. 

\subsection{Numerical comparsion for Bayesian lasso}

\noindent
Figure~\ref{fig:ac-p} plots the autocorrelation for the original three-step and proposed 
two-step Bayesian lasso Gibbs samplers versus~$p$ for various values of~$n$. 
(Similar plots under varying sparsity and multicollinearity may be found in the 
Supplementary Material). The left side of Figure~\ref{fig:dacf} plots the same 
autocorrelation versus $\log(p/n)$ for a wider variety of values of $n$ and~$p$. It is 
apparent that this autocorrelation for the proposed two-step Bayesian lasso is 
bounded away from~$1$ for all $n$ and~$p$. The center and right side of 
Figure~\ref{fig:dacf} show dimensional autocorrelation function (DACF) surface plots 
\citep[see][]{rajaratnam2015} for the original (left) and  (right) Bayesian lasso Gibbs 
samplers. (See the Supplementary Material for details of the generation of the various 
numerical quantities that were used in the execution of these chains). It is clear that 
the autocorrelation tends to~$1$ as $p/n\to\infty$ for the original Bayesian lasso but 
remains bounded away from~$1$ for the the proposed two-step Bayesian lasso. 

\begin{figure}[htbp]
\centering
\includegraphics[scale=0.38]{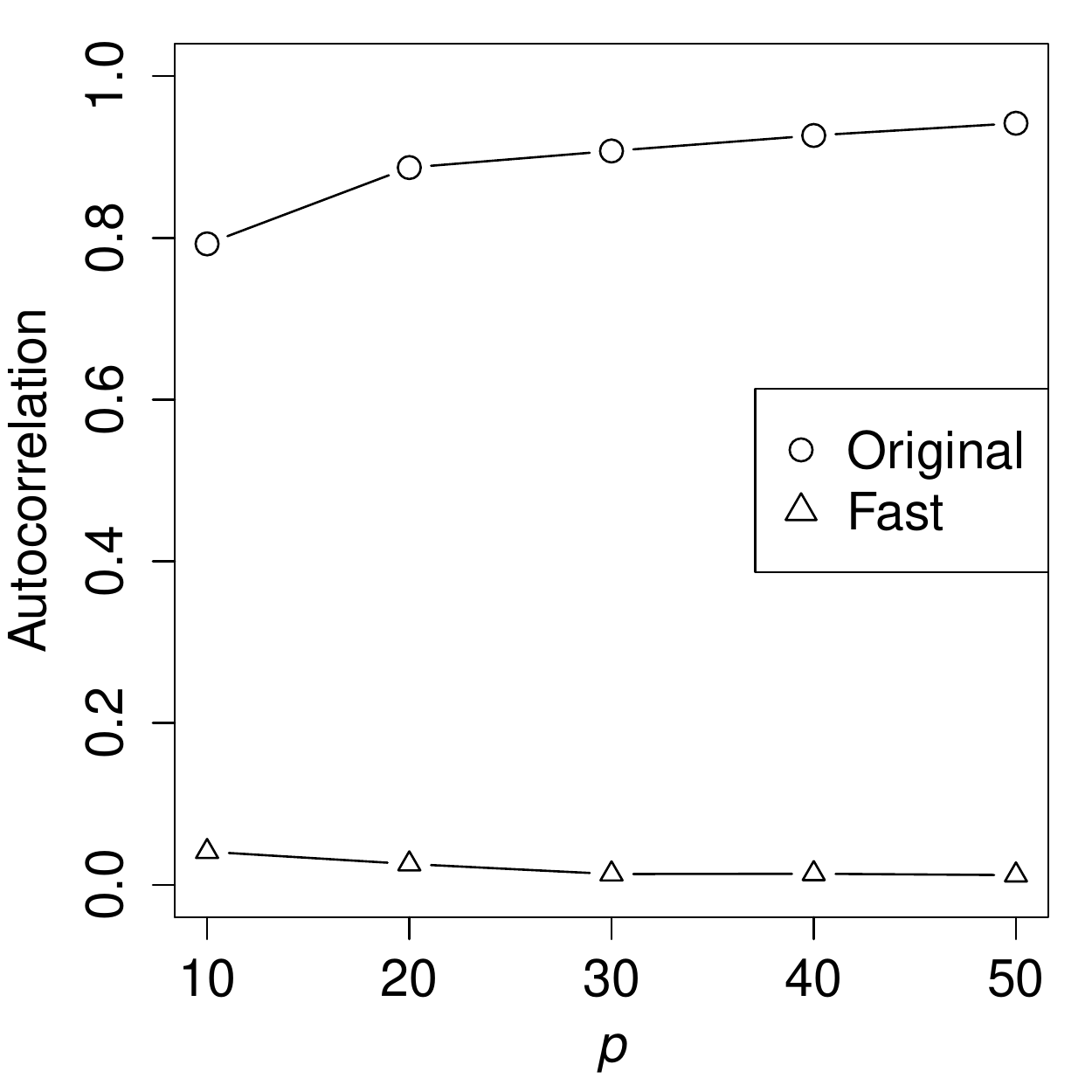}
\includegraphics[scale=0.38]{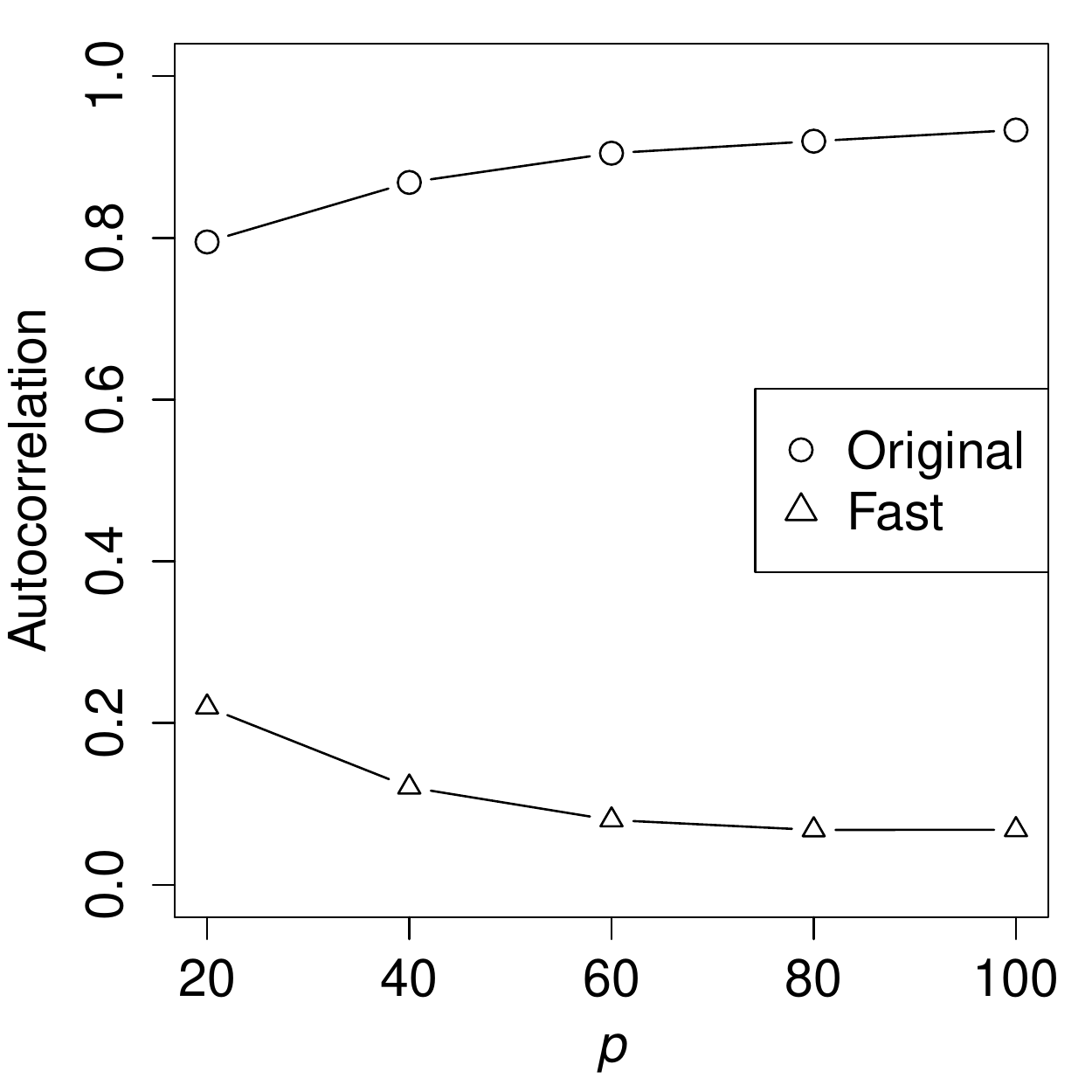}
\includegraphics[scale=0.38]{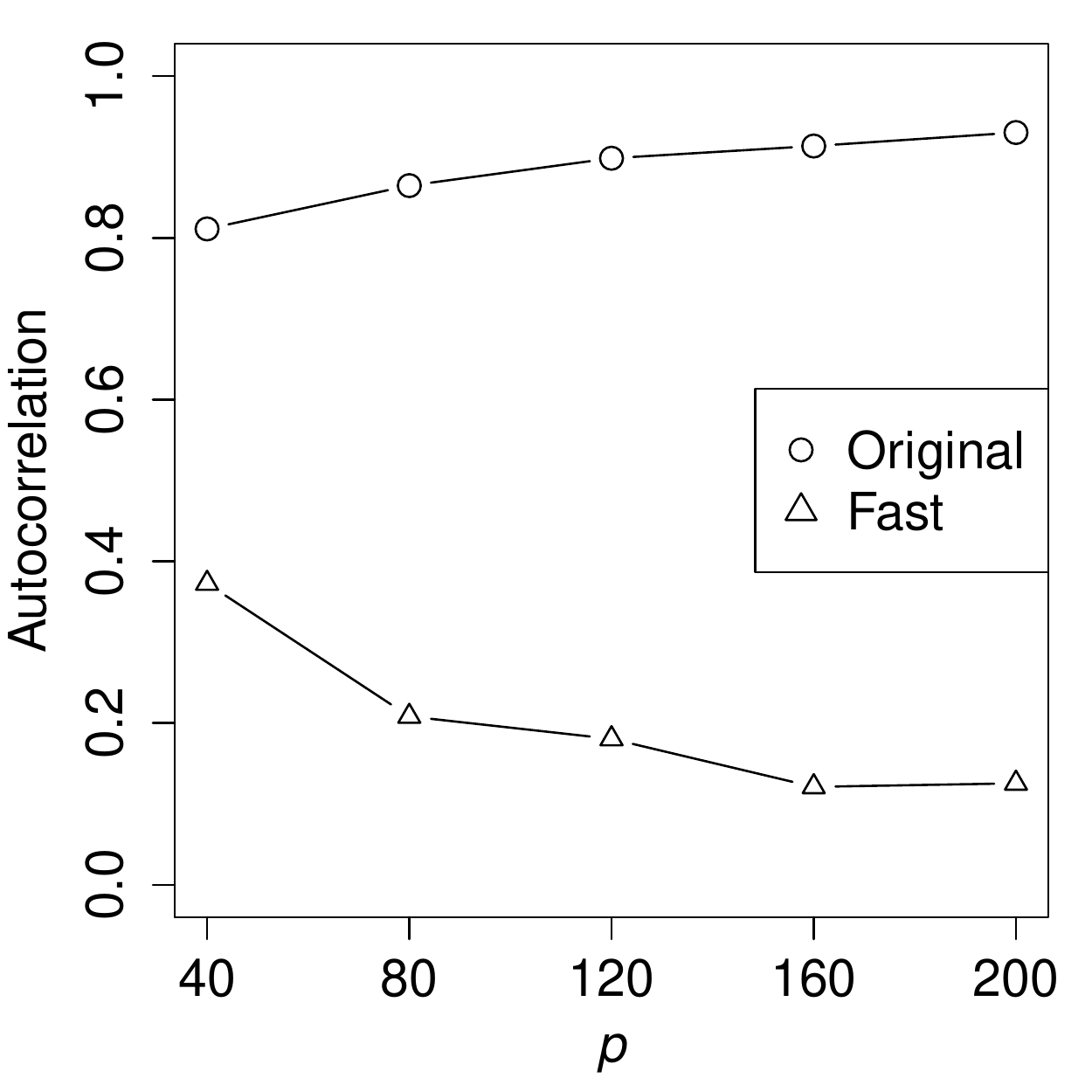}
\vspace*{-1em}
\caption{%
Autocorrelation of $\sigma^2_k$ versus $p$ for the original three-step and proposed 
two-step Bayesian lasso Gibbs samplers with $n=5$ (left), $n=10$ (center), and 
$n=20$ (right).
See the Supplementary Material for details of the generation of the
numerical quantities used
in the execution of these chains.
}
\label{fig:ac-p}
\end{figure}

\begin{figure}[htbp]
\centering
\includegraphics[scale=0.38]{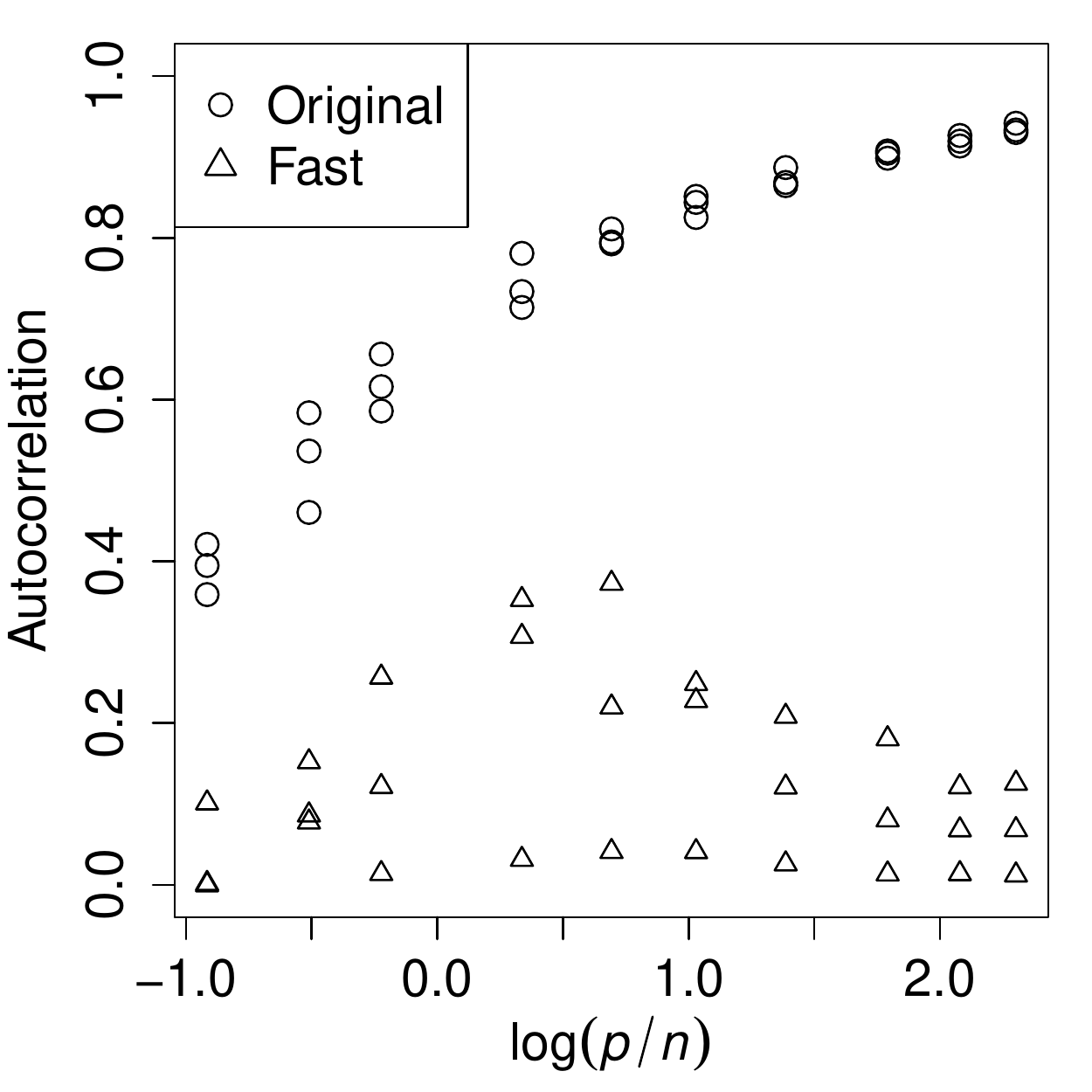}
\includegraphics[scale=0.38]{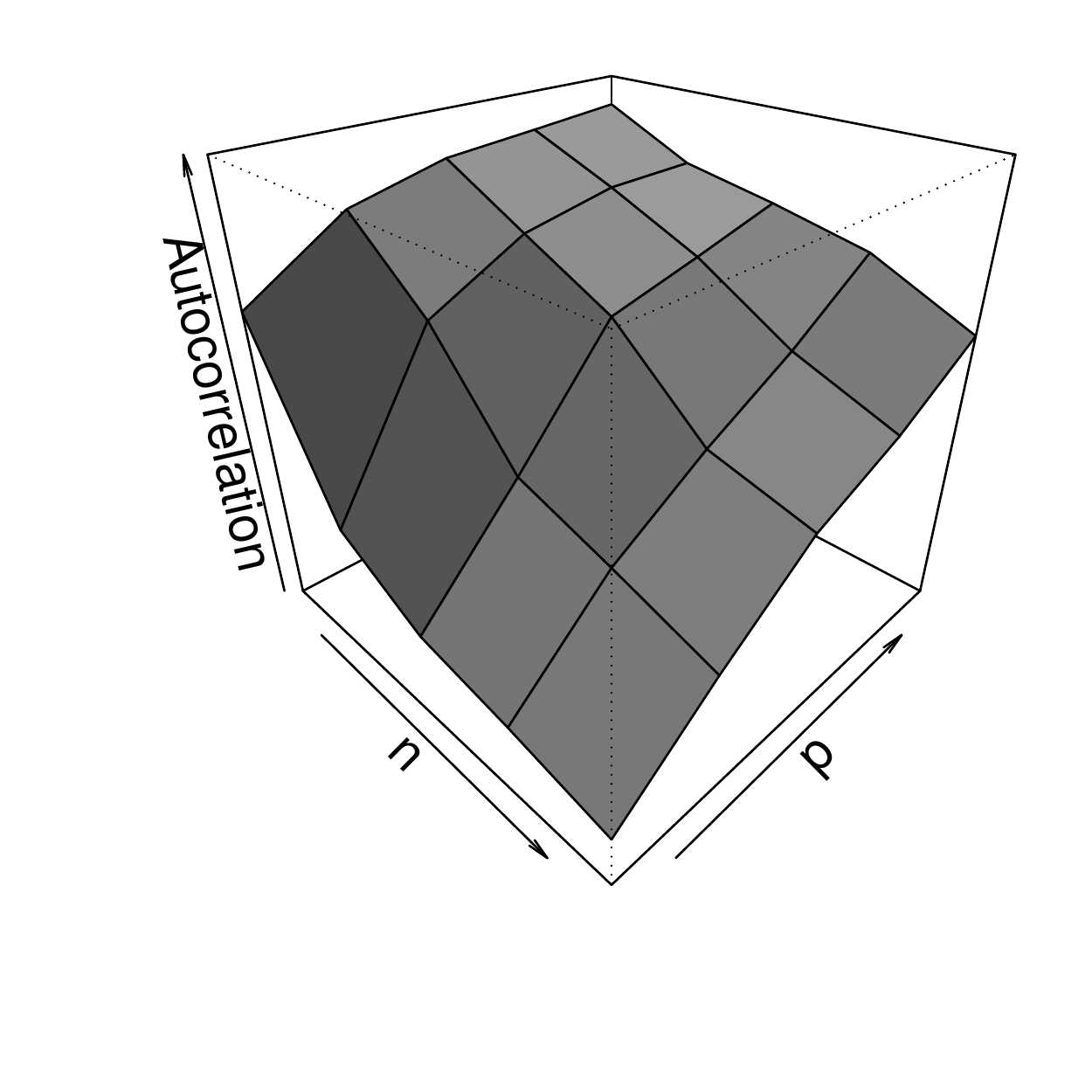}
\includegraphics[scale=0.38]{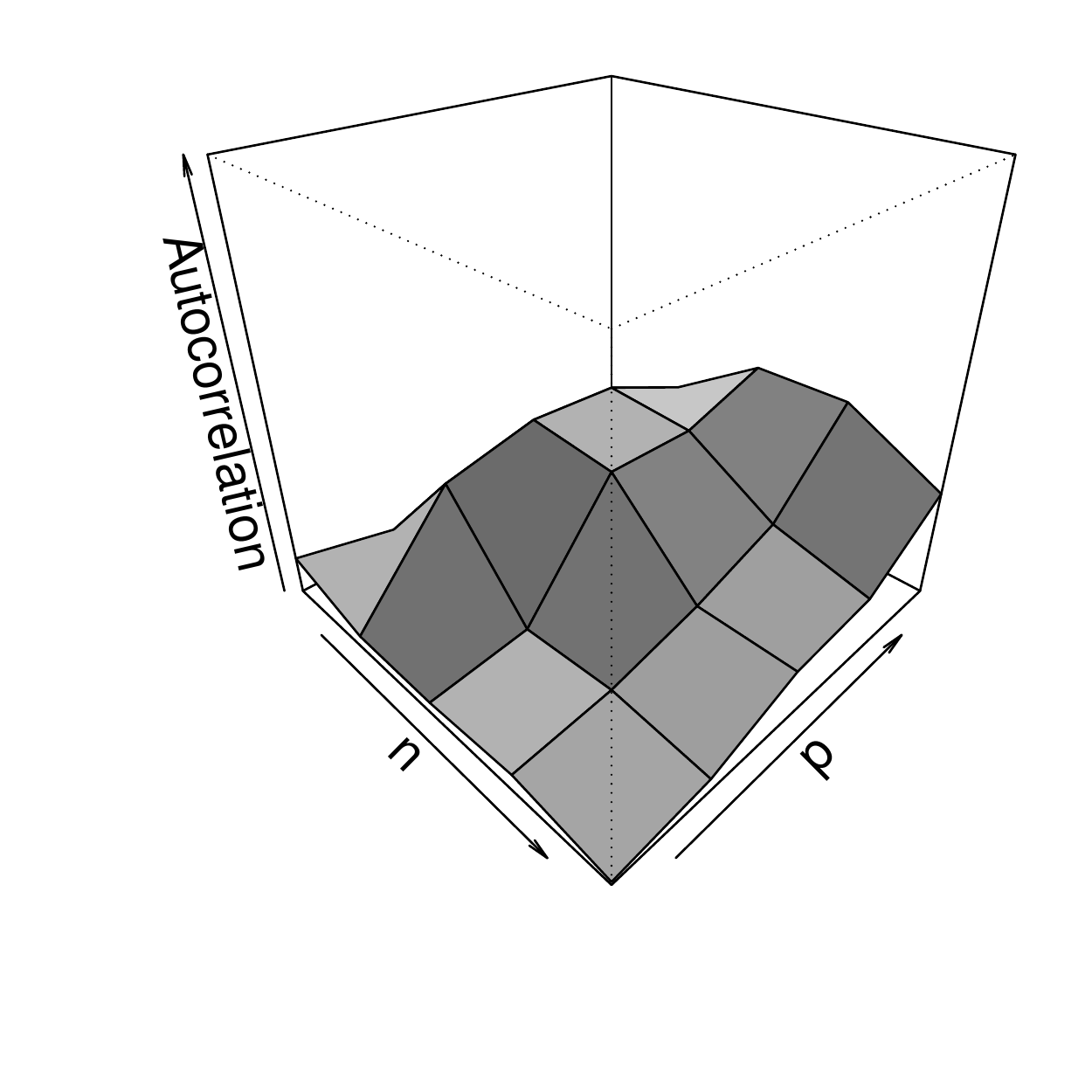}
\vspace*{-1em}
\caption{%
Autocorrelation of $\sigma^2_k$ versus $\log(p/n)$ for the original three-step and 
proposed two-step Bayesian lasso Gibbs samplers with various values of $n$ 
and~$p$ (left).Dimensional autocorrelation function (DACF) surface plots for the 
$\sigma^2_k$ chain relative to $n$ and~$p$ for the original (center) and two-step 
(right) Bayesian lasso Gibbs samplers.
See the Supplementary Material for details of the generation of the
numerical quantities used
in the execution of these chains.
}
\label{fig:dacf}
\end{figure}

\subsection{Numerical comparison for spike and slab}

We also applied both spike-and-slab Gibbs samplers to simulated data that was 
constructed in a manner identical to the simulated data for the Bayesian lasso in 
Figure~\ref{fig:ac-p}.  The resulting autocorrelations of the $\sigma^2_k$ chains are 
shown in Figure~\ref{fig:ac-ss-p}.  As with the Bayesian lasso, it is clear that the 
two-step blocked version of the spike-and-slab sampler exhibits dramatically improved 
convergence behavior in terms of $n$ and~$p$ as compared to the original three-step 
sampler. 
\begin{figure}[htbp]
\centering
\includegraphics[scale=0.38]{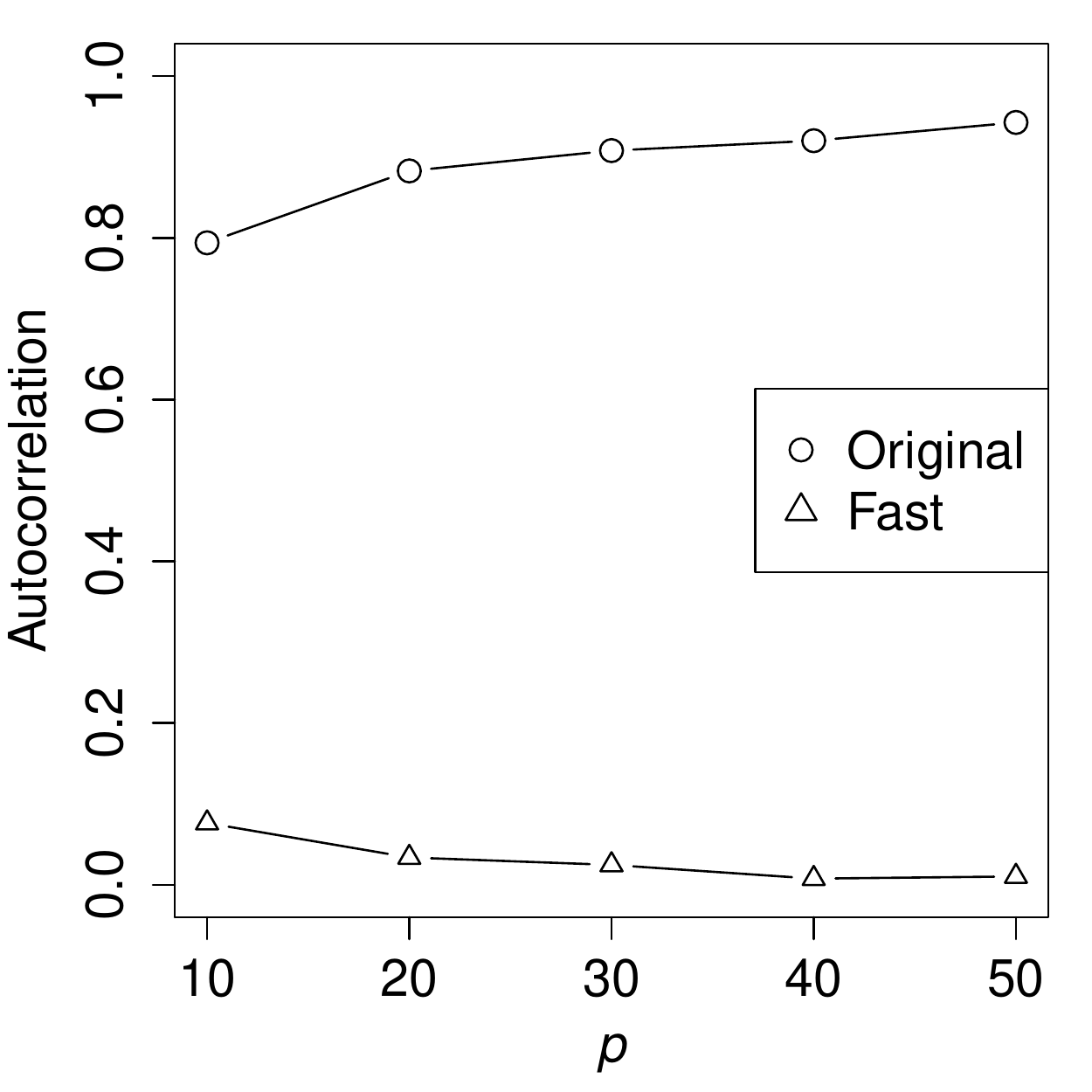}
\includegraphics[scale=0.38]{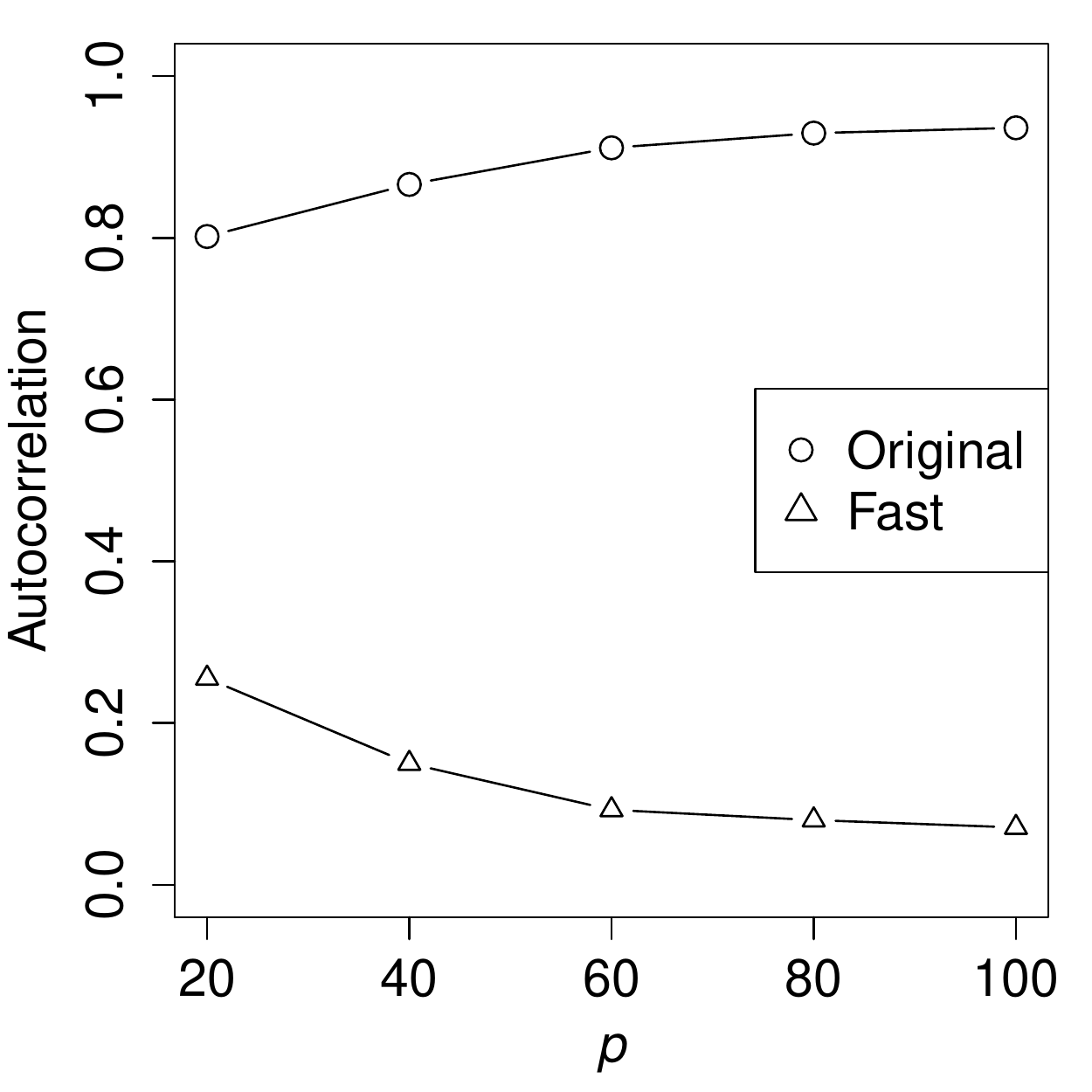}
\includegraphics[scale=0.38]{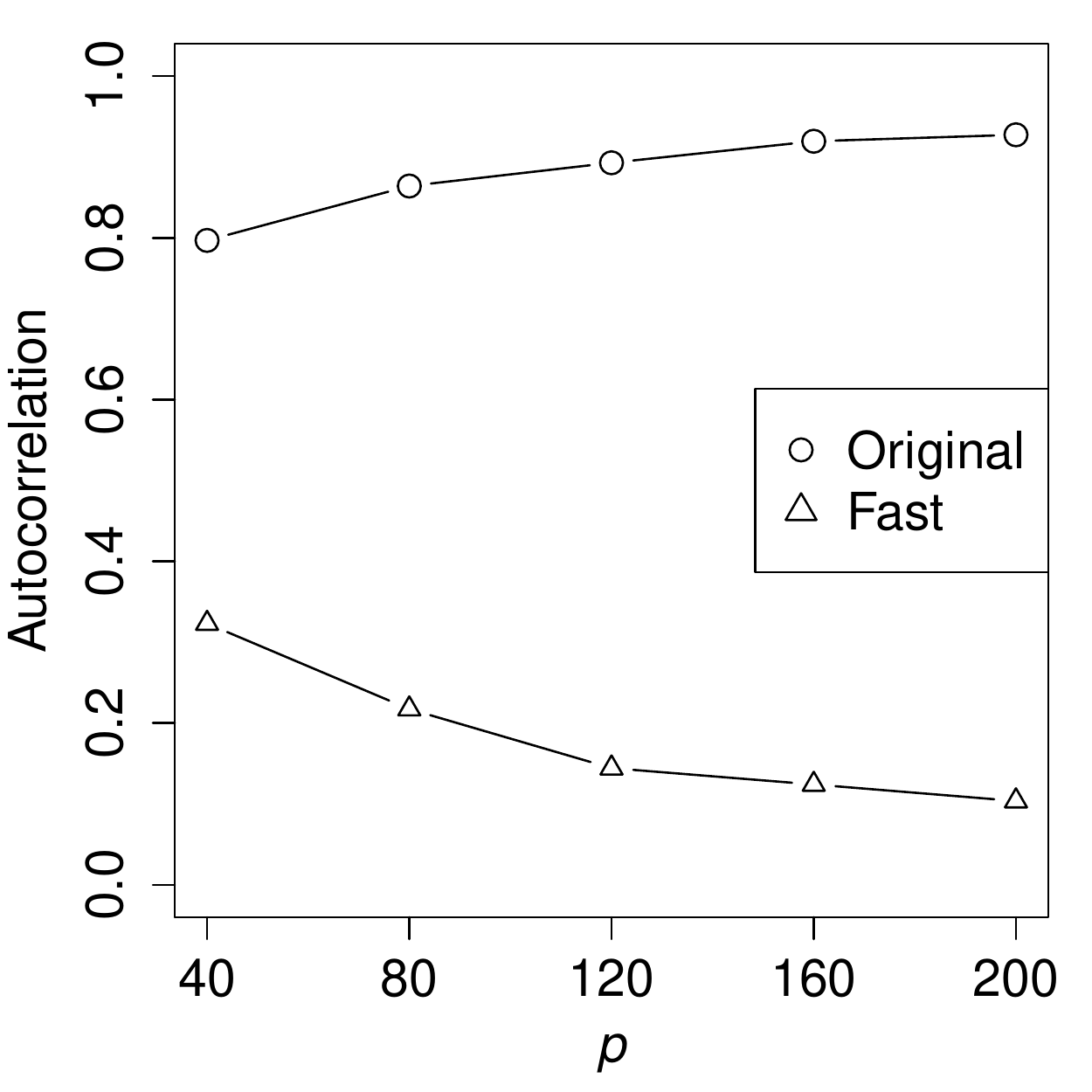}
\vspace*{-1em}
\caption{%
Autocorrelation of $\sigma^2_k$ versus $p$ for the original two-step and proposed 
three-step spike-and-slab Gibbs samplers with $n=5$ (left), $n=10$ (center), and 
$n=20$ (right).
See the Supplementary Material for details of the generation of the
numerical quantities used
in the execution of these chains.
}
\label{fig:ac-ss-p}
\end{figure}

%
%

\section{Applications to Real Data} \label{sec:applications}

\noindent
We now illustrate the benefits of the proposed two-step approach through its 
application to three readily available regression data sets.

\subsection{Gene Expression Data} \label{subsec:gene}

\subsubsection{Results for Bayesian lasso}

\noindent
We first evaluate the performance of the original and proposed chains using the 
Bayesian lasso model through their application to the gene expression data of \citet{scheetz2006}, which is publicly available as the data set \texttt{eyedata} in the R~package \texttt{flare} \citep{Rflare}.  This data set includes $n=120$ observations of a response variable and $p=200$ covariates.
The columns of~$\bm X$ were standardized to have mean zero and squared Euclidean norm~$n$.
The regularization parameter was taken to be $\lambda=0.2185$
so that the number of nonzero coefficients in the lasso estimate is $\min\{n,p\}/2=60$.
Both the oriiginal three-step and proposed two-step Bayesian lassos were executed for 10,000 iterations
(after discarding an initial ``burn-in'' period of 1,000 iterations),
which took
about
100~seconds
on a 2.6~GHz machine for each chain.  Further
numerical details (such as the initialization of the chains) may be found by direct inspection of the R code, which we provide in its entirety in the Supplementary Material.\nocite{R,Rlars,Rstatmod}

We now show that
the Markov chain of the proposed two-step Bayesian lasso mixes substantially better than that of the original Bayesian lasso when applied to this gene expression data set.
Specifically, the lag-one autocorrelation of the $\sigma^2$ chain
for the two-step Bayesian lasso is
0.3885,
whereas it is
0.7794
for the original Bayesian lasso.
We also computed the effective sample size of the $\sigma^2$ chain using the \texttt{effectiveSize} function of the R~package \texttt{coda} \citep{Rcoda}.  The effective sample size is
4,160
for the two-step Bayesian lasso and
1,240
for the original Bayesian lasso.
These effective sample sizes essentially demonstrate that the two-step Bayesian lasso 
needs only about
30\;\!\%
as many Gibbs sampling iterations to obtain an MCMC sample of equivalent quality to that of the original Bayesian lasso.
Since the computational time for a single iteration is essentially the same for both algorithms, the two-step Bayesian lasso also requires only about
30\;\!\%
as much time as the original Bayesian lasso to obtain an MCMC sample of equivalent quality.
Such considerations are particularly important when considering the high-dimensional data sets to which the lasso and Bayesian lasso are often applied.

\begin{rem}
The reader may wonder why we have chosen to examine the autocorrelation and effective sample size of the $\sigma^2$ chain instead of the $\bm\beta$ chain (i.e., the individual $\beta_j$ chains).
The reason is simply that the autocorrelation and effective sample size of the $\bm\beta$ chain can fail to accurately convey the mixing behavior of regression-type Gibbs samplers.
In short, the sampled iterates of certain quadratic forms in~$\bm\beta$ can be very highly autocorrelated even though the $\beta_j$ chains themselves have an autocorrelation near zero.
Such convergence behavior leads to serious inaccuracy when using the MCMC output for uncertainty quantification, which is the very reason why the Bayesian lasso is employed in the first place.
In contrast, there are strong theoretical connections between the autocorrelation of the $\sigma^2$ chain and the true convergence rate for regression-type Gibbs samplers.
See \citet{rajaratnam2015} for a thorough discussion of these concepts.
\end{rem}

\subsubsection{Results for Spike-and-Slab} \label{subsec:numerical-ss}

\noindent
Next, we investigate the performance of the two-step Gibbs sampler in~(\ref{gibbs-blocked}) relative to the original 
sampler in~(\ref{gibbs-general}) using the spike-and-slab prior. Both the original and proposed Gibbs samplers 
were executed for 10,000 iterations (after discarding an initial ``burn-in'' period of 1,000 iterations), which took 
about 80~seconds on a 2.6~GHz machine for each chain.  Other numerical details may be found by direct 
inspection of the R~code, which we provide in its entirety in the Supplemental Material. 

As was the case with the two-step Bayesian lasso, the blocked version of the 
spike-and-slab Gibbs sampler exhibits mixing properties that are substantially better 
than those of the corresponding original sampler.  Specifically, the lag-one 
autocorrelation of the $\sigma^2$ chain for the two-step spike-and-slab Gibbs sampler 
is 0.0187 whereas it is 0.5174 for the original spike-and-slab Gibbs sampler.  
Moreover, the effective sample size of the $\sigma^2$ chain is 9,372 for the fast 
spike-and-slab Gibbs sampler and 2,977 for the original spike-and-slab Gibbs 
sampler. Thus, the blocking approach of the proposed two-step sampler once again 
yields a sampler that is dramatically more efficient than the original three-step sampler. 

\subsection{Infrared Spectroscopy Data} \label{subsec:spectroscopy}

We next evaluate the performance of the original three-step and proposed two-step 
approaches in the context of Bayesian lasso through their application to the infrared 
spectroscopy data of \citet{osborne1984} and \citet{brown2001}, which is publicly 
available as the data set \texttt{cookie} in the R package \texttt{ppls} \citep{Rppls}. 
We used the first $n=40$ observations and the first of the response variables 
(corresponding to fat content).  The $p=700$ covariates of this data set exhibit high 
multicollinearity, with a median pairwise correlation of~$0.96$. The columns of 
$\bm X$ were standardized to have mean zero and squared Euclidean norm~$n$.  
The regularization parameter was taken to be $\lambda=0.0504$, which yields $
\min\{n,p\}/2=20$ nonzero coefficients in the lasso estimate. Both the original and 
two-step Bayesian lassos were executed for 10,000 iterations (after discarding an 
initial ``burn-in'' period of 1,000 iterations), which took about 25~minutes on a 
2.6~GHz machine for each chain. The R~code is provided in the Supplementary 
Material. \nocite{R}

two-stepAs with the gene expression data of the previous subsection, the Markov 
chain of the two-step Bayesian lasso mixes much better than that of the original 
Bayesian lasso when applied to this infrared spectroscopy data set. The lag-one 
autocorrelation of the $\sigma^2$ chain is 0.0924 for the fast Bayesian lasso and 
0.9560 for the original Bayesian lasso. The effective sample size of the $\sigma^2$ 
chain is 7,790 for the fast Bayesian lasso and 225 for the original Bayesian lasso. 
two-stepThus, the two-step Bayesian lasso requires
only about
3\;\!\%
as many Gibbs sampling iterations
to achieve the same effective sample size as the original Bayesian lasso.
Hence, the two-step Bayesian lasso requires only about
3\;\!\%
as much time as the original Bayesian lasso to obtain an MCMC sample of equivalent quality.
(Also see the previous subsection for further comments
on the
consideration of the $\sigma^2$ chain specifically.)

\subsection{Communities and Crime Data}
\label{subsec:crime}

For a third application to real data, we consider the communities and crime data of \citet{redmond2002}, which is publicly available through the UC Irvine Machine Learning Repository \citep{lichman2013} as the Communities and Crime Data Set.
We chose 50 covariates from that data set and constructed $p=1325$ covariates by taking the first-order, quadratic, and multiplicative interaction terms involving those 50 raw covariates.
We again illustrate the performance of the original three-step and proposed two-step 
chains un the context of the Bayesian lasso model in a small-$n$, large-$p$ setting. 
We choose $n=10$ of the 1994 total observations uniformly at random.
Each of the 1325 columns (each of length~10) of the resulting $\bm X$ matrix were 
standardized to have mean zero and squared Euclidean norm~$n$.  The 
regularization parameter was taken to be $\lambda=1.331$, which yields 
$\min\{n,p\}/2=5$ nonzero coefficients in the lasso estimate for the particular 
observations selected. Both the original and two-step Bayesian lassos were executed for 
10,000 iterations (after discarding an initial ``burn-in'' period of 1,000 iterations), which 
took about 90~minutes on a 2.6~GHz machine for each chain.  The R code is 
provided in the Supplementary Material.

Once again, the Markov chain of the two-step Bayesian lasso mixes considerably faster than that of the original Bayesian lasso.
The lag-one autocorrelation of the $\sigma^2$ chain is
0.0017
for the two-step Bayesian lasso and
0.9942
for the original Bayesian lasso.
The effective sample size of the $\sigma^2$ chain is
10,000
for the two-step Bayesian lasso and
29
for the original Bayesian lasso.
(Note that the \texttt{effectiveSize} function involves the use of AIC to select the order of an autoregressive model, and obtaining an effective sample size exactly equal to the number of MCMC iterations simply indicates that AIC has selected a trivial autoregressive model of order zero.)
Thus, the two-step Bayesian lasso requires
only about 0.3\;\!\%
as many Gibbs sampling iterations to achieve the same effective sample size as the original Bayesian lasso.  Hence, the two-step Bayesian lasso requires
only about 0.3\;\!\%
as much time as the original Bayesian lasso to obtain an MCMC sample of equivalent quality.
(Also see the previous subsections for further comments
on the
consideration of the $\sigma^2$ chain specifically.)



\begin{table}[htbp]
\centering
\begin{tabular}{lrrccccc}
\toprule
&&&\multicolumn{2}{c}{Autocorrelation}&\multicolumn{3}{c}{Effective Size}\\
\cmidrule(r){4-5}\cmidrule(l){6-8}
\multicolumn{1}{c}{Data Set}&\multicolumn{1}{c}{$n$}&\multicolumn{1}{c}{$p$}&Two-step&Original&Two-step&Original&
Ratio
\\\midrule
Gene Expr. (Spike-Slab)&120&200&0.0187&0.5174&\phantom{0}9,372&2,977&
\phantom{00}3.15
\\
Gene Expr. (Bayesian lasso)&120&200&0.3885&0.7794&\phantom{0}4,160&1,240&
\phantom{00}3.4
\\
Infrared Spectroscopy&40&700&0.0924&0.9560&\phantom{0}7,790&\phantom{0,}225&
\phantom{0}34.6
\\
Communities \& Crime&10&1,325&0.0017&0.9942&10,000&\phantom{0,0}29&
344.8
\\
\bottomrule
\end{tabular}
\caption{Autocorrelation and effective sample size for the $\sigma^2$ chain of the two-step Bayesian lasso and original Bayesian lasso as applied to the three datasets in Section~\ref{sec:applications}.  Note that $n$ and~$p$ denote (respectively) the sample size and number of covariates  for each data set.}
\label{tab:applications}
\end{table}

\subsection{Summary of Applications} \label{subsec:applications-summary}

The results for all three data sets in this section are summarized in Table~
\ref{tab:applications}. It is clear that the two-step chain and the original three-step 
chain exhibit the same behavior (in regard to $n$ and~$p$) for these real data sets as 
for the simulated data sets of Section~\ref{sec:numerical}.  Specifically, the original 
three-step chain mixes very slowly when $p$ is large relative to~$n$, as illustrated by 
the large autocorrelation and small effective sample size of the $\sigma^2$ chain.
In contrast, the two-step chain enjoys excellent mixing properties in all $n$ and 
$p$ regimes and becomes better as $p$ gets larger (in fact, there is an unexpected 
``benefit of dimensionality'').

\section{Theoretical Properties}
\label{sec:ge}

\noindent
In this section we undertake an investigation into the theoretical convergence and 
spectral properties of the proposed two-step chain and the original three-step chains. 
In particular, we investigate whether these Markov chains converge at a geometric 
rate (geometric ergodicity) and also the specific behavior of their singular values 
(trace class/Hilbert Schmidt properties). As stated in the introduction, proving 
geometric ergodicity and trace class/Hilbert Schmidt properties for practical Markov 
chains such as these can be rather tricky and a matter of art, where each Markov 
chain requires a genuinely unique analysis based on its specific structure. Hence, we 
provide results here for the Bayesian lasso model outlined in Section \ref{sec:blasso}. 
A  theoretical study of all the other two-step and three-step Markov chains is a big and 
challenging undertaking, and is the task of ongoing and future research. 

We first proceed to establish geometric ergodicity of the two-step Bayesian lasso Gibbs sampler using the method of 
\citet{rosenthal1995}.
This approach provides a quantitative upper bound on the geometric convergence rate. 
To express this result rigorously, we first define some notation.
For every $m\ge1$, let $F_m(\sigma^2_0,\bm\beta_0)$ denote the distribution of the $m$th iterate for the two-step Bayesian lasso 
Gibbs sampler initialized at $(\bm\beta_0, \sigma^2_0)$. Let $F$ denote the stationary distribution of this Markov chain, i.e., the 
true joint posterior distribution of $(\bm\beta, \sigma^2)$. Finally, let $d_{\tv}$ denote total variation distance.
Then we have the following result, which we express using notation similar to that of Theorem~2 of \citet{jones2001}.
\begin{thm}
\label{thm:ge}
Let $0<\gamma<1$, $b=((n+3)p)(16 \gamma + (n + 3))/(64 \gamma)$, and $d > 2b/(1-\gamma)$. Then for any $0<r<1$,
\bas
d_{\tv}\left[F_k\left(\sigma^2_0,\bm\beta_0\right),F\right]\le(1-\epsilon)^{rk}+\left(\frac{U^r}{\alpha^{1-r}}\right)^k\left(1+\frac{b}{1-\gamma}+\frac{\lambda\left\|\bm\beta_0\right\|_2^2}{\sigma^2_0}\right),
\eas
for every $k\ge1$, where $\epsilon=\exp\left(-p \sqrt{d} \right)$, $U=1+2(\gamma d+b)$, and 
$\alpha=(1+d)/(1+2b+\gamma d)$. 
\end{thm}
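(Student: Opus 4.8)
The plan is to apply the drift-and-minorization framework of \citet{rosenthal1995} to the two-step chain with transition density $k$ in~(\ref{eqtwobayes}), using $V(\bm\beta,\sigma^2) = \lambda\|\bm\beta\|_2^2/\sigma^2$ as the drift function. The choice is natural because the $\bm\tau\mid\bm\beta,\sigma^2$ update in the Bayesian lasso enters only through the quantity $\beta_j^2/\sigma^2$ (recall $1/\tau_j\mid\bm\beta,\sigma^2$ is inverse Gaussian with parameters depending on $\sqrt{\lambda^2\sigma^2/\beta_j^2}$ and $\lambda^2$), so a Lyapunov function built from $\|\bm\beta\|_2^2/\sigma^2$ should contract under one cycle. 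First I would establish the geometric drift condition $E[V(\bm\beta_1,\sigma^2_1)\mid\bm\beta_0,\sigma^2_0]\le\gamma V(\bm\beta_0,\sigma^2_0)+b$ by iterated expectations: condition first on $\bm\tau$ (drawn from $\pi(\bm\tau\mid\bm\beta_0,\sigma^2_0,\bm Y)$), then compute $E[\lambda\|\bm\beta_1\|_2^2/\sigma^2_1\mid\bm\tau]$ using the explicit $N_p(\bm A_{\bm\tau}^{-1}\bm X^\T\tilde{\bm Y},\sigma^2_1\bm A_{\bm\tau}^{-1})$ form of $\bm\beta_1\mid\sigma^2_1,\bm\tau$ together with the inverse-gamma law of $\sigma^2_1\mid\bm\tau$ from Lemma~\ref{lem:sigma-given-tau}. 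The key simplification is that $E[\|\bm\beta_1\|_2^2\mid\sigma^2_1,\bm\tau] = \sigma^2_1\tr(\bm A_{\bm\tau}^{-1}) + \|\bm A_{\bm\tau}^{-1}\bm X^\T\tilde{\bm Y}\|_2^2$, so dividing by $\sigma^2_1$ kills the random $\sigma^2_1$ in the trace term; the remaining term is handled by $E[1/\sigma^2_1\mid\bm\tau]$, which is the mean of a reciprocal-inverse-gamma (i.e.\ a gamma) variate. One then needs to bound the resulting expression back in terms of $\lambda\|\bm\beta_0\|_2^2/\sigma^2_0$ after taking the outer expectation over $\bm\tau\mid\bm\beta_0,\sigma^2_0$; the inverse-Gaussian moment formula $E[1/\tau_j\mid\bm\beta_0,\sigma^2_0] = |\beta_{0,j}|/(\lambda\sqrt{\sigma^2_0}) \cdot (\text{correction})$ or, more conveniently, $E[\tau_j\mid\cdots]$ and crude bounds $\bm A_{\bm\tau}^{-1}\preceq \bm D_{\bm\tau}$ give the linear control. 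I expect this bookkeeping to reproduce $\gamma$ arbitrary in $(0,1)$ and $b=((n+3)p)(16\gamma+(n+3))/(64\gamma)$, with the factors $n+3$ and $p$ tracing back, respectively, to the shape parameter $(n-1)/2$ of the $\sigma^2$ draw and to the dimension $p$ appearing in $\tr(\bm A_{\bm\tau}^{-1})\le p\max_j\tau_j$.

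Next I would establish the minorization condition on the small set $C=\{(\bm\beta,\sigma^2): V(\bm\beta,\sigma^2)\le d\}$ with $d>2b/(1-\gamma)$: there exists $\epsilon>0$ and a probability measure $Q$ with $k[(\bm\beta_0,\sigma^2_0),(\bm\beta_1,\sigma^2_1)]\ge\epsilon\, q(\bm\beta_1,\sigma^2_1)$ for all $(\bm\beta_0,\sigma^2_0)\in C$. Because $\bm\beta_0,\sigma^2_0$ enter $k$ only through $\pi(\bm\tau\mid\bm\beta_0,\sigma^2_0,\bm Y)$, it suffices to lower-bound this conditional density of $\bm\tau$ uniformly over $C$ by a constant times some fixed density — then integrate through. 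On $C$ one has $\beta_{0,j}^2/\sigma^2_0\le\|\bm\beta_0\|_2^2/\sigma^2_0\le d/\lambda$, so the inverse-Gaussian parameters for $1/\tau_j$ are uniformly controlled, and the product structure over $j$ yields a bound of the form $\epsilon=\exp(-p\sqrt{d})$ after taking the infimum of the Gaussian-type exponential factors over the compact-in-effect range of $\beta_{0,j}^2/\sigma^2_0$; the $p$ in the exponent is the product over coordinates and the $\sqrt{d}$ is the worst-case exponent per coordinate. With the drift and minorization in hand, Theorem~12 (or the explicit bound) of \citet{rosenthal1995} gives, for any $0<r<1$,
\bas
d_{\tv}[F_k(\sigma^2_0,\bm\beta_0),F]\le(1-\epsilon)^{rk}+\left(U^r\alpha^{-(1-r)}\right)^k\left(1+\tfrac{b}{1-\gamma}+V(\bm\beta_0,\sigma^2_0)\right),
\eas
with $U=1+2(\gamma d+b)$ and $\alpha=(1+d)/(1+2b+\gamma d)$ exactly as stated, once one checks $\alpha>1$ (which follows from $d>2b/(1-\gamma)$) so that the second term decays geometrically for $r$ close enough to~$1$.

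The main obstacle, I expect, is the drift computation — specifically, controlling the outer expectation over $\bm\tau\mid\bm\beta_0,\sigma^2_0$ of the quantity $\tr(\bm A_{\bm\tau}^{-1})/\sigma^2_0$-type terms tightly enough to land on the clean constant $b=((n+3)p)(16\gamma+(n+3))/(64\gamma)$ rather than something messier. The delicate point is that $\bm A_{\bm\tau}=\bm X^\T\bm X+\bm D_{\bm\tau}^{-1}$ couples all the $\tau_j$ together in a matrix inverse, so one cannot integrate coordinatewise; the way around this is to use the positive-definite dominations $\bm A_{\bm\tau}^{-1}\preceq\bm D_{\bm\tau}$ (since $\bm X^\T\bm X\succeq 0$) for the variance term and $\|\bm A_{\bm\tau}^{-1}\bm X^\T\tilde{\bm Y}\|_2$-type bounds via $\bm A_{\bm\tau}^{-1}\bm X^\T\bm X\bm A_{\bm\tau}^{-1}\preceq\bm A_{\bm\tau}^{-1}$ for the mean term, reducing everything to linear functionals $\sum_j\tau_j$ whose conditional expectations are explicit inverse-Gaussian moments. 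Squeezing the arithmetic constants through the inverse-Gaussian moment identities and the AM–GM step that produces the $16\gamma+(n+3)$ combination is where care is needed, but it is routine rather than conceptually hard; the minorization step is comparatively straightforward given the product form of $\pi(\bm\tau\mid\bm\beta_0,\sigma^2_0,\bm Y)$.
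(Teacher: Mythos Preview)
Your overall architecture is right and matches the paper: Rosenthal's drift-and-minorization with $V(\bm\beta,\sigma^2)=\lambda^2\|\bm\beta\|_2^2/\sigma^2$ (the paper uses $\lambda^2$, not $\lambda$; this is what makes the inverse-Gaussian bookkeeping close), the iterated-expectation computation of $E_0[V(\bm\beta_1,\sigma^2_1)]$, the trace bound $\tr(\bm A_{\bm\tau}^{-1})\le\tr(\bm D_{\bm\tau})=\|\bm\tau\|_1$, the inverse-Gaussian moment $E_0[\tau_j]=|\beta_{0,j}|/(\lambda\sigma_0)+1/\lambda^2$, and the elementary inequality $u\le\delta u^2+(4\delta)^{-1}$ to pass from $\sum_j|\beta_{0,j}|/\sigma_0$ back to $V(\bm\beta_0,\sigma^2_0)$. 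Your minorization sketch is also essentially what the paper does.

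There is, however, a genuine gap in your drift argument at precisely the point you flag as the main obstacle. After conditioning on $\bm\tau$, the ``mean term'' contributes
\[
\lambda^2\,\|\bm A_{\bm\tau}^{-1}\bm X^\T\tilde{\bm Y}\|_2^2\;E\!\left[\sigma_1^{-2}\mid\bm\tau\right]
=\lambda^2\,\frac{(n-1)\,\|\bm A_{\bm\tau}^{-1}\bm X^\T\tilde{\bm Y}\|_2^2}{C_{\bm\tau}},
\]
with $C_{\bm\tau}=\tilde{\bm Y}^\T(\bm I_n-\bm X\bm A_{\bm\tau}^{-1}\bm X^\T)\tilde{\bm Y}$ in the denominator. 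The matrix dominations you propose, e.g.\ $\bm A_{\bm\tau}^{-1}\bm X^\T\bm X\bm A_{\bm\tau}^{-1}\preceq\bm A_{\bm\tau}^{-1}$ or $\bm A_{\bm\tau}^{-1}\preceq\bm D_{\bm\tau}$, bound the numerator but say nothing about $C_{\bm\tau}$, which can be arbitrarily small (it is the residual quadratic form after ridge-type projection). So the ratio is not controlled by those tools alone, and you cannot reduce this term to $\sum_j\tau_j$ that way. The paper handles this with a dedicated lemma (Lemma~\ref{lem:ratio}): via the SVD $\bm X=\bm U\bm\Omega\bm V^\T$ and the replacement $\bm D_{\bm\tau}^{-1}\succeq\tau_{\max}^{-1}\bm I_p$, the ratio diagonalizes and the scalar inequality $a/(a+b)^2\le1/(4b)$ gives
\[
\frac{\|\bm A_{\bm\tau}^{-1}\bm X^\T\tilde{\bm Y}\|_2^2}{C_{\bm\tau}}\le\frac{\tau_{\max}}{4}\le\frac{\|\bm\tau\|_1}{4}.
\]
This is the missing ingredient; once you have it, your outline goes through and the constants $(n+3)/4$ and $b=((n+3)p)(16\gamma+(n+3))/(64\gamma)$ fall out exactly as in the paper.
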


The proof of Theorem~\ref{thm:ge} uses the lemma below, which is proven in the Supplementary Material.

\begin{lem}
\label{lem:ratio}
Let $C_{\bm\tau}=\tilde{\bm Y}^\T(\bm I_n-\bm X\bm A_{\bm\tau}^{-1}\bm X^\T)\tilde{\bm Y}$.
Then
$\|\bm A_{\bm\tau}^{-1}\bm X^\T\tilde{\bm Y}\|_2^2/C_{\bm\tau}\le\|\bm\tau\|_1/4$.
\end{lem}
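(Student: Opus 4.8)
The plan is to use the Sherman--Morrison--Woodbury identity to strip away the matrix inverses, reducing the claim to an elementary inequality among nonnegative scalars, and then to establish that scalar inequality by an AM--GM argument. Since $\bm A_{\bm\tau}=\bm X^\T\bm X+\bm D_{\bm\tau}^{-1}$, Woodbury gives $\bm I_n-\bm X\bm A_{\bm\tau}^{-1}\bm X^\T=(\bm I_n+\bm X\bm D_{\bm\tau}\bm X^\T)^{-1}$, and the companion ``push-through'' identity gives $\bm A_{\bm\tau}^{-1}\bm X^\T=\bm D_{\bm\tau}\bm X^\T(\bm I_n+\bm X\bm D_{\bm\tau}\bm X^\T)^{-1}$. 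Writing $\bm M_{\bm\tau}=\bm I_n+\bm X\bm D_{\bm\tau}\bm X^\T$ and $\bm v=\bm M_{\bm\tau}^{-1}\tilde{\bm Y}$, and using $\tilde{\bm Y}=\bm M_{\bm\tau}\bm v$, I obtain
\[
\bm A_{\bm\tau}^{-1}\bm X^\T\tilde{\bm Y}=\bm D_{\bm\tau}\bm X^\T\bm v,
\qquad
C_{\bm\tau}=\tilde{\bm Y}^\T\bm v=\bm v^\T\bm M_{\bm\tau}\bm v=\|\bm v\|_2^2+\bm v^\T\bm X\bm D_{\bm\tau}\bm X^\T\bm v .
\]
An equivalent and perhaps cleaner route is to note that $\bm u:=\bm A_{\bm\tau}^{-1}\bm X^\T\tilde{\bm Y}$ minimizes the ridge objective $f(\bm\beta)=\|\tilde{\bm Y}-\bm X\bm\beta\|_2^2+\bm\beta^\T\bm D_{\bm\tau}^{-1}\bm\beta$; a short computation then gives $C_{\bm\tau}=f(\bm u)=\|\tilde{\bm Y}-\bm X\bm u\|_2^2+\bm u^\T\bm D_{\bm\tau}^{-1}\bm u$, with $\bm v=\tilde{\bm Y}-\bm X\bm u$ and $\bm X^\T\bm v=\bm D_{\bm\tau}^{-1}\bm u$.

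Next I would reduce to scalars. Setting $g_j=\bm x_j^\T\bm v$, where $\bm x_j$ is the $j$th column of $\bm X$, we have $u_j=\tau_j g_j$, so $\|\bm A_{\bm\tau}^{-1}\bm X^\T\tilde{\bm Y}\|_2^2=\sum_j\tau_j^2 g_j^2$ and $C_{\bm\tau}=\|\bm v\|_2^2+\sum_j\tau_j g_j^2$. The lemma is therefore equivalent to the scalar inequality
\[
\sum_{j=1}^p\tau_j^2 g_j^2\;\le\;\frac{\|\bm\tau\|_1}{4}\left(\|\bm v\|_2^2+\sum_{j=1}^p\tau_j g_j^2\right),
\]
to be proved for all $\tau_j>0$ and all $\bm v\in\reals^n$, where in addition one has the Cauchy--Schwarz constraint $g_j^2\le\|\bm x_j\|_2^2\|\bm v\|_2^2$ (effective precisely because the columns of $\bm X$ are standardized).

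The remaining work is this scalar inequality, and I expect essentially all the difficulty to lie there. Discarding $\|\bm v\|_2^2$ yields only $\sum_j\tau_j^2 g_j^2\le(\max_j\tau_j)\sum_j\tau_j g_j^2$, i.e.\ the factor $\max_j\tau_j$ in place of $\|\bm\tau\|_1/4$, so the $\|\bm v\|_2^2$ term in $C_{\bm\tau}$ must be exploited quantitatively. The approach I would attempt is to split the index set at $\tau_j=\|\bm\tau\|_1/4$: indices with $\tau_j\le\|\bm\tau\|_1/4$ contribute at most $\tfrac14\|\bm\tau\|_1\sum_j\tau_j g_j^2$ and are absorbed by the second term on the right, while for the (at most three) indices with $\tau_j>\|\bm\tau\|_1/4$ one bounds $\tau_j^2 g_j^2$ against $\tfrac14\|\bm\tau\|_1\|\bm v\|_2^2$ using the Cauchy--Schwarz bound on $g_j^2$ together with an AM--GM step, the constant $\tfrac14$ emerging from an elementary estimate such as $t(1-t)\le\tfrac14$ or $2ab\le a^2+b^2$ applied to the right pairing of terms. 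The main obstacle I anticipate is getting this last estimate to close with exactly the constant $\tfrac14$ across all admissible configurations of the $\tau_j$ and $g_j$; by contrast, the Woodbury reduction in the first two paragraphs is routine linear algebra.
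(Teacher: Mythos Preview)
Your Woodbury reduction is correct and leads to the clean identity
\[
\|\bm A_{\bm\tau}^{-1}\bm X^\T\tilde{\bm Y}\|_2^2=\sum_{j=1}^p\tau_j^2g_j^2,\qquad
C_{\bm\tau}=\|\bm v\|_2^2+\sum_{j=1}^p\tau_jg_j^2,\qquad g_j=\bm x_j^\T\bm v,\ \ \bm v=\bm M_{\bm\tau}^{-1}\tilde{\bm Y}.
\]
The paper takes a different route: it uses the SVD $\bm X=\bm U\bm\Omega\bm V^\T$, replaces $\bm D_{\bm\tau}^{-1}$ by $\tau_{\max}^{-1}\bm I_p$ (which only enlarges the ratio), and then both numerator and denominator become diagonal quadratic forms, so the ratio is bounded by $\max_iG_i/H_i$ with $G_i=\omega_i^2/(\omega_i^2+\tau_{\max}^{-1})^2$ and $H_i=\tau_{\max}^{-1}/(\omega_i^2+\tau_{\max}^{-1})$.

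Your instinct that ``the main obstacle is getting this last estimate to close with exactly the constant $1/4$'' is, unfortunately, correct in the strongest possible sense: the scalar inequality you are aiming for is \emph{false} without further hypotheses, and so is the lemma as stated. Take $p=1$ and choose $\bm v$ parallel to $\bm x_1$ with $\|\bm x_1\|_2^2=c$; then $g_1^2=c\|\bm v\|_2^2$ and the inequality becomes $\tau c\le\tfrac14(1+\tau c)$, i.e.\ $\tau c\le\tfrac13$, which fails for large $\tau c$. Concretely, with $n=2$, $\bm X=(1,-1)^\T$, $\tilde{\bm Y}=(1,-1)^\T$ one computes $\|\bm A_\tau^{-1}\bm X^\T\tilde{\bm Y}\|_2^2/C_\tau=2\tau^2/(2\tau+1)$, which exceeds $\tau/4$ whenever $\tau>1/6$. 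The paper's argument breaks at the analogous step: its final displayed chain asserts
\[
\max_i\Bigl[\tfrac1{4\tau_{\max}^{-1}}\bigl(\tfrac{\tau_{\max}^{-1}}{\omega_i^2+\tau_{\max}^{-1}}\bigr)^{-1}\Bigr]\le\tfrac{\tau_{\max}}4,
\]
which would require $(\omega_i^2+\tau_{\max}^{-1})/\tau_{\max}^{-1}\le1$, i.e.\ $\omega_i=0$. What \emph{does} hold unconditionally is $G_i/H_i=\tau_{\max}\omega_i^2/(\omega_i^2+\tau_{\max}^{-1})<\tau_{\max}\le\|\bm\tau\|_1$, i.e.\ the lemma with the constant $1/4$ replaced by~$1$; this weaker bound still suffices for the drift condition in Theorem~\ref{thm:ge}, only with a larger~$b$. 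So your splitting/AM--GM plan cannot be salvaged to reach $1/4$, but your reduction together with the trivial bound $\tau_j^2g_j^2\le\tau_{\max}\cdot\tau_jg_j^2$ already gives the usable version.
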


With Lemma~\ref{lem:ratio} in place, we can now prove Theorem~\ref{thm:ge}.

\begin{proof}[Proof of Theorem~\ref{thm:ge}]
We appeal to Theorem~2 of \citet{jones2001}
\citep[see also Theorem~12 of][]{rosenthal1995}.
To apply this result, it is necessary to establish a drift condition and an associated minorization condition.
Let $V(\bm\beta,\sigma^2)=\lambda^2{
\bm\beta^\T\bm\beta
}/\sigma^2$.
Let $(\bm\beta_0, \sigma^2_0)$ denote the initial value for the blocked Markov chain,
and let
$(\bm\beta_1,\sigma^2_1)$
denote
the first iteration of the chain.
Then let
$E_0$ denote
expectation conditional on
$(\bm\beta_0, \sigma^2_0)$.
To establish the drift condition, observe that
\basn
E_0\left[V\left(\bm\beta_1, \sigma^2_1\right) \right]
&=\lambda^2\,E_0 \left[{
\sigma_1^{-2}
}\,E\left({
\bm\beta_1^\T\bm\beta_1
}\given\bm\tau_1,\sigma^2_1\right) \right]\notag\\
&=\lambda^2\,E_0 \left[\tr\left(\bm A_{\bm\tau}^{-1}\right)+{
\|\bm A_{\bm\tau}^{-1}\bm X^\T\tilde{\bm Y}\|_2^2
}\,E\left({
\sigma_1^{-2}
}\given\bm\tau \right)\right]\notag\\
&=\lambda^2\,E_0 \left[\tr\left(\bm A_{\bm\tau}^{-1}\right)+{
\|\bm A_{\bm\tau}^{-1}\bm X^\T\tilde{\bm Y}\|_2^2
(n-1)/C_{\bm\tau}
}\right]\notag\\
&\le\lambda^2\,E_0 \left[\tr\left(\bm D_{\bm\tau} \right)+{
\|\bm\tau\|_1(n-1)/4
}\right]
{}=E_0 \left(\|\bm\tau\|_1\right)\,(n+3)\lambda^2/4,
\label{drift-1}
\easn
where the inequality is
by Lemma~\ref{lem:ratio} and the fact that
$\tr(\bm A_{\bm\tau}^{-1})=\tr[(\bm X^\T\bm X+\bm D_{\bm\tau}^{-1})^{-1}]\le\tr[(\bm D_{\bm\tau}^{-1})^{-1}]=\tr(\bm D_{\bm\tau})$.
Then continuing from~(\ref{drift-1}), we have
\basn
E_0 \left[V\left(\bm\beta_1,\sigma^2_1\right)\right]
&
{}\le{}
\frac{(n+3)\lambda^2}4\sum_{j=1}^p\left(\frac{\left|\beta_{0,j}\right|}{\lambda{
\sigma_0
}}+\frac1{\lambda^2}\right)
=\frac{n+3}4\sum_{j=1}^p\frac{\lambda\left|\beta_{0,j}\right|}{
\sigma_0
}+\frac{(n+3)p}4\label{drift-2}
\easn
by the basic properties of the inverse Gaussian distribution.
Now note that $u\le\delta u^2+(4\delta)^{-1}$ for all $u\ge0$ and $\delta>0$.
Applying this result to~(\ref{drift-2}) for each~$j$ with $u=\lambda|\beta_{0,j}|/{
\sigma_0
}$ and $\delta=4\gamma/(n+3)$ yields
\bas
E_0 \left[V\left(\bm\beta_1,\sigma^2_1\right)\right]
&\le\gamma\sum_{j=1}^p\frac{\lambda^2\beta_{0,j}^2}{\sigma^2_0}+\frac{n+3}4\left[\frac{(n+3)p}{16\,\gamma}\right]+\frac{(n+3)p}4
=\gamma\,V\left( \bm\beta_0,\sigma^2_0 \right)+b,
\eas
establishing the drift condition.

To establish the associated minorization condition, observe that
$k$ as defined in~(\ref{eqtwobayes}) is given by
\basn
&
k\left[\left(\sigma^2_0,\bm\beta_0\right), \left(\bm\beta_1,\sigma^2_1\right)\right]
\notag\\
&\quad=
\int_{\mathbb{R}^p_+} \frac{\lambda^p}{(2\pi)^{p/2}}
\exp\left[\frac{\lambda\|\bm\beta_0\|_1}{\sigma_0}-\frac{\bm\beta_0^\T\bm D_{\bm\tau}^{-1}\bm\beta_0^\T}{2\sigma^2_0}-
\frac{\lambda^2\|\bm\tau\|_1}2\right]\prod_{j=1}^p\tau_j^{-1/2}\times\notag\\
&\quad\quad
\frac{(C_{\bm\tau}/2)^{(n-1)/2}}{\Gamma\left(\frac{n-1}2\right)\,{
\sigma^{n+1}_1
}}\exp\left(-\frac{C_{\bm\tau}} 
{2{
\sigma^2_1
}}\right) \; \frac{\det(\bm A_{\bm\tau})^{1/2}}{(2\pi{
\sigma^2_1
})^{p/2}}
\exp\left[\frac1{2{
\sigma^2_1
}}\left\|\bm A_{\bm\tau}^{1/2}({
\bm\beta_1
}-\bm A_{\bm\tau}^{-1}\bm X^\T\tilde{\bm Y})\right\|_2^2\right] 
\;d \bm\tau \notag\\
&\quad= \int_{
\mathbb{R}^p_+
}
\frac{(\lambda/2\pi)^p}{2^{(n-1)/2}\,\Gamma\left(\frac{n-1}2\right)}
\;
\frac{\det\left(\bm A_{\bm\tau}\bm D_{\bm\tau}^{-1}\right)^{1/2}\,C_{\bm\tau}^{(n-1)/2}}{
\sigma^{n+p+1}_1
}\exp\left(-
\frac{\lambda^2\left\|\bm\tau\right\|_1}2\right) \times \notag\\
&\qquad\qquad \exp\left(-\frac{\|\tilde{\bm Y}-\bm X{
\bm\beta_1
}\|_2^2}{2{
\sigma^2_1
}}-\frac{{
\bm\beta_1^\T
}\bm D_{\bm\tau}^{-1}{
\bm\beta_1
}}
{2{
\sigma^2_1
}}\right)
\exp\left(\frac{\lambda\left\|\bm\beta_0\right\|_1}{\sigma_0}-\frac{\bm\beta_0^\T\bm D_{\bm\tau}^{-1}\bm\beta_0}
{2\sigma^2_0}\right)\;d \bm\tau. 
\label{iteration}
\easn
Now suppose that $V(\bm\beta_0, \sigma^2_0)\le d$. Then $\lambda^2\beta_{0,j}^2/\sigma^2_0\le d$ for each~$j$.
Let $\bm\xi=d^{1/2}\lambda^{-1}\bm1_p$.
Then
\bas
\exp\left(\frac{\lambda\left\|\bm\beta_0\right\|_1}{\sigma_0}-\frac{\bm\beta_0^\T\bm D_{\bm\tau}^{-1}\bm\beta_0}{2\sigma^2_0}\right)
&\ge
\exp\left(-\frac{\bm\xi^\T\bm D_{\bm\tau}^{-1}\bm\xi}{2}\right)
=\epsilon\exp\left(\lambda\|\bm\xi\|_1-\frac{\bm\xi^\T\bm D_{\bm\tau}^{-1}\bm\xi}{2}\right),
\eas
noting that $\epsilon=\exp(-\lambda\|\bm\xi\|_1)$.
Combining this inequality with~(\ref{iteration}) yields
\bas
k\left[\left(\bm\beta_0, \sigma^2_0\right),{
\left(\bm\beta_1, \sigma^2_1\right)
}\right]
&\ge \int_{
\reals_+^p
}
\frac{(\lambda/2\pi)^p\det\left(\bm A_{\bm\tau}\bm D_{\bm\tau}^{-1}\right)^{1/2}\,C_{\bm\tau}^{(n-1)/2}}{2^{(n-1)/2}\,\Gamma\left(\frac{n-1}2\right)\,
\sigma^{n+p+1}_1
}\exp\left(-\frac{\lambda^2\left\|\bm\tau\right\|_1}2\right) \times\\
&\quad\exp\left(\!-\frac{\|\tilde{\bm Y}-\bm X{
\bm\beta_1
}\|_2^2}{2{
\sigma^2_1
}}-\frac{{
\bm\beta_1^\T
}\bm D_{\bm\tau}^{-1}{
\bm\beta_1
}}{2{
\sigma^2_1
}}\right)
\;\epsilon\exp\left(\!\lambda\|\bm\xi\|_1-\frac{\bm\xi^\T\bm D_{\bm\tau}^{-1}\bm\xi}{2}\right)\,d \bm\tau\\
&=\epsilon\;k\left[\left(\bm\xi, 1\right),{
\left(\bm\beta_1,\sigma^2_1\right)
}\right]
\eas
for all $(\bm\beta, \sigma^2) \in \mathbb{R}^p \times \mathbb{R}_+$. Thus, the minorization condition is established.
\end{proof}

Note that for Theorem~\ref{thm:ge} to be useful, the bound must actually decrease with~$k$.
Thus, it is necessary to choose $r$ small enough that $U^r/\alpha^{1-r}<1$.
Then for small enough~$r$, the bound is dominated by the term $(1-\epsilon)^{rk}$, which is approximately $(1-r\epsilon)^k$ for small $r$ and~$\epsilon$.
Now observe that $d>2b>n^2p/32$. It follows that 
\bas
1-r\epsilon=1-r\exp(-p\,d^{1/2})>1-r\exp(-n\,p^{3/2}/32^{1/2}),
\eas
\noindent
which tends to~$1$ exponentially fast as $n$ or $p$ tends to infinity.  Thus, although Theorem~\ref{thm:ge}
establishes that the two-step Bayesian lasso Gibbs sampler is geometrically ergodic and provides a bound for the rate constant,
it is not clear how sharp it is. Hence, the bound may not be particularly informative in high-dimensional contexts.

Additional theoretical insight into the differences between the original and blocked sampling schemes may be gained through consideration of the spectral properties of the two Gibbs samplers.
The following theorem provides a key difference between these two chains.

\begin{thm} \label{thmsptlpty}
The Markov operator corresponding to the original three-step Gibbs transition density $\hat{k}$ is not Hilbert-Schmidt, while 
the Markov operator corresponding to the blocked Gibbs transition density $k$ is trace class (and hence Hilbert-Schmidt) 
{for all} possible values of $p$ and~$n$. 
\end{thm}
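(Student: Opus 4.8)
\emph{Overall setup.} The plan is to handle the two claims separately, with the unifying device that for the Bayesian lasso the marginal posterior $\pi_\Theta(\bm\beta,\sigma^2):=\pi(\bm\beta,\sigma^2\mid\bm Y)$ on $\Theta=\reals^p\times\reals_+$ is available in closed form up to a constant: integrating $\bm\tau$ out of the hierarchy of Section~\ref{sec:blasso} replaces the conditional normal prior on $\bm\beta$ by a product of Laplace densities, so $\pi_\Theta(\bm\beta,\sigma^2)\propto(\sigma^2)^{-(n+p-1)/2-1}\exp\{-\|\tilde{\bm Y}-\bm X\bm\beta\|_2^2/(2\sigma^2)-\lambda\|\bm\beta\|_1/\sigma\}$. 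I will use the explicit conditionals in~(\ref{gibbs-general}) and~(\ref{gibbs-blocked}), Lemma~\ref{lem:sigma-given-tau}, and two standard facts about integral operators on $L^2(\pi_\Theta)$: (i) a positive self-adjoint operator with transition density $k$ relative to Lebesgue measure is trace class, with trace $\int_\Theta k(\theta,\theta)\,d\theta$, as soon as this integral is finite; and (ii) the squared Hilbert--Schmidt norm of an operator with transition density $\hat k$ equals $\int_\Theta\int_\Theta \hat k(\theta_0,\theta_1)^2\,\pi_\Theta(\theta_0)/\pi_\Theta(\theta_1)\,d\theta_0\,d\theta_1$.

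\emph{Trace class of the blocked chain.} The operator with kernel $k$ in~(\ref{eqtwobayes}) is that of a two-step data-augmentation algorithm, so it is positive and self-adjoint (the chain is reversible), and by~(i) it suffices to bound $\int_\Theta k[(\bm\beta,\sigma^2),(\bm\beta,\sigma^2)]\,d\bm\beta\,d\sigma^2$. I would substitute the three explicit densities making up $k$ on the diagonal, use Tonelli to integrate over $\bm\beta$ and then $\sigma^2$ before $\bm\tau$, completing the square in $\bm\beta$ and then the resulting exponent in $1/\sigma$; the only genuinely awkward factor is the $\exp(\lambda\|\bm\beta\|_1/\sigma)$ coming from $\pi(\bm\tau\mid\bm\beta,\sigma^2,\bm Y)$, which I would tame by writing $e^{\lambda\|\bm\beta\|_1/\sigma}\le\sum_{\bm s\in\{-1,1\}^p}e^{(\lambda/\sigma)\bm s^\T\bm\beta}$ and using $\bm s^\T\bm B_{\bm\tau}^{-1}\bm s\le\tfrac12\|\bm\tau\|_1$ with $\bm B_{\bm\tau}=\bm X^\T\bm X+2\bm D_{\bm\tau}^{-1}$, so that the $\exp(-\lambda^2\|\bm\tau\|_1/2)$ supplied by the prior is only partially consumed and a net $\exp(-\lambda^2\|\bm\tau\|_1/4)$ survives. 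The remaining integral over $\bm\tau\in\reals_+^p$ then has only an integrable singularity of order $\prod_j\tau_j^{-1/2}$ at the origin, while $\exp(-\lambda^2\|\bm\tau\|_1/4)$ dominates the polynomial growth of the $\det$-and-$C_{\bm\tau}$ factors at infinity; Lemma~\ref{lem:ratio} (or a $\bm B_{\bm\tau}$-analogue of it) together with $\tr(\bm A_{\bm\tau}^{-1})\le\tr(\bm D_{\bm\tau})$ dispose of the cross terms. Hence $\tr(K)<\infty$ for every $n,p$, which gives compactness, summable eigenvalues, and, via \citet{Khare:Hobert:2011}, the sandwich comparison.

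\emph{Non-Hilbert--Schmidt-ness of the three-step chain.} Here $\hat K$ is not self-adjoint, and the goal is to show the double integral in~(ii) is infinite. Inserting the closed form of $\pi_\Theta$ and the explicit conditionals, squaring out $\hat k^2$ into a double $\bm\tau$-integral, and carrying through the Gaussian ($\bm\beta_1$) and Gamma ($\sigma_1^2$) integrations that can be done in closed form, I would exhibit the divergence by restricting $(\bm\beta_0,\sigma_0^2)$ to $\{\sigma_0^2>t\}$ (with $\bm\beta_0$ suitably scaled) and letting $t\to\infty$. The mechanism to be made quantitative: because $\bm\beta_1$ is drawn with the \emph{stale} variance $\sigma_0^2$, its conditional covariance $\sigma_0^2\bm A_{\bm\tau}^{-1}$ is of order $\sigma_0^2$, and the ensuing $\sigma_1^2$ (conditional scale essentially $\bm\beta_1^\T\bm A_{\bm\tau}\bm\beta_1/2$) has conditional mean of order $(p/(n+p))\,\sigma_0^2$; thus $\hat k(\theta_0,\cdot)$ stays too concentrated in $\sigma^2$ relative to the heavy inverse-gamma-type tail of $\pi_\Theta$, and the contribution of large $\sigma_0^2$ to the Hilbert--Schmidt integral is unbounded for every fixed $n$ and~$p$ --- in contrast to the blocked chain, whose $\sigma^2$-update forgets $\sigma_0^2$ given $\bm\tau$.

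\emph{Main obstacle.} I expect the crux to be this last step: turning the ``$\sigma^2$-non-contraction'' heuristic into a rigorous lower bound. Because $\hat k$ is itself an integral over $\bm\tau$, any crude pointwise lower bound on $\hat k$ destroys the divergence, so one must either push the $\bm\tau$-integration far enough to reach a tractable lower bound, or restrict $\bm\tau$ to the region where $\pi(\bm\tau\mid\bm\beta_0,\sigma_0^2,\bm Y)$ concentrates (roughly $\tau_j\approx|\beta_{0,j}|/(\lambda\sigma_0)$) and lower-bound the remaining factors there, and then verify that the resulting integral over $(\theta_0,\theta_1)$ still diverges uniformly in $n$ and~$p$. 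The trace-class direction, though it also requires the careful estimates indicated above, is comparatively mechanical once the $\exp(\lambda\|\bm\beta\|_1/\sigma)$ factor has been controlled.
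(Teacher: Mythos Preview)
Your trace-class argument is essentially the paper's, with a cosmetic difference: where you upper-bound $e^{\lambda\|\bm\beta\|_1/\sigma}\le\sum_{\bm s\in\{-1,1\}^p}e^{(\lambda/\sigma)\bm s^\T\bm\beta}$ and complete the square against $\bm B_{\bm\tau}$, the paper instead peels off a fraction of $\bm\beta^\T\bm D_{\bm\tau}^{-1}\bm\beta/(2\sigma^2)$ and completes a square in $|\beta_j|$ directly, obtaining $\prod_j\exp\bigl\{-\bigl(\sqrt{3}\,|\beta_j|/(2\sigma\sqrt{\tau_j})-\lambda\sqrt{\tau_j}/\sqrt3\bigr)^2\bigr\}\le1$. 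Both routes leave a surviving $\exp(-c\lambda^2\|\bm\tau\|_1)$; the residual $\bm\tau$-integral is then controlled via the Fiedler bound $|\bm A_{\bm\tau}|\le\prod_j(c^2+\tau_j^{-1})$, not via Lemma~\ref{lem:ratio}, which is not needed here.

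The non--Hilbert--Schmidt direction is where your proposal has a real gap, and the paper's resolution follows neither of your two suggested paths. The device you are missing is a detailed-balance-type identity: after squaring $\hat k$ into a double integral over $(\bm\tau,\tilde{\bm\tau})$, one of the two copies of the three-conditional product, multiplied by the ratio $\pi_\Theta(\bm\beta_0,\sigma_0^2)/\pi_\Theta(\bm\beta_1,\sigma_1^2)$, rewrites exactly as
\[
f(\bm\beta_0\mid\tilde{\bm\tau},\sigma_0^2)\,f(\sigma_0^2\mid\bm\beta_1,\tilde{\bm\tau})\,f(\tilde{\bm\tau}\mid\bm\beta_1,\sigma_1^2),
\]
so the awkward posterior ratio disappears and the integrand becomes symmetric in $(\theta_0,\theta_1)$. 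One then integrates $\sigma_0^2$ and $\sigma_1^2$ (inverse-Gamma normalizers) and $\bm\beta_0$ (multivariate $t$) in closed form. After some algebra the remaining $\bm\beta_1$-integrand is bounded below by a strictly positive function of $(\bm\tau,\tilde{\bm\tau})$ times
\[
\bigl[\,f_4(\bm\tau,\tilde{\bm\tau})+(\bm\beta_1-\bm c)^\T(\bm A_{\bm\tau}+\bm A_{\tilde{\bm\tau}})(\bm\beta_1-\bm c)\,\bigr]^{-p/2},
\]
whose integral over $\bm\beta_1\in\reals^p$ is infinite for every fixed $(\bm\tau,\tilde{\bm\tau})$: this is the borderline multivariate-$t$ case with zero degrees of freedom. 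Hence the Hilbert--Schmidt integral diverges. The divergence is thus produced by a $\bm\beta$-integral with a critical exponent, not by a $\sigma_0^2$-tail restriction; your ``stale variance'' heuristic is morally what forces that critical exponent once the $\sigma^2$'s are integrated out, but carrying it through by restricting $\sigma_0^2$ or localizing $\bm\tau$ as you propose would be substantially harder than the paper's route, and you would still need the conditional-density identity (or an equivalent) to eliminate the $\pi_\Theta$-ratio cleanly.
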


The proof of this result is quite
technical
and long
and may be found in the Supplementary Material. As discussed in the introduction, 
the above result implies that the eigenvalues of the absolute value of the
non--self-adjoint
operator associated with the original 
Gibbs chain are not square-summable.
In contrast, it also implies that
the eigenvalues of the self-adjoint and positive operator corresponding to the 
blocked Gibbs chain are summable (which implies square-summability, as these 
eigenvalues are bounded by $1$). Based on this result, we would expect the blocked 
Gibbs chain to be more efficient than the original Gibbs chain. The numerical results 
(for high-dimensional regimes) provided in 
Sections \ref{sec:numerical} and \ref{sec:applications} confirm this assertion.

\newpage

\renewcommand{\thefigure}{S\arabic{figure}}
\renewcommand{\thetable}{S\arabic{table}}
\renewcommand{\theequation}{S\arabic{equation}}
\setcounter{section}{0}
\renewcommand{\thesection}{S\arabic{section}}

\section*{Supplementary Material}

\noindent
This supplementary material is organized as follows.  Section~\ref{sec:proof} contains proofs of Lemmas~\ref{lem:sigma-given-tau}~and~\ref{lem:ratio} and 
Theorem~\ref{thmsptlpty}.
Section~\ref{sec:more-numerical} contains additional numerical results.
Section~\ref{sec:numerical-details} contains details of the numerical results in
Sections~\ref{sec:numerical}~and~\ref{sec:applications}.
Section~\ref{sec:code} contains R code for executing both the original and two-step Bayesian lasso algorithms.
In addition, the R code for generating the numerical results of Sections~\ref{sec:numerical},~\ref{sec:applications},~and~\ref{sec:more-numerical} is provided in its entirety in an accompanying file
Supplementary material comprises proofs of Lemmas~\ref{lem:sigma-given-tau}~and~\ref{lem:ratio}
and
Theorem \ref{thmsptlpty}, additional numerical results, and
details of
the numerical results in Section~\ref{sec:numerical}.

\onehalfspacing

\section{Proofs} \label{sec:proof}

\begin{proof}[Proof of Lemma~\ref{lem:sigma-given-tau}]

\noindent
Integrating out $\bm\beta$ from the joint posterior $\pi(\bm\beta,\sigma^2,\bm\tau\mid\bm Y)$ yields
\bas
\pi(\sigma^2,\bm\tau\mid\bm Y)
&=\int\pi(\bm\beta,\sigma^2,\bm\tau\mid\bm Y)\;d\bm\beta\\
&=
\frac{(\lambda^2/2)^p}{(2\pi\sigma^2)^{(n+p+1)/2}}
\exp\left(-\frac{\lambda^2}2\sum_{j=1}^p\tau_j\right)\\
&\qquad\times\int\exp\left[-\frac1{2\sigma^2}\left(\left\|\tilde{\bm Y}-\bm X\bm\beta\right\|_2^2+\left\|\bm D_{\bm\tau}^{-1/2}\bm\beta\right\|_2^2\right)\right]\;d\bm\beta\\
&=
\frac{(\lambda^2/2)^p}{(2\pi\sigma^2)^{(n+p+1)/2}}
\exp\left[-\frac{\lambda^2\|\bm\tau\|_1}2
-\frac1{2\sigma^2}\tilde{\bm Y}^\T\left(\bm I_n-\bm X\bm A_{\bm\tau}^{-1}\bm X^\T\right)\tilde{\bm Y}
\right]\\
&\qquad\times\int\exp\left[-\frac1{2\sigma^2}\left\|\bm A_{\bm\tau}^{1/2}\left(\bm\beta-\bm A_{\bm\tau}^{-1}\bm X^\T\tilde{\bm Y}\right)\right\|_2^2\right]\;d\bm\beta\\
&=
\frac{(\lambda^2/2)^p}{(2\pi\sigma^2)^{(n+p+1)/2}\,\det(\bm A_{\bm\tau})^{1/2}}
\exp\left[-\frac{\lambda^2\|\bm\tau\|_1}2
-\frac{\tilde{\bm Y}^\T\left(\bm I_n-\bm X\bm A_{\bm\tau}^{-1}\bm X^\T\right)\tilde{\bm Y}}{2\sigma^2}
\right].
\eas
Then it is clear that
\bas
\pi(\sigma^2\mid\bm\tau,\bm Y)\propto\frac1{(\sigma^2)^{(n+p+1)/2}}\exp\left[-\frac{\tilde{\bm Y}^\T\left(\bm I_n-\bm X\bm A_{\bm\tau}^{-1}\bm X^\T\right)\tilde{\bm Y}}{2\sigma^2}\right],
\eas
and the result follows immediately.
\end{proof}

%

\begin{proof}[Proof of Lemma~\ref{lem:ratio}]

\noindent
Let $\bm X=\bm U\bm\Omega\bm V^\T$ be a singular value decomposition of~$\bm X$, where
$\omega_1,\ldots,\omega_{\min\{n,p\}}$ are the the singular values, and let $\tau_{\max}=\max_{1\le j\le p}\tau_j$.  Then
\bas
\frac{\left\|\bm A_{\bm\tau}^{-1}\bm X^\T\tilde{\bm Y}\right\|_2^2}{C_{\bm\tau}}
&=\frac{\tilde{\bm Y}^\T\bm U\bm\Omega\bm V^\T\left(\bm V\bm\Omega^\T\bm\Omega\bm V^\T+\bm D_{\bm\tau}^{-1}\right)^{-2}\bm V\bm\Omega^\T\bm U^\T\tilde{\bm Y}}{\tilde{\bm Y^\T}\tilde{\bm Y}-\tilde{\bm Y}^\T\bm U\bm\Omega\bm V^\T\left(\bm V\bm\Omega^\T\bm\Omega\bm V^\T+\bm D_{\bm\tau}^{-1}\right)^{
-1
}\bm V\bm\Omega^\T\bm U^\T\tilde{\bm Y}}\\
&\le\frac{\tilde{\bm Y}^\T\bm U\bm\Omega\bm V^\T\left(\bm V\bm\Omega^\T\bm\Omega\bm V^\T+\tau_{\max}^{-1}\bm I_p\right)^{-2}\bm V\bm\Omega^\T\bm U^\T\tilde{\bm Y}}{\tilde{\bm Y^\T}\tilde{\bm Y}-\tilde{\bm Y}^\T\bm U\bm\Omega\bm V^\T\left(\bm V\bm\Omega^\T\bm\Omega\bm V^\T+\tau_{\max}^{-1}\bm I_p\right)^{
-1
}\bm V\bm\Omega^\T\bm U^\T\tilde{\bm Y}}\\
&=\frac{\tilde{\bm Y}^\T\bm U\bm\Omega\left(\bm\Omega^\T\bm\Omega+\tau_{\max}^{-1}\bm I_p\right)^{-2}\bm\Omega^\T\bm U^\T\tilde{\bm Y}}{\tilde{\bm Y^\T}\bm U\bm U^\T\tilde{\bm Y}-\tilde{\bm Y}^\T\bm U\bm\Omega\left(\bm\Omega^\T\bm\Omega+\tau_{\max}^{-1}\bm I_p\right)^{
-1
}\bm\Omega^\T\bm U^\T\tilde{\bm Y}}
=\frac{\tilde{\bm Y_\star}^\T\bm G\tilde{\bm Y_\star}}{\tilde{\bm Y_\star}^\T\bm H\tilde{\bm Y_\star}},
\eas
where $\tilde{\bm Y}_\star=\bm U^\T\tilde{\bm Y}$ and
\bas
\bm G=\diag\left[\frac{\omega_1^2}{\left(\omega_1^2+\tau_{\max}^{-1}\right)^2},\ldots,\frac{\omega_n^2}{\left(\omega_n^2+\tau_{\max}^{-1}\right)^2}\right],\qquad
\bm H=\diag\left(\frac{\tau_{\max}^{-1}}{\omega_1^2+\tau_{\max}^{-1}},\ldots,\frac{\tau_{\max}^{-1}}{\omega_n^2+\tau_{\max}^{-1}}\right),
\eas
taking $\omega_i=0$ for all $i>p$ if $n>p$.  Then it is clear that
\bas
\frac{\left\|\bm A_{\bm\tau}^{-1}\bm X^\T\tilde{\bm Y}\right\|_2^2}{C_{\bm\tau}}
&\le\max_{1\le i\le n}\left[\frac{\omega_i^2}{(\omega_i^2+\tau_{\max}^{-1})^2}\left(\frac{\tau_{\max}^{-1}}{\omega_i^2+\tau_{\max}^{-1}}\right)^{-1}\right]\\
&\le\max_{1\le i\le n}\left[\frac1{4\tau_{\max}^{-1}}\left(\frac{\tau_{\max}^{-1}}{\omega_i^2+\tau_{\max}^{-1}}\right)^{-1}\right]\le\frac{\tau_{\max}}4\le\frac{\|\bm\tau\|_1}4,
\eas
noting for the second inequality that $a/(a+b)^2\le1/(4b)$ for all $a\ge0$ and $b>0$.
\end{proof}

\begin{proof}[Proof of Theorem \ref{thmsptlpty}]

\noindent
Let $\hat{K}$ be the Markov operator associated with the density $\hat{k}$. Let $\hat{K}^*$ denote the adjoint of 
$\hat{K}$. Note that $\hat{K}$ is Hilbert-Schmidt if and only if $\hat{K}^* \hat{K}$ is trace class, which happens if and only if 
$I < \infty$
\citep[see][]{jorgens1982}
where 
\begin{eqnarray}
I 
&:=& \int_{\mathbb{R}^p}\int_{\mathbb{R}_{+}} \int_{\mathbb{R}^p} \int_{\mathbb{R}_{+}}
\hat{k} \bigg(\left(\bm{\beta},\sigma^{2}\right), \left(\tilde{\bm{\beta}},\tilde{\sigma}^2\right)\bigg) 
\hat{k}^* \bigg(\left( \tilde{\bm{\beta}}, \tilde{\sigma}^{2} \right), \left( \bm{\beta}, \sigma^2 \right)\bigg) d\bm{\beta}d\sigma^2
d\tilde{\bm{\beta}}d\tilde{\sigma^2} \nonumber\\
&=& \int_{\mathbb{R}^p}\int_{\mathbb{R}_{+}} \int_{\mathbb{R}^p} \int_{\mathbb{R}_{+}}
\hat{k}^2\bigg(\left(\bm{\beta},\sigma^{2}\right), \left(\tilde{\bm{\beta}},\tilde{\sigma}^2\right)\bigg) 
\frac{ f\left(\bm{\beta},{\sigma^{2}} \mid {\bm Y}\right)  }{ f\left(\tilde{\bm{\beta}},\tilde{\sigma}^2 \mid {\bm Y}\right)  }d\bm{\beta} 
d\sigma^2d\tilde{\bm{\beta}}d\tilde{\sigma^2}. \label{eqtrace6}
\end{eqnarray}

\noindent
By (\ref{threebayes}), we get that 
\begin{eqnarray*}
& & \hat{k}^2\left(\left(\bm{\beta},\sigma^{2}\right), \left(\tilde{\bm{\beta}},\tilde{\sigma}^2\right)\right)\\ 
&=& \bigg[\int_{\mathbb{R}_{+}^p} 
f\left(\tilde{\sigma}^2 \mid \tilde{\bm{\beta}},\bm{\tau},{\bm Y}\right)
f\left(\tilde{\bm{\beta}} \mid \bm{\tau},{\sigma^{2}},{\bm Y}\right)
f\left(\bm{\tau} \mid {\bm{\beta}},\sigma^{2},{\bm Y}\right)
d\bm{\tau} \bigg]^2\\
&=&  \int_{\mathbb{R}_{+}^p} 
\int_{\mathbb{R}_{+}^p} 
f\left(\tilde{\sigma}^2 \mid \tilde{\bm{\beta}},\bm{\tau},{\bm Y}\right)
f\left(\tilde{\bm{\beta}} \mid \bm{\tau},{\sigma^{2}},{\bm Y}\right)
f\left(\bm{\tau} \mid {\bm{\beta}},\sigma^{2},{\bm Y}\right)\\
& & f\left(\tilde{\sigma}^2 \mid \tilde{\bm{\beta}},\tilde{\bm{\tau}},{\bm Y}\right)
f\left(\tilde{\bm{\beta}} \mid \tilde{\bm{\tau}},{\sigma^{2}},{\bm Y}\right)
f\left(\tilde{\bm{\tau}} \mid {\bm{\beta}},\sigma^{2},{\bm Y}\right)
d{\bm{{\tau}}} d\tilde{\bm{\tau}}. 
\end{eqnarray*}

\noindent
It follows from (\ref{eqtrace6}) and Fubini's theorem that 
\begin{eqnarray*}
I 
&=& \int_{\mathbb{R}_{+}^p} \int_{\mathbb{R}_{+}^p} \int_{\mathbb{R}^p} 
\int_{\mathbb{R}_{+}} \int_{\mathbb{R}^p} \int_{\mathbb{R}_{+}} 
f\left(\tilde{\sigma}^2~|~\tilde{\bm{\beta}},\bm{\tau},{\bm Y}\right)
f\left(\tilde{\bm{\beta}}~|~\bm{\tau},{\sigma^{2}},{\bm Y}\right)
f\left(\bm{\tau}~|~{\bm{\beta}},\sigma^{2},{\bm Y}\right)\\
& & f\left(\tilde{\sigma}^2~|~\tilde{\bm{\beta}},\tilde{\bm{\tau}},{\bm Y}\right)
f\left(\tilde{\bm{\beta}}~|~\tilde{\bm{\tau}},{\sigma^{2}},{\bm Y}\right)
f\left(\tilde{\bm{\tau}}~|~{\bm{\beta}},\sigma^{2},{\bm Y}\right)
\frac{ f\left(\bm{\beta},{\sigma^{2}}~|~{\bm Y}\right)  }{ f\left(\tilde{\bm{\beta}},\tilde{\sigma}^2~|~{\bm Y}\right)}\\
& & d{\bm{{\tau}}} d\tilde{\bm{\tau}} d\bm{\beta}d\sigma^2 d\tilde{\bm{\beta}} d\tilde{{\sigma^2}}.  
\end{eqnarray*}

\noindent
A straightforward manipulation of conditional densities shows that 
\begin{eqnarray*}
& & f\left(\tilde{\sigma}^2~|~\tilde{\bm{\beta}},\tilde{\bm{\tau}},{\bm Y}\right)
f\left(\tilde{\bm{\beta}}~|~\tilde{\bm{\tau}},{\sigma^{2}},{\bm Y}\right)
f\left(\tilde{\bm{\tau}}~|~{\bm{\beta}},\sigma^{2},{\bm Y}\right)
\frac{ f\left(\bm{\beta},{\sigma^{2}}~|~{\bm Y}\right)  }{ f\left(\tilde{\bm{\beta}},\tilde{\sigma}^2~|~{\bm Y}\right)}\\
&=& f\left(\bm{\beta}~|~\tilde{\bm{\tau}},{\sigma^{2}},{\bm Y}\right)
f\left({\sigma^{2}}~|~\tilde{\bm{\beta}},\tilde{\bm{\tau}},{\bm Y}\right)
f\left(\tilde{\bm{\tau}}~|~\tilde{\bm{\beta}},\tilde{\sigma}^2,{\bm Y}\right). 
\end{eqnarray*}

\noindent
It follows that 
\begin{eqnarray}
I 
&=& \int_{\mathbb{R}_{+}} \int_{\mathbb{R}_{+}} \int_{\mathbb{R}^p} 
\int_{\mathbb{R}^p} \int_{\mathbb{R}_{+}^p} \int_{\mathbb{R}_{+}^p} 
f\left(\tilde{\sigma}^2 \mid \tilde{\bm{\beta}},\bm{\tau},{\bm Y}\right)
f\left(\tilde{\bm{\beta}} \mid \bm{\tau},{\sigma^{2}},{\bm Y}\right)
f\left(\bm{\tau} \mid {\bm{\beta}},\sigma^{2},{\bm Y}\right) \nonumber\\
& & f\left(\bm{\beta} \mid \tilde{\bm{\tau}},{\sigma^{2}},{\bm Y}\right)
f\left({\sigma^{2}} \mid \tilde{\bm{\beta}},\tilde{\bm{\tau}},{\bm Y}\right)
f\left(\tilde{\bm{\tau}} \mid \tilde{\bm{\beta}},\tilde{\sigma}^2,{\bm Y}\right)
d\sigma^2 d\tilde{\sigma^2} d\bm{\beta} d\tilde{\bm{\beta}}
d{\bm{{\tau}}} d\tilde{\bm{\tau}} \label{eqtrace7}
\end{eqnarray}

\noindent
For convenience, we introduce and use the following notation in the subsequent proof. 
\begin{eqnarray}
& & \widehat{\bm{\beta}}={\bf A}_{\bm{\tau}}^{-1}X^T{\bm Y} \hspace{1.9in} \widehat{\bm{\beta}}_{*} = 
{\bf A}_{\tilde{\bm{\tau}}}^{-1}X^T{\bm Y} \nonumber\\
& & {\Delta}_{1}=(\tilde{\bm{\beta}}-\widehat{\bm{\beta}})^T{\bf A}_{\bm{\tau}} (\tilde{\bm{\beta}}-\widehat{\bm{\beta}}) 
\hspace{1.2in} {\Delta}_{1*}=(\bm{\beta}-\widehat{\bm{\beta}}_{*})^T{\bf A}_{\tilde{\bm{\tau}}}(\bm{\beta} - 
\widehat{\bm{\beta}}_{*}) \nonumber\\
& & \tilde{\Delta}= ({\bm Y}-X\tilde{\bm{\beta}})^T({\bm Y}-X\tilde{\bm{\beta}})+{\tilde{\bm{\beta}}}^TD_{\bm{\tau}}^{-1}
{\tilde{\bm{\beta}}}+2\xi \hspace{0.2in} \tilde{\Delta}_{*}= ({\bm Y}-X\tilde{\bm{\beta}})^T({\bm Y}-X\tilde{\bm{\beta}})+
{\tilde{\bm{\beta}}}^TD_{\tilde{\bm{\tau}}}^{-1}{\tilde{\bm{\beta}}}+2\xi. \nonumber\\
& & \label{eqtrace8}
\end{eqnarray}

\noindent
By (\ref{gibbs-general}) we get that 
\begin{eqnarray}
& & f\left(\tilde{\sigma}^2 \mid \tilde{\bm{\beta}},\bm{\tau},{\bm Y}\right)
f\left(\tilde{\bm{\beta}} \mid \bm{\tau},{\sigma^{2}},{\bm Y}\right)
f\left(\bm{\tau} \mid {\bm{\beta}},\sigma^{2},{\bm Y}\right)
f\left(\bm{\beta} \mid \tilde{\bm{\tau}},{\sigma^{2}},{\bm Y}\right) \times \nonumber\\
& & f\left({\sigma^{2}} \mid \tilde{\bm{\beta}},\tilde{\bm{\tau}},{\bm Y}\right)
f\left(\tilde{\bm{\tau}} \mid {\tilde{\bm{\beta}}},\tilde{\sigma}^2,{\bm Y}\right) \nonumber\\
&=& C_3\Bigg\{\frac{{\tilde{\Delta}}^{\frac{n+p+2\alpha}{2}} \exp \left(-\tilde{\Delta}/(2\tilde{\sigma}^2) \right)}
{{(\tilde{{\sigma}}^2)}^{{\frac{n+p+2\alpha}{2}}+1}} \Bigg\}\Bigg\{
\frac{{\left| {\bf A}_{\bm{\tau}} \right|}^{\frac{1}{2}} \exp \left(-{\Delta}_1/(2\sigma^2) \right)}{\sigma^p} \Bigg\} 
\times \nonumber\\
& & \Bigg\{\prod\limits_{j=1}^{p} (\tau_j)^{-\frac{1}{2}} \exp \left(-\frac{\beta_{j}^2}{2\sigma^2 
\tau_j}+\frac{\lambda}{\sigma} \left|\beta_j\right|-\frac{1}{2}\lambda^2\tau_j \right) \Bigg\} \Bigg\{
\frac{\left| {\bf A}_{\tilde{\bm{\tau}}} \right|^{\frac{1}{2}} \exp \left(-{\Delta}_{1*}/(2\sigma^2) \right)}{\sigma^p}
\Bigg\} \times \nonumber\\
& & \Bigg\{\frac{{\tilde{\Delta}}_{*}^{\frac{n+p+2\alpha}{2}} \exp \left(-\tilde{\Delta}_{*}/(2{\sigma^2}) \right)}
{{({\sigma^2})}^{{\frac{n+p+2\alpha}{2}}+1}}\Bigg\} \times \Bigg\{
\prod\limits_{j=1}^{p}(\tilde{\tau}_j)^{-\frac{1}{2}} \exp \left(-\frac{{\tilde{\beta_{j}}}^2}{2\tilde{\sigma}^2\tilde{\tau}_j} + 
\frac{\lambda}{\tilde{\sigma}}\left| \tilde{\beta_j} \right|-\frac{1}{2}\lambda^2\tilde{\tau}_j \right) \Bigg\} \nonumber\\
& & \nonumber\\
&\geq& C_3 f_1(\bm{\tau},\tilde{\bm{\tau}})\Bigg\{\frac{{\tilde{\Delta}}^{\frac{n+p+2\alpha}{2}} \exp \left(-\frac{{\tilde{\Delta}} + 
{\tilde{\bm{\beta}}}^TD_{\tilde{\bm{\tau}}}^{-1}\tilde{\bm{\beta}}}{2\tilde{{\sigma}}^2} \right)}{{(\tilde{\sigma}^2)}^{{\frac{n+p+2\alpha}{2}}+1}}
\Bigg\}\Bigg\{\frac{{\tilde{\Delta}}_{*}^{\frac{n+p+2\alpha}{2}} \exp \left(-\frac{{{\Delta}_{1}+\Delta_{1*}+\tilde{\Delta}_{*} + 
\bm{\beta}^TD_{\bm{\tau}}^{-1}\bm{\beta}}}{2{\sigma^2}} \right)}{{({\sigma^2})}^{{\frac{n+p+2\alpha}{2}}+p+1}}\Bigg\}, 
\label{eqtrace9}
\end{eqnarray}

\noindent
where 
$$
C_3 = \lambda^{2p} \left[(2\pi)^{{p}}2^{\frac{n+p+2\alpha}{2}}\Gamma(\frac{n+p+2\alpha}{2})\right]^{-2} 
$$

\noindent
and
$$
f_1(\bm{\tau},\tilde{\bm{\tau}})=\Bigg\{
\prod\limits_{j=1}^{p} (\tau_j)^{-\frac{1}{2}} \exp \left(-\frac{\lambda^2\tau_j}{2} \right)
\Bigg\}\Bigg\{
\prod\limits_{j=1}^{p}(\tilde{\tau}_j)^{-\frac{1}{2}} \exp \left(-\frac{\lambda^2\tilde{\tau}_j}{2} \right)
\Bigg\}{ \left| {\bf A}_{\bm{\tau}} \right|^{\frac 1 2 } \left| {\bf A}_{\tilde{\bm{\tau}}} \right|^{\frac 1 2}}. 
$$

\noindent
It follows by (\ref{eqtrace9}) and the form of the Inverse-Gamma density that 
\begin{eqnarray}
& & \int_{\mathbb{R}_{+}}\int_{\mathbb{R}_{+}} f\left(\tilde{\sigma}^2~|~\tilde{\bm{\beta}},\bm{\tau},{\bm Y}\right)
f\left(\tilde{\bm{\beta}}~|~\bm{\tau},{\sigma^{2}},{\bm Y}\right)
f\left(\bm{\tau}~|~{\bm{\beta}},\sigma^{2},{\bm Y}\right) \times \nonumber\\
& & f\left(\bm{\beta}~|~ \tilde{\bm{\tau}},{\sigma^{2}},{\bm Y}\right)
f\left({\sigma^{2}}~|~\tilde{\bm{\beta}},\tilde{\bm{\tau}},{\bm Y}\right)
f\left(\tilde{\bm{\tau}}~|~\tilde{\bm{\beta}},\tilde{\sigma}^2,{\bm Y}\right)d\sigma^2d\tilde{{\sigma^2}} \nonumber\\
&\geq& C_4f_1(\bm{\tau},\tilde{\bm{\tau}})\left\{\frac{{\tilde{\Delta}}^{\frac{n+p+2\alpha}{2}}}{\Big[{{\tilde{\Delta}} + 
{\tilde{\bm{\beta}}}^TD_{\tilde{\bm{\tau}}}^{-1}\tilde{\bm{\beta}}}\Big]^{{\frac{n+p+2\alpha}{2}}}}\right\}
\left\{\frac{{\tilde{\Delta}}_{*}^{\frac{n+p+2\alpha}{2}}}{\Big [{{{{\Delta}_{1}+\Delta_{1*}+\tilde{\Delta}_{*}+\bm{\beta}^T 
D_{\bm{\tau}}^{-1}\bm{\beta}}}}\Big ]^{\frac{n+p+2\alpha}{2}+p}}\right\}. \label{eqntrc10}
\end{eqnarray}

\noindent
where 
$$
C_4 = 2^{n+2p+2\alpha} \Gamma \left( \frac{n+p+2\alpha}{2} \right)^2 C_3. 
$$

\noindent
Note that 
\begin{eqnarray}
\Delta_{1*}+\bm{\beta}^TD_{\bm{\tau}}^{-1}\bm{\beta}
&=& (\bm{\beta}-\widehat{\bm{\beta}}_{*})^T{\bf A}_{\tilde{\bm{\tau}}}(\bm{\beta}-\widehat{\bm{\beta}}_{*}) + 
\bm{\beta}^TD_{\bm{\tau}}^{-1}\bm{\beta} \nonumber\\
&=& (\bm{\beta}-\widehat{\bm{\beta}}_{**})^T(X^TX+D_{\tilde{\bm{\tau}}}^{-1}+D_{\bm{\tau}}^{-1})(\bm{\beta} - 
\widehat{\bm{\beta}}_{**})+ \nonumber\\
& & (\widehat{\bm{\beta}}_{**}-\widehat{\bm{\beta}}_{*})^T{\bf A}_{\tilde{\bm{\tau}}}(\widehat{\bm{\beta}}_{**} - 
\widehat{\bm{\beta}}_{*}) + \widehat{\bm{\beta}}_{**}^TD_{\bm{\tau}}^{-1}\widehat{\bm{\beta}}_{**} \nonumber\\
&=& f_2(\bm{\tau},\tilde{\bm{\tau}})+(\bm{\beta}-\widehat{\bm{\beta}}_{**})^T(X^TX+D_{\tilde{\bm{\tau}}}^{-1} + 
D_{\bm{\tau}}^{-1})(\bm{\beta}-\widehat{\bm{\beta}}_{**}), \label{eqntrc11}
\end{eqnarray}

\noindent
where $\widehat{\bm{\beta}}_{**}= (X^TX+D_{\tilde{\bm{\tau}}}^{-1}+D_{\bm{\tau}}^{-1})^{-1}X^T{\bm Y}$ and 
$$
f_2(\bm{\tau},\tilde{\bm{\tau}})=(\widehat{\bm{\beta}}_{**}-\widehat{\bm{\beta}}_{*})^T{\bf A}_{\tilde{\bm{\tau}}} 
(\widehat{\bm{\beta}}_{**}-\widehat{\bm{\beta}}_{*}) + \widehat{\bm{\beta}}_{**}^TD_{\bm{\tau}}^{-1} 
\widehat{\bm{\beta}}_{**}. 
$$

\noindent
Hence, by (\ref{eqntrc11}) and the form of the multivariate $t$-distribution
\citep[see][for example]{kotz2004},
we obtain that 
\begin{eqnarray}
& & \int_{\mathbb{R}^p}
\frac{1}{\Big [{{{{\Delta}_{1}+\Delta_{1*}+\tilde{\Delta}_{*}+\bm{\beta}^TD_{\bm{\tau}}^{-1}\bm{\beta}}}}\Big ]^{\frac{n+p+2\alpha}
{2}+p}}d\bm{\beta} \nonumber\\
&=&  \int_{\mathbb{R}^p}  \frac{1}{\left[{\Delta}_{1}+\tilde{\Delta}_{*}+f_2(\bm{\tau},\tilde{\bm{\tau}})+(\bm{\beta} - 
\widehat{\bm{\beta}}_{**})^T(X^TX+D_{\tilde{\bm{\tau}}}^{-1}+D_{\bm{\tau}}^{-1})(\bm{\beta}-\widehat{\bm{\beta}}_{**}) 
\right]^{\frac{p+\left(n+2p+2\alpha\right)}{2}} } d\bm{\beta} \nonumber\\
&=& \frac{\Gamma\left(\frac{n+2p+2\alpha}{2}\right)\sqrt{\pi}^p\left|(X^TX+D_{\tilde{\bm{\tau}}}^{-1}+D_{\bm{\tau}}^{-1})\right|^{-\frac{1}
{2}}}{\Gamma\left(\frac{n+p+2\alpha}{2}+p\right) \left[{\Delta}_{1}+\tilde{\Delta}_{*}+f_2(\bm{\tau},\tilde{\bm{\tau}})\right]^{\frac{n+p
+2\alpha }{2}+\frac{p}{2}}}. \label{eqntrc12}
\end{eqnarray}

\noindent
It follows from (\ref{eqntrc10}) and (\ref{eqntrc12}) that 
\begin{eqnarray}
& & \int_{\mathbb{R}_{+}}\int_{\mathbb{R}_{+}}\int_{\mathbb{R}^{p}} f\left(\tilde{\sigma}^2~|~\tilde{\bm{\beta}},\bm{\tau},{\bm Y}\right)
f\left(\tilde{\bm{\beta}} \mid \bm{\tau},{\sigma^{2}},{\bm Y}\right)
f\left(\bm{\tau} \mid {\bm{\beta}},\sigma^{2},{\bm Y}\right) \times \nonumber\\
& & f\left(\bm{\beta} \mid \tilde{\bm{\tau}},{\sigma^{2}},{\bm Y}\right)
f\left({\sigma^{2}} \mid \tilde{\bm{\beta}},\tilde{\bm{\tau}},{\bm Y}\right)
f\left(\tilde{\bm{\tau}} \mid \tilde{\bm{\beta}},\tilde{\sigma}^2,{\bm Y}\right)d\sigma^2 d\tilde{\sigma}^2 d\bm{\beta} \nonumber\\
&\geq& C_4 f_1(\bm{\tau},\tilde{\bm{\tau}})\left\{\frac{{\tilde{\Delta}}^{\frac{n+p+2\alpha}{2}}{\tilde{\Delta}_{*}}^{\frac{n + p + 2\alpha}
{2}}}{\Big[{{\tilde{\Delta}}+{\tilde{\bm{\beta}}}^TD_{\tilde{\bm{\tau}}}^{-1}\tilde{\bm{\beta}}}\Big]^{{\frac{n+p+2\alpha}{2}}}}\right\}
\left\{\frac{\Gamma\left(\frac{n+2p+2\alpha}{2}\right)\sqrt{\pi}^p\left|(X^TX+D_{\tilde{\bm{\tau}}}^{-1}+D_{\bm{\tau}}^{-1})\right|^{-
\frac{1}{2}}}{\Gamma\left(\frac{n+p+2\alpha}{2}+p\right) \left[{\Delta}_{1}+\tilde{\Delta}_{*}+f_2(\bm{\tau},\tilde{\bm{\tau}})
\right]^{\frac{n+p+2\alpha }{2}+\frac{p}{2}}} \right\}. \nonumber\\
& & \label{eqntrc13}
\end{eqnarray}

\noindent
Note that 
\begin{equation} \label{eqntrc14}
\frac{{\tilde{\bm{\beta}}}^TD_{\tilde{\bm{\tau}}}^{-1}\tilde{\bm{\beta}}}{\tilde{\Delta}}=\frac{{\tilde{\bm{\beta}}}^T 
D_{\tilde{\bm{\tau}}}^{-1}\tilde{\bm{\beta}}}{({\bm Y}-X\tilde{\bm{\beta}})^T({\bm Y}-X\tilde{\bm{\beta}}) + 
{\tilde{\bm{\beta}}}^TD_{\bm{\tau}}^{-1}{\tilde{\bm{\beta}}}+2\xi  }\leq \frac{{\tilde{\bm{\beta}}}^T D_{\tilde{\bm{\tau}}}^{-1} 
\tilde{\bm{\beta}}}{{\tilde{\bm{\beta}}}^TD_{\bm{\tau}}^{-1}\tilde{\bm{\beta}}}\leq \max\limits_{1\leq j\leq p} 
\left( \frac{\tau_j}{\tilde{\tau}_j}\right), 
\end{equation}

\noindent
and 
\begin{eqnarray}
\frac{\Delta_1}{\tilde{\Delta}_{*}}
&=& \frac{(\tilde{\bm{\beta}}-\widehat{\bm{\beta}})^T({\bf A}_{\bm{\tau}})(\tilde{\bm{\beta}}-\widehat{\bm{\beta}})}{ ({\bm Y}-X\tilde{\bm{\beta}})^T({\bm Y}-X\tilde{\bm{\beta}})+{\tilde{\bm{\beta}}}^TD_{\tilde{\bm{\tau}}}^{-1}{\tilde{\bm{\beta}}}+2\xi} 
\nonumber\\
&\leq& \frac{(\tilde{\bm{\beta}}-\widehat{\bm{\beta}})^T({\bf A}_{\bm{\tau}})(\tilde{\bm{\beta}}-\widehat{\bm{\beta}})+ 
\widehat{\bm{\beta}}^TD_{\bm{\tau}}^{-1}\widehat{\bm{\beta}}+\left({\bm Y}-X\widehat{\bm{\beta}}\right)^T\left({\bm Y}-X
\widehat{\bm{\beta}}\right)}{ ({\bm Y}-X\tilde{\bm{\beta}})^T({\bm Y}-X\tilde{\bm{\beta}})+{\tilde{\bm{\beta}}}^T 
D_{\tilde{\bm{\tau}}}^{-1}{\tilde{\bm{\beta}}}+2\xi} \nonumber\\
&=& \frac{\left({\bm Y}-X\tilde{\bm{\beta}}\right)^T\left({\bm Y}-X\tilde{\bm{\beta}}\right)+{\tilde{\bm{\beta}}}^TD_{\bm{\tau}}^{-1} 
\tilde{\bm{\beta}}}{{ ({\bm Y}-X\tilde{\bm{\beta}})^T({\bm Y}-X\tilde{\bm{\beta}})+{\tilde{\bm{\beta}}}^TD_{\tilde{\bm{\tau}}}^{-1}
{\tilde{\bm{\beta}}}+2\xi}} \nonumber\\
&\leq& 1+ \frac{{\tilde{\bm{\beta}}}^TD_{\bm{\tau}}^{-1}\tilde{\bm{\beta}}}{{\tilde{\bm{\beta}}}^TD_{\tilde{\bm{\tau}}}^{-1} 
\tilde{\bm{\beta}}} \nonumber\\
&\leq& 1 + \max\limits_{1\leq j\leq p}\left(\frac{\tilde{\tau}_j}{{\tau_j}}\right). \label{eqntrc15}
\end{eqnarray}

\noindent
From (\ref{eqntrc13}), (\ref{eqntrc14}), (\ref{eqntrc15}), and the fact 
$$
\tilde{\Delta}_{*}= {{ ({\bm Y}-X\tilde{\bm{\beta}})^T({\bm Y}-X\tilde{\bm{\beta}})+{\tilde{\bm{\beta}}}^TD_{\tilde{\bm{\tau}}}^{-1}
{\tilde{\bm{\beta}}}+2\xi}}\geq {{{\widehat{\bm{\beta}}}_{*}^TD_{\tilde{\bm{\tau}}}^{-1}{{\widehat{\bm{\beta}}}_{*}}+ ({\bm Y}-
X{{\widehat{\bm{\beta}}}_{*}})^T({\bm Y}-X{{\widehat{\bm{\beta}}}_{*}})+2\xi}} 
$$

\noindent
($\widehat{\bm{\beta}}_{*}$ minimizes the L.H.S. as a function of $\tilde{\bm{\beta}}$), it follows that 
\begin{eqnarray}
& & \int_{\mathbb{R}^p}\int_{\mathbb{R}_{+}}\int_{\mathbb{R}_{+}} f\left(\tilde{\sigma}^2 \mid \tilde{\bm{\beta}},\bm{\tau},{\bm Y}
\right) f\left(\tilde{\bm{\beta}} \mid \bm{\tau},{\sigma^{2}},{\bm Y}\right) f\left(\bm{\tau} \mid {\bm{\beta}},\sigma^{2},{\bm Y}\right)
\nonumber\\
& & f\left(\bm{\beta} \mid \tilde{\bm{\tau}},{\sigma^{2}},{\bm Y}\right)
f\left({\sigma^{2}} \mid \tilde{\bm{\beta}},\tilde{\bm{\tau}},{\bm Y}\right)
f\left(\tilde{\bm{\tau}} \mid \tilde{\bm{\beta}},\tilde{\sigma}^2,{\bm Y}\right)d\sigma^2 d\tilde{\sigma}^2 d\bm{\beta} \nonumber\\
&\geq& f_3(\bm{\tau},\tilde{\bm{\tau}}) \left\{ \frac{1}{\left[{\Delta}_{1}+\tilde{\Delta}_{*}+f_2(\bm{\tau},\tilde{\bm{\tau}}) 
\right]^{\frac{p}{2}}}\right\}, \label{eqntrc16}
\end{eqnarray}

\noindent
where 
\begin{eqnarray*}
f_3(\bm{\tau},\tilde{\bm{\tau}}) 
&=& C_4 f_1(\bm{\tau},\tilde{\bm{\tau}})
\left\{\frac{\Gamma\left(\frac{n+2p+2\alpha}{2}\right)\sqrt{\pi}^p\left|X^TX+D_{\tilde{\bm{\tau}}}^{-1}+D_{\bm{\tau}}^{-1} 
\right|^{-\frac{1}{2}}}{\Gamma\left(\frac{n+p+2\alpha}{2}+p\right) \left[1+\max\limits_{1\leq j\leq p}\left(\frac{\tau_j}{\tilde{\tau}_j} 
\right)\right]^{\frac{n+p+2\alpha }{2}}} \right\} \times\\
& & \left[ 2+\max\limits_{1\leq j\leq p}\left(\frac{\tilde{\tau}_j}{{\tau_j}}\right)+\frac{f_2(\bm{\tau},\tilde{\bm{\tau}})}
{{{{\widehat{\bm{\beta}}}_{*}^TD_{\tilde{\bm{\tau}}}^{-1}{{\widehat{\bm{\beta}}}_{*}}+ ({\bm Y}-X{{\widehat{\bm{\beta}}}_{*}})^T 
({\bm Y}-X{{\widehat{\bm{\beta}}}_{*}})+2\xi}}} \right]^{-\frac{n+p+2\alpha}{2}}. 
\end{eqnarray*}

\noindent
Now, note that
\begin{eqnarray*}
& & {{\Delta}_{1}+\tilde{\Delta}_{*}+f_2(\bm{\tau},\tilde{\bm{\tau}})}\\
&=& {(\tilde{\bm{\beta}}-\widehat{\bm{\beta}})^T({\bf A}_{\bm{\tau}})(\tilde{\bm{\beta}}-\widehat{\bm{\beta}})}+{ ({\bm Y}-X
\tilde{\bm{\beta}})^T({\bm Y}-X\tilde{\bm{\beta}})+{\tilde{\bm{\beta}}}^TD_{\tilde{\bm{\tau}}}^{-1}{\tilde{\bm{\beta}}}+2\xi} + 
f_2(\bm{\tau},\tilde{\bm{\tau}})\\
&=& {(\tilde{\bm{\beta}}-\widehat{\bm{\beta}})^T({\bf A}_{\bm{\tau}})(\tilde{\bm{\beta}}-\widehat{\bm{\beta}})}
+{(\tilde{\bm{\beta}}-\widehat{\bm{\beta}}_{*})^T{\bf A}_{\tilde{\bm{\tau}}}(\tilde{\bm{\beta}}-\widehat{\bm{\beta}}_{*})} +\\
& & {({\bm Y}-X{\widehat{\bm{\beta}}_{*}})^T({\bm Y}-X{\widehat{\bm{\beta}}_{*}})+{{\widehat{\bm{\beta}}_{*}}}^T 
D_{\tilde{\bm{\tau}}}^{-1}{{\widehat{\bm{\beta}}_{*}}}}+2\xi+f_2(\bm{\tau},\tilde{\bm{\tau}})\\
&=& {(\tilde{\bm{\beta}}-\widehat{\bm{\beta}}_{***})^T({\bf A}_{\bm{\tau}}+{\bf A}_{\tilde{\bm{\tau}}})(\tilde{\bm{\beta}}-
\widehat{\bm{\beta}}_{***})} +\\
& & {({\widehat{\bm{\beta}}_{***}}-\widehat{\bm{\beta}})^T({\bf A}_{\bm{\tau}})({\widehat{\bm{\beta}}_{***}}-\widehat{\bm{\beta}})}
+ {({\widehat{\bm{\beta}}_{***}}-\widehat{\bm{\beta}}_{*})^T{\bf A}_{\tilde{\bm{\tau}}}({\widehat{\bm{\beta}}_{***}}-
\widehat{\bm{\beta}}_{*})} +\\
& & {({\bm Y}-X{\widehat{\bm{\beta}}_{*}})^T({\bm Y}-X{\widehat{\bm{\beta}}_{*}})+{{\widehat{\bm{\beta}}_{*}}}^T 
D_{\tilde{\bm{\tau}}}^{-1}{{\widehat{\bm{\beta}}_{*}}}}+2\xi+f_2(\bm{\tau},\tilde{\bm{\tau}})\\
&=& {(\tilde{\bm{\beta}}-\widehat{\bm{\beta}}_{***})^T({\bf A}_{\bm{\tau}}+{\bf A}_{\tilde{\bm{\tau}}})(\tilde{\bm{\beta}}-
\widehat{\bm{\beta}}_{***})}+f_4(\bm{\tau},\tilde{\bm{\tau}}), 
\end{eqnarray*}

\noindent
where $\hat{\bm{\beta}}_{***}=({\bf A}_{\bm{\tau}}+{\bf A}_{\tilde{\bm{\tau}}})^{-1} 2 X^T{\bm Y}$, and
\begin{eqnarray*}
f_4(\bm{\tau},\tilde{\bm{\tau}}) 
&=& {({\widehat{\bm{\beta}}_{***}}-\widehat{\bm{\beta}})^T({\bf A}_{\bm{\tau}})({\widehat{\bm{\beta}}_{***}}-\widehat{\bm{\beta}})} 
+ {({\widehat{\bm{\beta}}_{***}}-\widehat{\bm{\beta}}_{*})^T{\bf A}_{\tilde{\bm{\tau}}}({\widehat{\bm{\beta}}_{***}} - 
\widehat{\bm{\beta}}_{*})} +\\
& & {({\bm Y}-X{\widehat{\bm{\beta}}_{*}})^T({\bm Y}-X{\widehat{\bm{\beta}}_{*}})+{{\widehat{\bm{\beta}}_{*}}}^T 
D_{\tilde{\bm{\tau}}}^{-1}{{\widehat{\bm{\beta}}_{*}}}}+2\xi+f_2(\bm{\tau},\tilde{\bm{\tau}}). 
\end{eqnarray*}

\noindent
It follows from (\ref{eqntrc16}) that 
\begin{eqnarray*}
& & \int_{\mathbb{R}_{+}}\int_{\mathbb{R}_{+}}\int_{\mathbb{R}^p}\int_{\mathbb{R}^p} f\left(\tilde{\sigma}^2 \mid 
\tilde{\bm{\beta}},\bm{\tau},{\bm Y}\right) f\left(\tilde{\bm{\beta}} \mid \bm{\tau},{\sigma^{2}},{\bm Y}\right) 
f\left(\bm{\tau} \mid {\bm{\beta}},\sigma^{2},{\bm Y}\right)\\
& & f\left(\bm{\beta} \mid \tilde{\bm{\tau}},{\sigma^{2}},{\bm Y}\right) f\left({\sigma^{2}} \mid \tilde{\bm{\beta}},\tilde{\bm{\tau}}, 
{\bm Y}\right) f\left(\tilde{\bm{\tau}} \mid \tilde{\bm{\beta}},\tilde{\sigma}^2,{\bm Y}\right)d\sigma^2d \tilde{\sigma}^2d\bm{\beta} 
d\tilde{\bm{\beta}}\\
&\geq& f_3(\bm{\tau},\tilde{\bm{\tau}})\int_{\mathbb{R}^p}\left\{\frac{1}{\left[f_4(\bm{\tau},\tilde{\bm{\tau}})+{(\tilde{\bm{\beta}}-
\widehat{\bm{\beta}}_{***})^T~({\bf A}_{\bm{\tau}}+{\bf A}_{\tilde{\bm{\tau}}})~(\tilde{\bm{\beta}}-\widehat{\bm{\beta}}_{***})}
\right]^{\frac{p}{2}}}d\tilde{\bm{\beta}}\right\}\\
&\geq& \frac{f_3(\bm{\tau},\tilde{\bm{\tau}})}{\left[f_4(\bm{\tau},\tilde{\bm{\tau}})\right]^{\frac{p}{2}}} \int_{\mathbb{R}^p} 
\left\{ \frac{1}{\left[1+{(\tilde{\bm{\beta}}-\widehat{\bm{\beta}}_{***})^T~\frac{({\bf A}_{\bm{\tau}}+{\bf A}_{\tilde{\bm{\tau}}})}
{f_4(\bm{\tau},\tilde{\bm{\tau}})}~(\tilde{\bm{\beta}}-\widehat{\bm{\beta}}_{***})}\right]^{\frac{p}{2}}}d\tilde{\bm{\beta}} \right\}\\
&=& \infty 
\end{eqnarray*}

\noindent
for every $\left(\bm{\tau},\tilde{\bm{\tau}}\right) \in {\mathbb{R}_{+}^{p}} \times {\mathbb{R}_{+}^{p}}. $ The last integral is infinite 
based on the fact that the multivariate $t$-distribution with $1$ degree of freedom has an infinite mean
\citep[see][]{kotz2004}.
The first part of the result now follows from (\ref{eqtrace7}). 

We now focus on proving the second part of the result. In the current setting, to prove the trace class property for $k$, we need 
to show
\citep[see][]{jorgens1982}
that 
\begin{equation} \label{eqtrace1}
\iint\limits_{\mathbb{R}^{p} \times \mathbb{R}_{+}} k \left(\left(\bm{\beta},\sigma^{2}\right), 
\left(\bm{\beta},\sigma^{2}\right)\right)\,d{\bm{\beta}}\,d{\sigma^2}< \infty. 
\end{equation}

\noindent
It follows by (\ref{eqtwobayes}) that 
\begin{eqnarray*}
& & \iint\limits_{\mathbb{R}^{p} \times \mathbb{R}_{+}}  k \left(\left(\bm{\beta},\sigma^{2}\right), \left(\bm{\beta},\sigma^{2}\right)\right)\,d{\bm{\beta}}\,d{\sigma^2}\\
&=& \iiint\limits_{\mathbb{R}^p \times \mathbb{R}_{+}^{p}\times \mathbb{R}_{+}}
f\left(\bm{\tau}~|~{\bm{\beta}},\sigma^{2},{\bm Y}\right)f\left(\sigma^{2}~|~\bm{\tau},{\bm Y}\right)f\left(\bm{\beta}~|~\sigma^{2},\bm{\tau},{\bm Y}\right)\,d{\bm{\beta}}\,d{\bm{\tau}}\,d{\sigma^2}\\
&=& C_1\iiint\limits_{\mathbb{R}^p \times \mathbb{R}_{+}^{p}\times \mathbb{R}_{+}}
\left(\prod\limits_{j=1}^p\sqrt{\frac{\lambda^2}{2\pi}}\left(\tau_j\right)^{-\frac{1}{2}} \exp \left(-\frac{\beta_j^2\left(1-\tau_j\sqrt{\frac{\lambda^2\sigma^2}{\beta_j^2}}\right)^2}{2\sigma^2\tau_j} \right) \right)
\left(\sigma^2\right)^{-\frac{n+2\alpha}{2}-1}\\
& & \exp \left( -\frac{1}{2\sigma^2}\left({{\bm Y}}^T\left(I-X{\bf A}_{\bm{\tau}}^{-1}X^T\right){{\bm Y}}+2\xi\right) \right) 
\left({{\bm Y}}^T\left(I-X{\bf A}_{\bm{\tau}}^{-1}X^T\right){{\bm Y}}+2\xi\right)^{\frac{n+2\alpha}{2}}(\sigma^2)^{-\frac{p}{2}}\\ 
& & \times |{\bf A}_{\bm{\tau}}~|~^{\frac{1}{2}} {\times} \exp \left( -\frac{\left(\bm{\beta}-{\bf A}_{\bm{\tau}}^{-1}X^T 
{\bm Y}\right)^T{\bf A}_{\bm{\tau}}\left(\bm{\beta}-{\bf A}_{\bm{\tau}}^{-1}X^T{\bm Y}\right)}{2\sigma^2} \right) \,d{\bm{\beta}}
\,d{\bm{\tau}}\,d{\sigma^2}\\
&\stackrel{(a)}{\leq}& C_2 \iiint\limits_{\mathbb{R}^p \times \mathbb{R}_{+}^{p}\times \mathbb{R}_{+}}
\frac{\exp \left(-\frac{\xi}{\sigma^2} \right)}{\left(\sigma^2\right)^{\frac{n+2\alpha}{2}+1}}\frac{\exp \left(-\frac{1}{2\sigma^2}\sum_{j=1}^{p}\frac{\beta_j^2}{\tau_j}+\frac{\lambda}{\sigma}\sum_{j=1}^{p}~|\beta_j|~-\frac{\lambda^2}{2}\sum_{j=1}^{p}\tau_j \right)}{\prod_{j=1}^{p}\left(\tau_j\right)^{\frac{1}{2}}}\frac{~|~{\bf A}_{\bm{\tau}}~|~^\frac{1}{2}}{\left(\sigma^2\right)^\frac{p}{2}}\\
& & \times~\exp \left(-\frac{1}{2\sigma^2}\left(\bm{\beta}^T{\bf A}_{\bm{\tau}}\bm{\beta}-2\bm{\beta}^TX^T{\bm Y}+{{\bm Y}}^T{\bm Y}\right) \right)
\,d{\bm{\beta}}\,d{\bm{\tau}}\,d{\sigma^2}, 
\end{eqnarray*}

\noindent
where 
$$
C_1=\frac{1}{(2\pi)^{\frac p 2}2^{\frac {n+2\alpha} 2}\Gamma\left(\frac{n+2\alpha}{2}\right)} \hspace{0.2in} \mbox{ and } \hspace{0.2in} C_2 =  {\left({{\bm Y}}^T{{\bm Y}}
+2\xi\right)^{\frac{n+2\alpha}{2}}}\left(\frac{\lambda^2}{2\pi}\right)^{\frac{p}{2}}C_1. 
$$

\noindent
Also, $(a)$ follows from the fact that ${{\bm Y}}^T\left(I-X{\bf A}_{\bm{\tau}}^{-1}X^T\right){{\bm Y}} \leq {{\bm Y}^T}{{\bm Y}}$. For 
convenience, let 
$$
h\left(\sigma^2\right)= \exp \left(-\frac{\xi}{\sigma^2} \right) \left(\sigma^2\right)^{-\frac{n+2\alpha}{2}-1}. 
$$

\noindent 
It follows that 
\begin{eqnarray*}
& & \iint\limits_{\mathbb{R}^{p} \times \mathbb{R}_{+}} k \left(\left(\bm{\beta},\sigma^{2}\right), \left(\bm{\beta},\sigma^{2} 
\right)\right)\,d{\bm{\beta}}\,d{\sigma^2}\\
&\leq& C_2 \iiint\limits_{\mathbb{R}^p \times \mathbb{R}_{+}^{p}\times \mathbb{R}_{+}}
h\left(\sigma^2\right)\frac{\exp \left(-\frac{1}{2\sigma^2}\sum_{j=1}^{p}\frac{\beta_j^2}{\tau_j}+\frac{\lambda}{\sigma}\sum_{j=1}^{p}~|\beta_j|~-\frac{\lambda^2}{2}\sum_{j=1}^{p}\tau_j \right)}{\prod_{j=1}^{p}\left(\tau_j\right)^{\frac{1}{2}}}\frac{~|~{\bf A}_{\bm{\tau}}~|~^\frac{1}{2}}{\left(\sigma^2\right)^\frac{p}{2}}\\
& & \times~\exp \left(-\frac{1}{2\sigma^2}\left(\bm{\beta}^T{\bf A}_{\bm{\tau}}\bm{\beta}-2\bm{\beta}^TX^T{\bm Y}+{{\bm Y}}^T{\bm Y}\right) \right)
\,d{\bm{\beta}}\,d{\bm{\tau}}\,d{\sigma^2}\\
&=& C_2
\iiint\limits_{\mathbb{R}^p \times \mathbb{R}_{+}^{p}\times \mathbb{R}_{+}}
\frac{h\left(\sigma^2\right)}{\prod_{j=1}^{p}\left(\tau_j\right)^{\frac{1}{2}}}\exp \left(-\frac{1}{2\sigma^2}\sum_{j=1}^{p}\frac{\beta_j^2}{\tau_j}+\frac{\lambda}{\sigma}\sum_{j=1}^{p}~|\beta_j|~-\frac{\lambda^2}{2}\sum_{j=1}^{p}\tau_j \right)
\frac{~|~{\bf A}_{\bm{\tau}}~|~^\frac{1}{2}}{\left(\sigma^2\right)^\frac{p}{2}}\\
& & \times~\exp \left(-\frac{\bm{\beta}^TD_{\bm{\tau}}^{-1}\bm{\beta}}{4\sigma^2} \right) \exp \left(-\frac{\bm{\beta}^T 
D_{\bm{\tau}}^{-1}\bm{\beta}}{4\sigma^2} \right) \exp \left(-\frac{\left({{\bm Y}}-X\bm{\beta}\right)^T\left({{\bm Y}}-X\bm{\beta}\right)}{2\sigma^2} \right)\,d{\bm{\beta}}\,d{\bm{\tau}}\,d{\sigma^2}\\
&\leq& C_2
\iiint\limits_{\mathbb{R}^p \times \mathbb{R}_{+}^{p}\times \mathbb{R}_{+}}
\frac{h\left(\sigma^2\right)}{\prod_{j=1}^{p}\left(\tau_j\right)^{\frac{1}{2}}} \exp \left(-\left(\frac{1}{2}+\frac{1}{4}\right){\frac{1}{\sigma^2}\sum_{j=1}^{p}\frac{\beta_j^2}{\tau_j}}+\frac{\lambda}{\sigma}\sum_{j=1}^{p}~|\beta_j|~-{\lambda^2} \left( \frac{1}{3} + \frac{1}{6} \right) 
\sum_{j=1}^{p}\tau_j \right)\\
& & \times \frac{~|~{\bf A}_{\bm{\tau}}~|~^\frac{1}{2}}{\left(\sigma^2\right)^\frac{p}{2}} \exp \left(-\frac{\bm{\beta}^T 
D_{\bm{\tau}}^{-1}\bm{\beta}}{4\sigma^2} \right) \,d{\bm{\beta}}\,d{\bm{\tau}}\,d{\sigma^2}\\
&=& C_2
\iiint\limits_{\mathbb{R}_{+} \times \mathbb{R}_{+}^{p} \times \mathbb{R}^p}
\frac{h\left(\sigma^2\right)}{\prod_{j=1}^{p}\left(\tau_j\right)^{\frac{1}{2}}} \exp \left(-\left({{\frac{3}{4\sigma^2}\sum_{j=1}^{p}\frac{\beta_j^2}{\tau_j}}}-\frac{\lambda}{\sigma}\sum_{j=1}^{p}~|\beta_j|~+\frac{\lambda^2}{3}\sum_{j=1}^{p}\tau_j\right) 
\right) \frac{~|~{\bf A}_{\bm{\tau}}~|~^\frac{1}{2}}{\left(\sigma^2\right)^\frac{p}{2}}\\
& & \times \exp \left( -\frac{\lambda^2}{6}\sum_{j=1}^{p}\tau_j \right) \exp \left(-\frac{\bm{\beta}^TD_{\bm{\tau}}^{-1}\bm{\beta}}
{4\sigma^2} \right) \,d{\bm{\beta}} \,d{\bm{\tau}}\,d{\sigma^2}. 
\end{eqnarray*}

\noindent
Note that 
$$
\exp \left(-\left({{\frac{3}{4\sigma^2}\sum_{j=1}^{p}\frac{\beta_j^2}{\tau_j}}}-\frac{\lambda}{\sigma}\sum_{j=1}^{p}~|\beta_j|~ +
\frac{\lambda^2}{3}\sum_{j=1}^{p}\tau_j\right) \right) = \prod_{j=1}^{p}\exp \left(-\left(\frac{\sqrt{3\bm{\beta}_j^2}}{2\sigma\sqrt{{\tau_j}}} - 
\frac{\sqrt{\lambda^2\tau_j}}{\sqrt{3}}\right)^2 \right) \leq 1, 
$$

\noindent
and 
$$
\int_{\mathbb{R}^p} \exp \left(-\frac{\bm{\beta}^TD_{\bm{\tau}}^{-1}\bm{\beta}}{4\sigma^2} \right) \,d{\bm{\beta}} = 
\frac{\left(4\pi\right)^{p/2}\left( \sigma^2 \right)^{p/2}}{\left| D_{\bm{\tau}}^{-1} \right|^{1/2}}. 
$$

\noindent
It follows that 
\begin{eqnarray*}
& & \iint\limits_{\mathbb{R}^p \times \mathbb{R}_{+}} k \left( \left(\bm{\beta},\sigma^{2}\right), \left(\bm{\beta},\sigma^{2}\right) 
\right) \, d{\bm{\beta}} \, d{\sigma^2}\\
&\leq& \left(4\pi\right)^{\frac p 2}C_2 \iint\limits_{\mathbb{R}_{+} \times \mathbb{R}_{+}^{p}}
\frac{h\left(\sigma^2\right)}{\prod_{j=1}^{p}\left(\tau_j\right)^{\frac{1}{2}}} \exp \left(-\frac{\lambda^2}{6}\sum_{j=1}^{p}\tau_j 
\right) \frac{~|~{\bf A}_{\bm{\tau}}~|~^\frac{1}{2}}{\left(\sigma^2\right)^\frac{p}{2}}\frac{\left(\sigma^2\right)^\frac{p}{2}}{~|~D_{\bm{\tau}}^{-1}~|~^\frac{1}{2}}\,d{\bm{\tau}}\,d{\sigma^2} \\
&\leq& \left(4\pi\right)^{\frac p 2}C_2 \left(\int_{\mathbb{R}_{+}}h\left(\sigma^2\right)\,d{\sigma^2}\right)
\left(\int_{\mathbb{R}_{+}^{p}}\frac{~|~{\bf A}_{\bm{\tau}}~|~^\frac{1}{2}}{~|~D_{\bm{\tau}}^{-1}~|~^\frac{1}{2}}\frac{1}{\prod_{j=1}^p \left(\tau_j\right)^{\frac{1}{2}}} \exp \left(-\frac{\lambda^2}{6}\sum_{j=1}^{p}\tau_j \right) \,d{\bm{\tau}} \right)\\
&=& \left(4\pi\right)^{\frac p 2}C_2 ~\Gamma\left(\frac{n+2\alpha}{2}\right) \xi^{-\frac{n+2\alpha}{2}} 
\int_{\mathbb{R}_{+}^{p}} |{\bf A}_{\bm{\tau}}|^\frac{1}{2} \exp \left(-\frac{\lambda^2}{6}\sum_{j=1}^{p}\tau_j \right)\,d{\bm{\tau}}
\end{eqnarray*}

\noindent
Hence, to prove the required result, it is enough to show that 
\begin{equation} \label{eqtrace2}
\int_{\mathbb{R}_{+}^{p}} |~{\bf A}_{\bm{\tau}}|~^\frac{1}{2} \exp \left(-\frac{\lambda^2}{6}\sum_{j=1}^{p}\tau_j \right) \,d{\bm{\tau}} 
< \infty. 
\end{equation}

\noindent
Let $c^2$ denote the maximum eigenvalue of $X^T X$. Then, by
the results of \citet{fiedler1971},
it follows that 
\begin{eqnarray}
|{\bf A}_{\bm{\tau}}|^{1/2} 
&\leq& \prod_{j=1}^{p} \sqrt{\left(c^2+\frac{1}{\tau_j}\right)} \leq \prod_{j=1}^p \left( \sqrt{c^2} + \frac{1}{\sqrt{\tau_j}} \right) \nonumber\\
&=& c^p+\left(\frac{1}{\sqrt{\tau_1}}+\frac{1}{\sqrt{\tau_2}}+\cdots+\frac{1}{\sqrt{\tau_p}}\right)c^{p-1}+\left(\frac{1}
{\sqrt{\tau_1\tau_2}}+\cdots+\frac{1}{{\tau_i \tau_j}}+\cdots\right)c^{p-2}+\nonumber\\
& & \cdots+\frac{1}{\sqrt{\tau_1\tau_2...\tau_p}}. \label{eqtrace3}
\end{eqnarray}

\noindent
It follows from (\ref{eqtrace3}) that 
\begin{eqnarray}
& & \int_{\mathbb{R}_{+}^{p}} |{\bf A}_{\bm{\tau}}|^\frac{1}{2} \exp \left(-\frac{\lambda^2}{6}\sum_{j=1}^{p}\tau_j 
\right) \, d{\bm{\tau}} \nonumber\\
&\leq& c^p \int_ {\mathbb{R}_{+}^{p}} \exp \left( -\frac{\lambda^2}{6}\sum_{j=1}^{p}\tau_j \right) \,d{\bm{\tau}}+c^{p-1}\int_ 
{\mathbb{R}_{+}^{p}}\left(\frac{1}{\sqrt{\tau_1}}+\frac{1}{\sqrt{\tau_2}}+\cdots+\frac{1}{\sqrt{\tau_p}}\right) \exp 
\left(-\frac{\lambda^2}{6}\sum_{j=1}^{p} \tau_j \right)\,d{\bm{\tau}}+ \nonumber\\
& & c^{p-2} \int_{\mathbb{R}_{+}^{p}}\left(\frac{1}{\sqrt{\tau_1\tau_2}}+\cdots+\frac{1}{\sqrt{\tau_i\tau_j}}+\cdots\right) \exp \left(-
\frac{\lambda^2}{6}\sum_{j=1}^{p}\tau_j \right)\,d{\bm{\tau}}+\nonumber\\
& & \cdots+\int_{\mathbb{R}_{+}^{p}}\frac{1}{\sqrt{\tau_1\tau_2...\tau_p}} \exp \left(-\frac{\lambda^2}{6}\sum_{j=1}^{p}\tau_j \right) 
\, d{\bm{\tau}}. \label{eqtrace4}
\end{eqnarray}

\noindent
Now, note that for any $m \in \{1,2,\cdots,p\}$, 
\begin{eqnarray}
& & \int_{\mathbb{R}^p_+} \frac{1}{\sqrt{\tau_{i_1}} \sqrt{\tau_{i_2}} \cdots \sqrt{\tau_{i_m}}} \exp \left(-\frac{\lambda^2}{6}
\sum_{j=1}^{p}\tau_j \right)\,d{\bm{\tau}} \nonumber\\
&=& \left( \frac{6}{\lambda^2} \right)^{p-m} \left(\int_{\mathbb{R}_{+}} \frac{1}{\sqrt{\tau_{i_1}}} e^{-\frac{\lambda^2}{6} 
\tau_{i_1}}\,d{{\tau_{i_1}}}\right)\times\cdots\times \left(\int_{\mathbb{R}_{+}}\frac{1}{\sqrt{\tau_{i_m}}} e^{-\frac{\lambda^2}
{6}\tau_{i_m}}\,d{{\tau_{i_m}}}\right) < \infty. 
\label{eqtrace5}
\end{eqnarray}

\noindent
The result now follows by (\ref{eqtrace2}), (\ref{eqtrace4}) and (\ref{eqtrace5}). 
\end{proof}

\section{Additional Numerical Results}
\label{sec:more-numerical}

Figure~\ref{fig:rs} is similar to
the left-hand side of Figure~\ref{fig:dacf}
under settings of high multicollinearity (left), low sparsity (center), and both high multicollinearity and low sparsity (right).
(See the following section for details.)
It is clear that the behavior seen in Figure~\ref{fig:ac-p} is also seen in other settings of multicollinearity and sparsity.

\begin{figure}[htbp]
\centering
\includegraphics[scale=0.38]{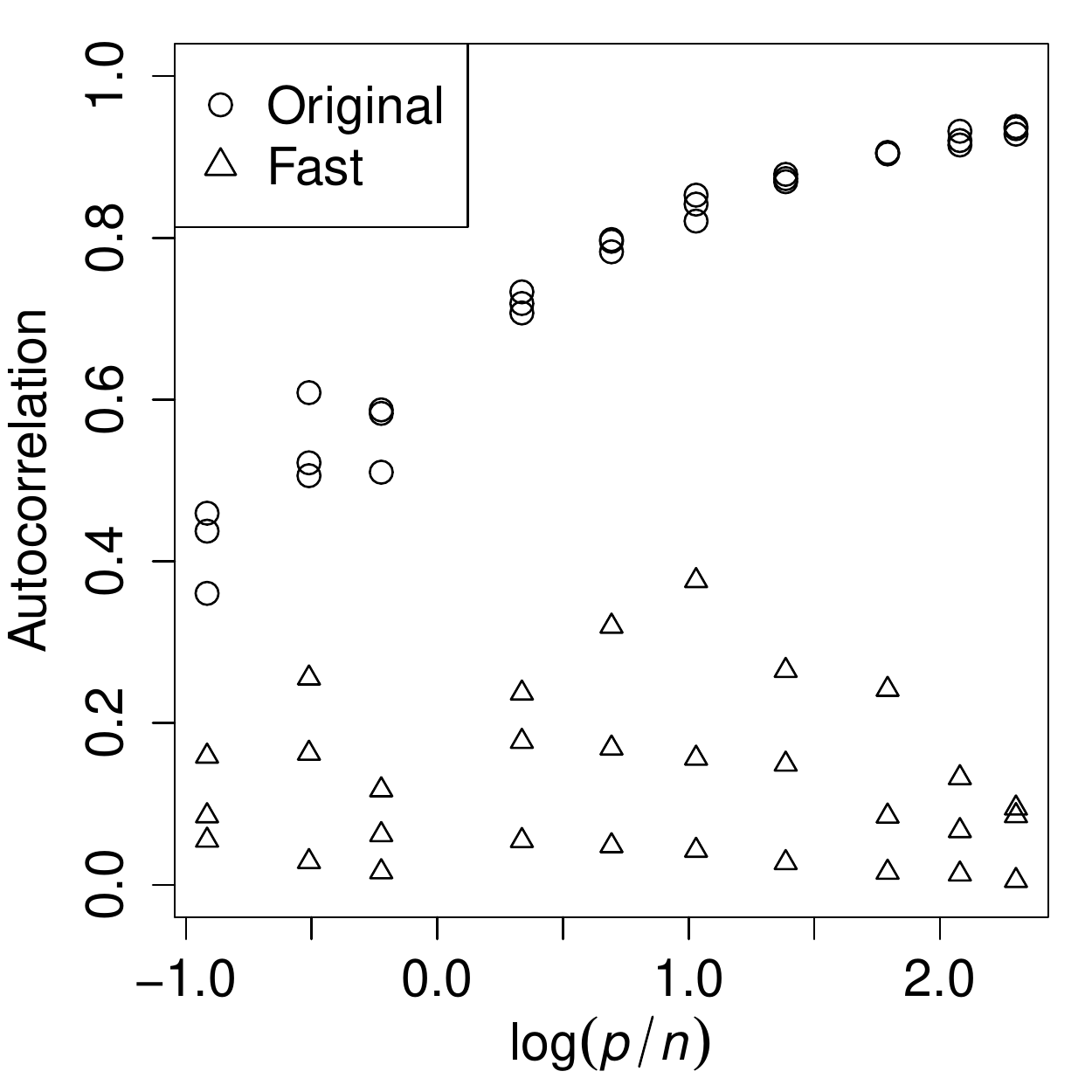}
\includegraphics[scale=0.38]{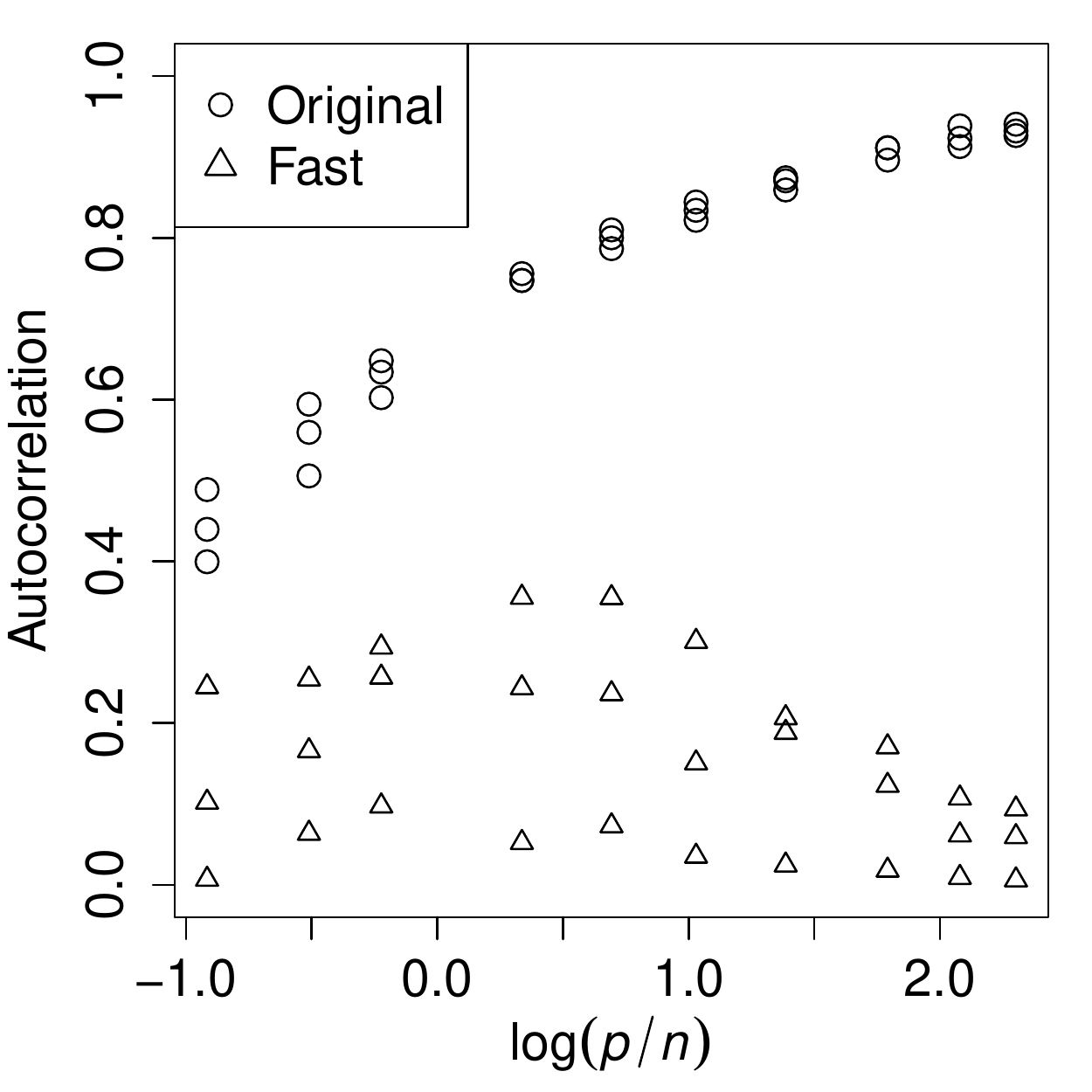}
\includegraphics[scale=0.38]{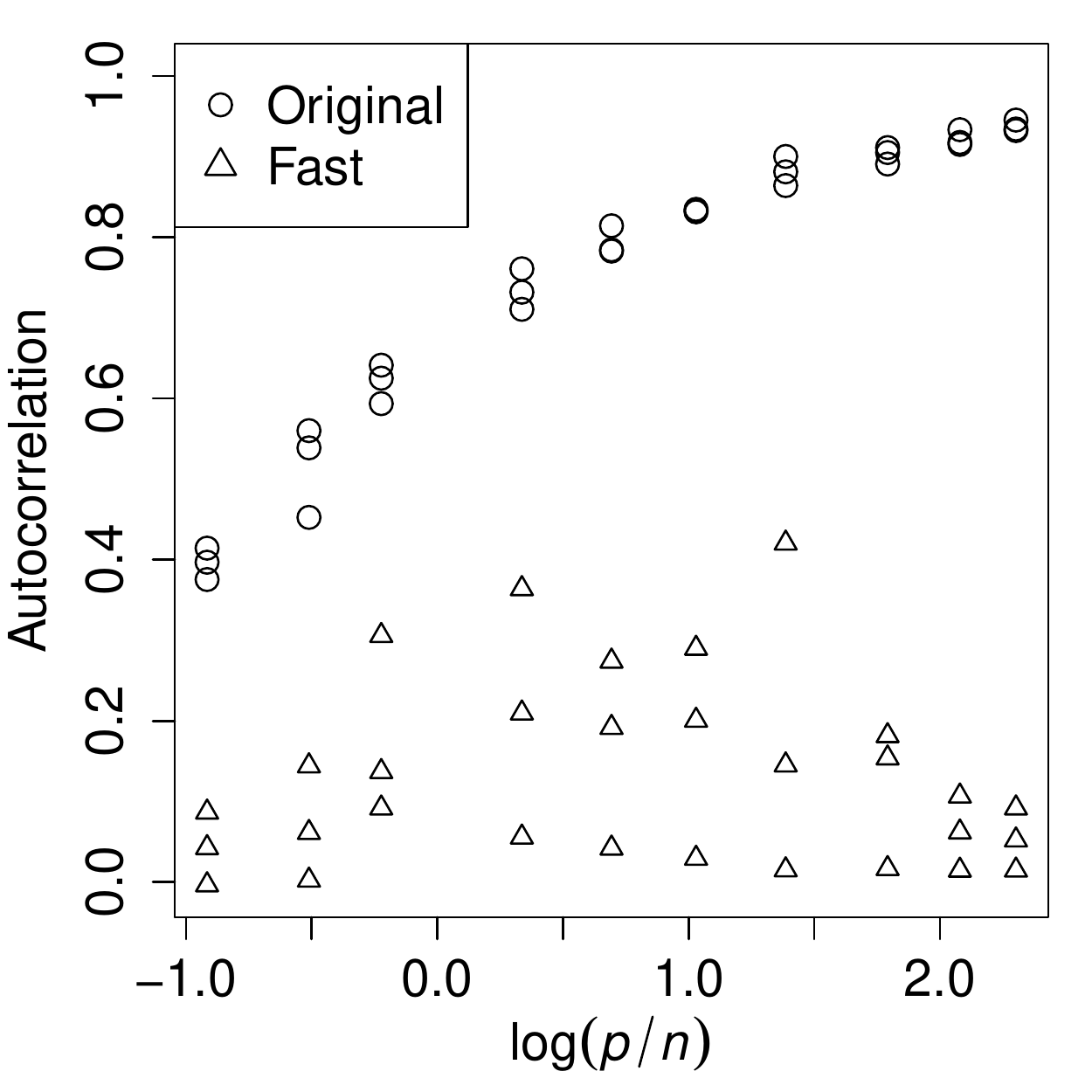}
\vspace*{-1em}
\caption{%
Autocorrelation of the $\sigma^2_k$ chain versus $p/n$ for various values of~$n$ for the original and two-step Bayesian lasso Gibbs samplers in settings of high multicollinearity (left), low sparsity (center), and both high multicollinearity and low sparsity (right).
See the
following section
for details of the generation of the
numerical quantities used
in the execution of these chains.
}
\label{fig:rs}
\end{figure}

\section{%
Details of Numerical Results
}
\label{sec:numerical-details}

Each plotted point
in Figures~\ref{fig:ac-p}~and~\ref{fig:dacf}
represents the average lag-one autocorrelation over
10 Gibbs sampling runs of 10,000 iterations each
(after discarding an initial ``burn-in'' period of 1,000 iterations).
For each of the 10~runs at each $n$ and $p$ setting, the $np$ elements of the $n\times p$ covariate matrix~$\bm X$ were
first drawn as $N(0,1)$ random variables with all pairwise correlations equal to $1/5$.
The columns of~$\bm X$ were then standardized to have mean zero and squared Euclidean norm~$n$.
Also, for each run, the $n\times1$ response vector~$\bm Y$ was generated as $\bm Y=\bm X\bm\beta_\star+\bm\epsilon$,
where $\bm\beta_\star$ is a $p\times1$ vector with its first
$\lceil p/5\rceil$
elements drawn
as independent $t_2$ random variables
and its remaining
$p-\lceil p/5\rceil$
elements set to zero, and where $\bm\epsilon$ is an $n\times1$ vector of independent $t_4$ random
variables.
The initial values were set as $\bm\beta_0=\bm1_p$ and $\sigma^2_0=1$.
The regularization parameter~$\lambda$ was set to $\lambda=1$.  
In Figure~\ref{fig:ac-p}, for each~$n$, the values of $p$ were $2n$, $4n$, $6n$, $8n$, and $10n$.  The left side of Figure~\ref{fig:dacf} includes the same results as in Figure~\ref{fig:ac-p} but also includes $p$ equal to $2n/5$, $3n/5$, $4n/5$, $7n/5$, and $14n/5$.  The DACF surface plots in the center and right side of Figure~\ref{fig:dacf} used each combination of $n,p\in\{5,15,25,35,45\}$.

For the examples in Section~\ref{sec:applications}, the starting points of all chains were set as $\bm\beta_0=\bm1_p$ and $\sigma^2_0=1$.
For the crime data in Subsection~\ref{subsec:crime},
the $p=1325$ covariates were the 50 first-order terms, 50 quadratic terms, and 1225 multiplicative interaction terms involving 50 first-order covariates, which themselves were chosen as follows.  First, all potential covariates with missing data were excluded, which left 99 potential covariates.  Next, any remaining potential covariates for which over half of the values were equal to any single value were excluded, which left 94 potential covariates.  From the remaining potential covariates, we selected the 50 with the largest absolute correlation with the response vector.
These 50 chosen covariates were then standardized to have mean zero before the second-order terms were computed.
From the 1994 total observations,
$n=10$
were chosen at random.  The final columns of the design matrix (now with
$n=10$
rows) were then standardized again to have mean zero and squared Euclidean norm $n$.

For the numerical results of the spike-and-slab sampler in Subsection~\ref{subsec:numerical-ss}, 
all settings were identical to those used for the corresponding results for the Bayesian lasso, with the obvious exception of the hyperparameters in the prior.
The hyperparameters of the spike-and-slab prior were set as $w_j=1/2$ and $\kappa_j=100$ for each $j\in\{1,\ldots,p\}$.  For the gene expression data, we set
$\zeta_j=0.00002$
for each $j\in\{1,\ldots,p\}$.
For the simulated data, we set $\zeta_j=0.01$ for each $j\in\{1,\ldots,p\}$.

The plots in Figure~\ref{fig:rs} were constructed similarly to the left-hand side of Figure~\ref{fig:dacf}, but
with
the following additional modifications.  For the left-hand side and the right-hand side (but not the center), the pairwise correlation between each element of the design matrix was~$4/5$ (rather than~$1/5$).
For the center and right-hand side (but not the left-hand side), the number of nonzero coefficients for the generation of the response vector was taken to be~$\lceil 4p/5\rceil$ (rather than~$\lceil p/5\rceil$).

The R function \texttt{set.seed} was used to initialize the random number generation.  Different seeds were used for each data set and each collection of simulations to facilitate reproducibility of individual figures and results.  The seeds themselves were taken as consecutive three-digit blocks of the decimal expansion of $\pi$ (i.e., 141, 592, 653, etc.) to make absolutely clear that they were not manipulated to gain any advantage for the proposed methodology.
The reader can of course verify directly that the results are qualitatively similar for other seeds.

\section{R Code}
\label{sec:code}

\nocite{R}
The R code for generating the numerical results of Sections~\ref{sec:numerical},~\ref{sec:applications},~and~\ref{sec:more-numerical} is provided in its entirety in an accompanying file.
This code incorporates functions or data from the R packages \texttt{coda} 
\citep{Rcoda}, \texttt{flare} \citep{Rflare}, \texttt{lars} \citep{Rlars}, 
\texttt{ppls} \citep{Rppls}, and \texttt{statmod} \citep{Rstatmod}.
We also reproduce below the portion of the R code that executes the original and two-step Bayesian lasso algorithms themselves.
\footnotesize
\begin{verbatim}
# Slightly modified version of rinvgauss function from statmod package
rinvgauss. <- function(n,mean=1,shape=NULL,dispersion=1){
	if(!is.null(shape)){dispersion <- 1/shape}
	mu <- rep_len(mean,n)
	phi <- rep_len(dispersion,n)
	r <- rep_len(0,n)
	i <- (mu>0 & phi>0)
	if(!all(i)){
		r[!i] <- NA
		n <- sum(i)
	}
	phi[i] <- phi[i]*mu[i]
	Y <- rchisq(n,df=1)
	X1 <- 1+phi[i]/2*(Y-sqrt(4*Y/phi[i]+Y^2))
	# Note: The line above should yield all X1>0, but it occasionally doesn't due to
	#		numerical precision issues.  The line below detects this and recomputes
	#		the relevant elements of X1 using a 2nd-order Taylor expansion of the 
	#		sqrt function, which is a good approximation whenever the problem occurs.
	if(any(X1<=0)){X1[X1<=0] <- (1/(Y*phi[i]))[X1<=0]}
	firstroot <- as.logical(rbinom(n,size=1L,prob=1/(1+X1)))
	r[i][firstroot] <- X1[firstroot]
	r[i][!firstroot] <- 1/X1[!firstroot]
	mu*r
}

# Draw from inverse-Gaussian distribution while avoiding potential numerical problems
rinvgaussian <- function(n,m,l){
	m. <- m/sqrt(m*l)
	l. <- l/sqrt(m*l)
	sqrt(m*l)*rinvgauss.(n,m.,l.)
}

# Gibbs iteration functions for both Bayesian lassos
# Note: The versions have separate functions, as opposed to being different
#       options of the same function, since the latter would require checking any such
#       options every time the function is called, i.e., in every MCMC iteration.
# Note: The values of XTY, n, and p can obviously be calculated from X and Y, but they
#       are included as inputs to avoid recalculating them every time the function is
#       called, i.e., in every MCMC iteration.
iter.bl.original <- function(beta,sigma2,X,Y,XTY,n,p,lambda){
	d.tau.inv <- rinvgaussian(p,sqrt(lambda^2*sigma2/beta^2),lambda^2)
	A.chol <- chol(t(X)%*%X+diag(d.tau.inv))
	beta.tilde <- backsolve(A.chol,backsolve(t(A.chol),XTY,upper.tri=F))
	Z <- rnorm(p)
	beta.new <- beta.tilde+sqrt(sigma2)*backsolve(A.chol,Z)
	sigma2.new <- (sum((Y-drop(X%*%beta.new))^2)+
				   sum(beta.new^2*d.tau.inv))/rchisq(1,n+p-1)
	return(list(beta=beta.new,sigma2=sigma2.new))
}
iter.bl.fast <- function(beta,sigma2,X,Y,XTY,n,p,lambda){
	d.tau.inv <- rinvgaussian(p,sqrt(lambda^2*sigma2/beta^2),lambda^2)
	A.chol <- chol(t(X)%*%X+diag(d.tau.inv))
	beta.tilde <- backsolve(A.chol,backsolve(t(A.chol),XTY,upper.tri=F))
	sigma2.new <- (sum(Y^2)-sum(XTY*beta.tilde))/rchisq(1,n-1)
	Z <- rnorm(p)
	beta.new <- beta.tilde+sqrt(sigma2.new)*backsolve(A.chol,Z)
	return(list(beta=beta.new,sigma2=sigma2.new))
} 

# Run original and two-step Bayesian lassos
run.bl <- function(X,Y,lambda,K,M,outfile.stem,fast=F,keep.beta=T,write.each=F){
	XTY <- drop(t(X)%*%Y)
	n <- dim(X)[1]
	p <- dim(X)[2]
	iter.bl <- get(paste("iter.bl.",ifelse(fast,"fast","original"),sep=""))
	chaindigits <- max(1,ceiling(log(M,10)))  # digits needed for chain label strings
	for(chain in 0:(M-1)){
		beta <- rep(1,p)
		sigma2 <- 1
		chaintext <- substring(format(chain/(10^chaindigits),nsmall=chaindigits),3)
		outfile.beta <- paste(outfile.stem,"-",chaintext,"-b.txt",sep="")
		outfile.sigma2 <- paste(outfile.stem,"-",chaintext,"-s.txt",sep="")
		if(write.each){
			for(k in 1:K){
				iter.result <- iter.bl(beta,sigma2,X,Y,XTY,n,p,lambda)
				beta <- iter.result$beta
				sigma2 <- iter.result$sigma2
				if(keep.beta){
					beta.row <- matrix(beta,nrow=1)
					write.table(beta.row,outfile.beta,append=T,row.names=F,col.names=F)
				}
				write.table(sigma2,outfile.sigma2,append=T,row.names=F,col.names=F)
			}
		}else{
			beta.chain <- matrix(NA,nrow=K,ncol=p)
			sigma2.chain <- rep(NA,K)
			for(k in 1:K){
				iter.result <- iter.bl(beta,sigma2,X,Y,XTY,n,p,lambda)
				beta <- iter.result$beta
				sigma2 <- iter.result$sigma2
				beta.chain[k,] <- beta
				sigma2.chain[k] <- sigma2
			}
			if(keep.beta){
				write.table(beta.chain,outfile.beta,row.names=F,col.names=F)
			}
			write.table(sigma2.chain,outfile.sigma2,row.names=F,col.names=F)
		}
		typetext <- ifelse(fast,"Fast","Orig")
		print(paste(typetext,"chain",chain+1,"of",M,"complete at",date()))
		flush.console()
	}
}
\end{verbatim}
\normalsize

\bibliographystyle{ims}
\bibliography{references}

\end{document}